\providecommand{\U}[1]{\protect\rule{.1in}{.1in}}
\newtheorem{theorem}{Theorem}
\newtheorem{definition}[theorem]{Definition}
\newtheorem{lemma}[theorem]{Lemma}
\newtheorem{proposition}[theorem]{Proposition}
\newtheorem{remark}[theorem]{Remark}
\def\thefigure{\thesection.\@arabic\c@figure}
\def\fps@figure{h, t}
\def\thetable{\thesection.\@arabic\c@table}
\def\fps@table{h, t}
\begin{document}

\title{Solution properties of a 3D stochastic Euler fluid equation
}
\author{Dan Crisan, Franco Flandoli, Darryl D. Holm}

\address{DC, DDH: Department of Mathematics, Imperial College, London SW7 2AZ, UK.}
\address{FF: Department  of Applied  Mathematics, University  of Pisa, 
Via Bonanno 25B, 56126 Pisa, Italy.}

\date{3 October 2018}

\begin{abstract}
We prove local well-posedness in regular spaces and a Beale-Kato-Majda blow-up criterion for a recently derived stochastic model of the 3D Euler fluid equation for incompressible flow. This model describes incompressible fluid motions whose Lagrangian particle paths follow a stochastic process with cylindrical noise and also satisfy Newton's 2nd Law in every Lagrangian domain. 


\end{abstract}. 

\maketitle


\newpage

\section{Introduction}

The present paper shows that two important analytical properties of deterministic Euler fluid dynamics in three dimensions possess close counterparts in the stochastic Euler fluid model introduced in \cite{Holm2015}. 
The first of these analytical properties is the local-in-time existence and uniqueness of deterministic Euler fluid flows. The second property is a criterion for blow-up in finite time due to Beale, Kato and Majda \cite{BKM1984}. For a historical review of these two fundamental analytical properties for deterministic Euler fluid dynamics, see, e.g., \cite{Gi2008}. 
We believe this fidelity of the stochastic model of \cite{Holm2015} investigated here 
with the analytical properties of the deterministic case bodes well for
the potential use of this model in, e.g., uncertainty quantification of either observed, or numerically simulated fluid flows.
The need and inspiration for such a model can be illustrated, for example, by examining 
data from satellite observations collected in the National Oceanic and Atmospheric
Administration (NOAA) ``Global Drifter Program'', a compilation of which is
shown in Figure \ref{fig:globaldrift}.

\begin{figure}[h]
\centering
{\includegraphics[width=1.0\textwidth]{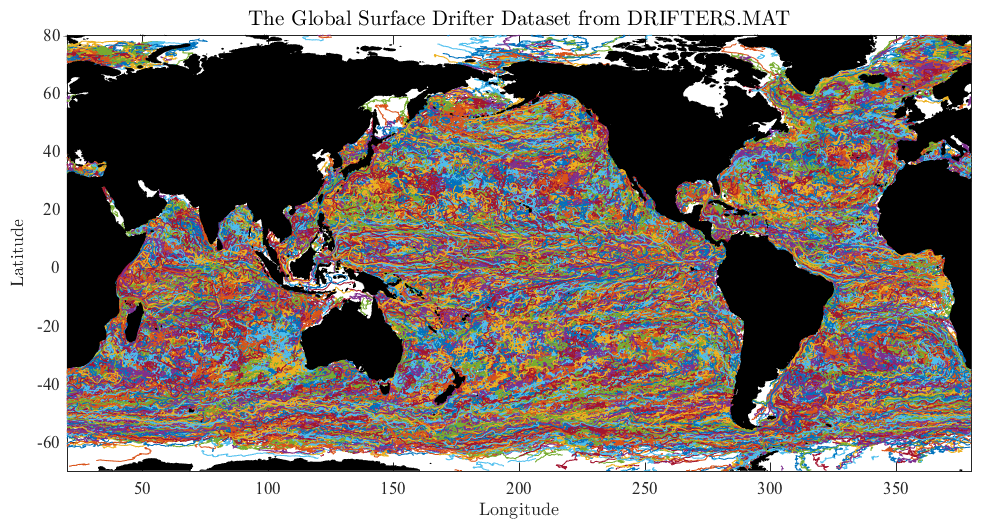} \label{fig:globaldrift}%
}\caption{This figure shows latitude and longitude of Lagrangian trajectories of drifters 
on the ocean surface driven by the wind and ocean currents, as compiled from satellite observations 
by the National Oceanic and Atmospheric Administration Global Drifter Program. Each colour
corresponds to a different drifter, see \cite{Lilly2017}. Upon looking carefully at 
the individual Lagrangian paths in this figure, one sees that each of them evolves as a mean drift flow,
composed with an erratic flow comprising rapid fluctuations around the mean. }%
\end{figure}

Figure \ref{fig:globaldrift} (courtesy of \cite{Lilly2017}) displays the global array of surface
drifter displacement trajectories from the National Oceanic and Atmospheric
Administration's ``Global Drifter Program''
(\url{www.aoml.noaa.gov/phod/dac}). In total, more than 10,000 drifters have
been deployed since 1979, representing nearly 30 million data points of
positions along the Lagrangian paths of the drifters at six-hour intervals.
This large spatiotemporal data set is a major source of information regarding
ocean circulation, which in turn is an important component of the global
climate system. For a recent discussion, see for example \cite{Sy-etal2016}.
This data set of spatiotemporal observations from satellites of the spatial
paths of objects drifting near the surface of the ocean provides inspiration
for further development of data-driven stochastic models of fluid dynamics of
the type discussed in the present paper. 

Inspired by this drifter data, the present paper investigates the existence, uniqueness and singularity properties of a recently derived stochastic model of the Euler fluid equations for
incompressible flow \cite{Holm2015} that is consistent with this data. For this purpose, we combine methods from
geometric mechanics, functional analysis and stochastic analysis. In the model
under investigation, one assumes that
the Lagrangian particle paths in the fluid motion $x_{t} = \eta_{t}(X)$ with
initial position $X\in\mathbb{R}^{3}$ each follow a Stratonovich stochastic
process given by
\begin{align}
\mathsf{d}\eta_{t}(X) = u(\eta_{t}(X),t)dt + \sum_{i} \xi_{i}(\eta_{t}(X))
\circ dB^{i}_{t} \,. \label{StochProc-intro}%
\end{align}
%
%
{{}This approach immediately introduces the issue of spatial correlations. 

In particular,} an important feature of the data in Figure \ref{fig:globaldrift} is that the ocean currents show up as persistent spatial correlations, easily recognised visually as spatial regions in which the colours representing individual paths tend to concentrate. To capture this feature, we transform the Lagrangian trajectory description \eqref{StochProc-intro} into the spatial representation of the Eulerian transport velocity given by the Stratonovich stochastic vector field,
\begin{align}
\mathsf{dy}_{t}(x) = u(x,t)dt + \sum_{i} \xi_{i}(x) \circ dB^{i}_{t} =
\mathsf{d}\eta_{t}\, \eta_{t}^{-1}(x) \,. \label{StochVF-intro}%
\end{align}
In equations \eqref{StochProc-intro} and \eqref{StochVF-intro}, the $B_t^{i}$ with $i\in\mathbb{N}$ are scalar independent Brownian motions, and the $\xi_{i}(x)$ represent the spatial correlations which may be obtained as eigenvectors 
of the two-point velocity-velocity correlation matrix $C_{ij}(x,y)$, $i,j=1,2,\dots,N$, as an integral operator. Namely, 
\begin{align}
\sum_j \int C_{ij}(x,y) \xi_{j}(y)dy=\lambda \xi_{i}(x)\,. 
  \label{xi-eigenvectors-intro}%
\end{align}
These correlation eigenvectors exhibit a spectrum of spatial scales for the trajectories of
the drifters, indicating the variety of spatiotemporal scales in the evolution
of the ocean currents which transport the drifters. This feature of the data
is worthy of further study. In what follows, we will assume that the velocity correlation 
eigenvectors $\xi_i(x)$ with $i=1,\dots,N$ have been determined by reliable data assimilation procedures, so we may take them to be prescribed, divergence-free, three-dimensional vector functions. 
For explicit examples of the process of determining the $\xi_{i}(x)$ eigenvectors at coarse resolution 
from finely resolved numerical simulations, see \cite{CoCrHoShWe2018a,CoCrHoShWe2018b}. 
{{}For an extension of this method to include non-stationary correlation statistics, see \cite{GBHo2017}.\\

A rigorous analysis of the stochastic process $\eta_{t}$ in \eqref{StochProc-intro} is under way by the authors. Following from classical results (e.g., \cite{Kunita1,Kunita2}), we show in a forthcoming paper that $\eta_{t}$ is a temporally stochastic curve
on the manifold of smooth invertible maps with smooth inverses (i.e., diffeomorphisms).} Thus, although the time dependence of $\eta_{t}$ in \eqref{StochProc-intro} is not differentiable, its spatial
dependence is smooth. The stochastic process $\mathsf{d}\eta_{t}(X)$ in
\eqref{StochProc-intro} is also the pullback by the diffeomorphism  $\eta_{t}$  of the
stochastic vector field $\mathsf{dy}_{t}(x)$ in \eqref{StochVF-intro}. That
is, $\eta^{*}_{t}\mathsf{dy}_{t}(x) = \mathsf{d}\eta_{t}(X)$, see e.g. \cite{Holm2015} for details. 
Conversely, the stochastic vector field in \eqref{StochVF-intro} is the Eulerian
representation in fixed spatial coordinates $x$ of the stochastic process in
\eqref{StochProc-intro} for the Lagrangian fluid parcel paths, labelled by
their Lagrangian coordinates $X$.

The expression for the Lagrangian trajectories in equation \eqref{StochProc-intro} is clearly in accord with the observed behaviour of the Lagrangian trajectories displayed in Figure \ref{fig:globaldrift}. Moreover, the expression \eqref{StochVF-intro} for the corresponding Eulerian transport velocity has been derived recently in \cite{CoGoHo2017}  by using multi-time homogenisation methods for Lagrangian trajectories corresponding to solutions of the deterministic Euler equations, in the asymptotic limit of time-scale separation between the mean and fluctuating flow. In particular, the fluctuating dynamics in the second term in \eqref{StochVF-intro} has been shown in \cite{CoGoHo2017} to affect the mean flow. Thus, beyond being potentially useful as a means of uncertainty quantification, the decomposition in \eqref{StochVF-intro} represents a bona fide decomposition of the Eulerian fluid velocity into mean plus fluctuating components.

{{} The approach of incorporating uncertainties in incompressible fluid motion via stochastic Lagrangian fluid trajectories as  in equation \eqref{StochProc-intro} has several precedents, including for example, \cite{Brez1991,MR2004,Me2014}. However, the Eulerian fluid representation in \eqref{StochVF-intro} will lead us next to a stochastic partial differential equation (SPDE)  for the Eulerian drift velocity $u$ driven by \emph{cylindrical noise} represented by the Stratonovich term in \eqref{StochVF-intro} which differs from the Eulerian equations treated in these precedents.} For detailed discussions of SPDE with cylindrical noise, see \cite{DaPratoZab,Pa2007,PrevRoeckner,Sc1988}. 

\medskip\paragraph{\bf  Stochastic Euler fluid equations.}
As shown in \cite{Holm2015} via Hamilton's principle and rederived via Newton's Law in Appendix \ref{sec-derivation-recap} of the present paper, the stochastic Euler fluid equations we shall study in this paper may be represented in Kelvin circulation theorem form, as
\begin{align}
\mathsf{d}\oint_{c(t)} {v_{j}}(x,t) \,dx^{j} = \oint_{c(t)} \rho^{-1}F_{j}
\,dx^{j} \,, \label{Kel-stoch-intro}%
\end{align}
in which the closed loop $c(t)$ follows the Lagrangian stochastic process in \eqref{StochProc-intro}, which means it moves with stochastic Eulerian fluid velocity $\mathsf{dy}_{t}$ in \eqref{StochVF-intro}. In Kelvin's circulation theorem \eqref{Kel-stoch-intro}, the mass density is denoted
as $\rho$, and $F_{j} $ denotes the $j$-th component of the force exerted on
the flow. In the present work, the mass density $\rho$ will be assumed to be
constant. Notice that the covariant vector with components $v_j(x,t)$ in the integrand of \eqref{Kel-stoch-intro} is not the transport velocity in \eqref{StochVF-intro}. Instead, $v_j(x,t)$ is the $j$-th component of the momentum per unit mass. In what follows, the force per unit mass $\rho^{-1}F_{j} = -\rho ^{-1}\partial_{j}p$ will be taken to be proportional to the pressure gradient.
For this force, the Kelvin loop integral in \eqref{Kel-stoch-intro} for the
stochastic Euler fluid case will be preserved in time for any material loop whose motion is governed by the Stratonovich stochastic process \eqref{StochProc-intro}.
{{} That is, equation \eqref{Kel-stoch-intro} implies, for every rectifiable loop $c\subset \mathbb{R}^{3}$, the momentum per unit mass $v_t$ has the property that for all $t\in [0,T]$, 
\begin{equation}\label{Kel-stoch-intro2}
   \oint_{\eta_t(c)} v_t \cdot dx =    \oint_{c} v_0 \cdot dx
   , \quad    \ a.s.
\end{equation}
pathwise Kelvin theorem \eqref{Kel-stoch-intro2} is reminiscent of the Constantin-Iyer Kelvin theorem in \cite{CI08} which has the beautifully simple implication that smooth Navier-Stokes solutions $u_t$ are characterized by the following statistical Kelvin theorem which holds for all loops $\Gamma\subset \mathbb{R}^{3}$,
\begin{equation}\label{CIkelvin}
\int_\Gamma u_t \cdot dx =\mathbb{E}\left[ \int_{A_t(\Gamma)} u_0\cdot dx\right],
\end{equation}
where $A_t$ is the back-to-labels map for a stochastic flow of a certain forward It\^o equation and $\mathbb{E}$ denotes expectation for that flow. Unlike the pathwise Kelvin theorem \eqref{Kel-stoch-intro2} which holds for solutions of the stochastic Euler fluid equations, the Constantin-Iyer Kelvin theorem in \eqref{CIkelvin} is completely deterministic; since, the fluid velocity $u_t$ is a solution of the Navier-Stokes equations. For more discussion of Kelvin circulation theorems for stochastic Euler fluid equations see \cite{DrHo2018}.

}

In the case of the stochastic Euler fluid treated here in Euclidean
coordinates, applying the Stokes theorem to the Kelvin loop integral in
\eqref{Kel-stoch-intro} yields the equation for $\omega=\mathrm{curl}\,{v}$
proposed in \cite{Holm2015}, as
\begin{equation}
\mathsf{d}\omega+ (\mathsf{dy}_{t}\cdot\nabla) \omega- (\omega\cdot\nabla)
\mathsf{dy}_{t} = 0 \,,\qquad\omega|_{t=0}=\omega_{0} \,,
\label{eq Euler Strat-intro}%
\end{equation}
where 
the loop integral on the right hand side of \eqref{Kel-stoch-intro} vanishes for pressure forces with constant mass density.

\medskip\paragraph{\bf Main results.}

This paper shows that two well-known analytical properties of the
deterministic 3D Euler fluid equations are preserved under the stochastic
modification in \eqref{eq Euler Strat-intro} we study here. First, the 3D
stochastic Euler fluid vorticity equation \eqref{eq Euler Strat-intro} is
locally well-posed in the sense that it possesses local in time existence and
uniqueness of solutions, for initial vorticity in the space $W^{2,2}%
(\mathbb{R}^{3})$ \cite{EbMa1970}. 
See Lichtenstein \cite{Li1925} as mentioned in \cite{FrVi2014} for a
historical precedent for local existence and uniqueness for the Euler fluid
equations. Second, the vorticity equations \eqref{eq Euler Strat-intro} also
possess a Beale-Kato-Majda (BKM) criterion for blow up which is identical to
the one proved for the deterministic Euler fluid equations in \cite{BKM1984}.

\begin{theorem}
[Existence and uniqueness]\label{main theorem-intro} Given initial vorticity
$\omega_{0}\in W^{2,2}\left(  \mathbb{T}^{3},\mathbb{R}^{3}\right)  $, there
exists a local solution in $W^{2,2}$ of the stochastic 3D Euler
equations \eqref{eq Euler Strat-intro}. Namely, if $\omega^{\left(  1\right)
},\omega^{\left(  2\right)  }:\Xi\times\left[  0,\tau\right]  \times
\mathbb{T}^{3}\rightarrow\mathbb{R}^{3}$ are two solutions defined up to the
same stopping time $\tau>0$, then $\omega^{\left(  1\right)  }=\omega^{\left(
2\right)  }$.
\end{theorem}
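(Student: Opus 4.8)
The plan is to convert the stochastic vorticity equation into a more tractable form before attempting any estimates. Because the noise enters in Stratonovich form and the coefficients $\xi_i$ are assumed smooth in space, I would first rewrite \eqref{eq Euler Strat-intro} in It\^o form, picking up a second-order (It\^o--Stratonovich) correction term built from the correlation operator $C_{ij}=\xi_i\xi_j^T$. The transport-stretching structure $(\mathsf{dy}_t\cdot\nabla)\omega-(\omega\cdot\nabla)\mathsf{dy}_t$ is a stochastic Lie derivative, so each noise coefficient acts on $\omega$ as a first-order differential operator of transport--stretching type. The key structural observation is that these operators are \emph{order zero in the top-order energy estimate}: when one differentiates twice and pairs against $\omega$ in $L^2$, the genuinely dangerous top-order terms from transport integrate by parts to lower order (using $\operatorname{div}\xi_i=0$, which incompressibility should provide), while the It\^o correction and the quadratic variation of the martingale part combine to cancel at leading order. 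This cancellation is the reason one expects the $W^{2,2}$ norm to be controlled.

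With the equation in It\^o form, the existence argument would proceed by a regularization-and-compactness scheme. First I would mollify the nonlinearity (smoothing the velocity $u$ reconstructed from $\omega$ via Biot--Savart, $u=K*\omega$ with $K$ the Biot--Savart kernel on $\mathbb{T}^3$) to obtain a family of approximate equations with globally Lipschitz coefficients on $W^{2,2}$, which admit global strong solutions by standard SDE-in-Hilbert-space theory. Then I would derive a uniform-in-mollification a priori estimate for $\mathbb{E}\sup_{t\le\tau}\|\omega^\varepsilon_t\|_{W^{2,2}}^2$ up to a stopping time $\tau$ adapted to the growth of the norm, using It\^o's formula in $W^{2,2}$, the commutator/cancellation structure above, and the nonlinear estimate $\|(u\cdot\nabla)\omega-(\omega\cdot\nabla)u\|_{W^{2,2}}\lesssim\|\omega\|_{W^{2,2}}^2$ coming from the algebra property of $W^{2,2}(\mathbb{T}^3)$ (valid since $W^{2,2}\hookrightarrow L^\infty$ in three dimensions). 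This yields a Gronwall-type bound showing the solution cannot blow up before a positive stopping time, uniformly in $\varepsilon$. Passing to the limit would use tightness and a stochastic compactness argument (Prokhorov plus a Skorokhod-type representation), together with uniqueness to upgrade weak existence to a genuine probabilistically strong local solution.

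For the uniqueness half of the statement — which is what the theorem as phrased emphasizes — I would take two solutions $\omega^{(1)},\omega^{(2)}$ defined up to a common stopping time $\tau$, form the difference $w=\omega^{(1)}-\omega^{(2)}$, and derive a stochastic energy inequality for $\|w_t\|$ in a norm one order below the solution space, say $W^{1,2}$ or even $L^2$ (losing a derivative is standard for transport-type equations, and is affordable because both solutions already live in $W^{2,2}$). The difference equation has the same transport--stretching structure plus source terms that are differences of the nonlinearity; each such term is estimated by $\|w\|$ times a factor controlled by $\sup_{t\le\tau}(\|\omega^{(1)}_t\|_{W^{2,2}}+\|\omega^{(2)}_t\|_{W^{2,2}})$, which is almost surely finite up to $\tau$. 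After applying It\^o's formula and taking the expectation, the martingale term vanishes and the It\^o correction again pairs against the quadratic variation so that the net top-order contribution is nonpositive; a stochastic Gronwall lemma then forces $\mathbb{E}\|w_t\|^2=0$ on $[0,\tau]$, whence $\omega^{(1)}=\omega^{(2)}$.

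The main obstacle I anticipate is the top-order energy estimate in the presence of the noise: one must verify precisely that the It\^o--Stratonovich correction term generated from the Stratonovich stochastic transport operator exactly compensates the quadratic variation of the $W^{2,2}$-norm of the martingale part, including all commutators arising from differentiating the first-order operators $(\xi_i\cdot\nabla)(\cdot)-((\cdot)\cdot\nabla)\xi_i$ twice. Getting these commutators to close in $W^{2,2}$ — so that no term of order higher than $\|\omega\|_{W^{2,2}}^2$ survives and the martingale can be absorbed via Burkholder--Davis--Gundy — is the crux of the whole argument, and it is where the smoothness and incompressibility hypotheses on the $\xi_i$ are essential; every other step is a stochastic adaptation of the classical Euler local-existence proof.
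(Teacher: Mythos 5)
Your uniqueness argument coincides with the paper's own proof (Proposition \ref{uniqueness proposition}): the difference of two solutions is estimated in $L^{2}$, one derivative below the solution space; the random Gronwall coefficient $1+\Vert\omega^{(1)}\Vert_{W^{2,2}}+\Vert\omega^{(2)}\Vert_{W^{2,2}}$ is handled by an exponential weight $e^{Y_t}$ (your ``stochastic Gronwall lemma''); and the cancellation you describe between the It\^{o} correction and the quadratic variation is exactly the paper's estimate \eqref{striking est 1}, summed over $k$ under the assumption \eqref{assump}. What you flag as the crux --- that after all commutators only zero-order terms survive --- is precisely what Section \ref{sect basic bounds} proves, via the operator identities $\mathcal{L}_{\xi_k}^{\ast}=-\mathcal{L}_{\xi_k}+S_{2}$ and $\mathcal{L}_{\xi_k}S_{2}=S_{2}\mathcal{L}_{\xi_k}-S_{4}$; so on this half your plan is the paper's, modulo actually supplying that algebra.

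The genuine gap is in your existence scheme. Mollifying the Biot--Savart kernel does not produce ``approximate equations with globally Lipschitz coefficients on $W^{2,2}$'' solvable by standard SDE-in-Hilbert-space theory, for two reasons. First, smoothing $u$ leaves the drift quadratic in $\omega$, hence only locally Lipschitz; global solvability of the approximations needs a truncation. Second, and more seriously, mollifying the velocity does not touch the noise at all: the operators $\mathcal{L}_{\xi_k}$ in the martingale term lose one derivative and the It\^{o} correction $\tfrac12\sum_k\mathcal{L}_{\xi_k}^{2}$ loses two, so the approximating equations remain unbounded-coefficient SPDEs to which Lipschitz SDE theory does not apply --- this is exactly the point where the stochastic problem is harder than the deterministic Majda--Bertozzi mollification argument you are adapting. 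The paper's construction is correspondingly more elaborate: it truncates the nonlinearity by $\kappa_{R}(\omega)=f_{R}(\Vert\nabla v\Vert_{\infty})$ in \eqref{eq Euler Ito cutoff}, which makes the drift Lipschitz with linear growth (Step 2 of Lemma \ref{lemma regularized equation}), and in addition inserts the hyperviscosity $\nu\Delta^{5}$ in \eqref{regularized SPDE}, so that the equation becomes semilinear parabolic and its mild formulation can be solved by a fixed point: the analytic semigroup $e^{tA}$ absorbs the derivative losses of $\mathcal{L}_{\xi_k}\omega$ and $\mathcal{L}_{\xi_k}^{2}\omega$ through the smoothing estimates \eqref{regularity prop 1} and \eqref{regularity prop 3}, with \eqref{Hp 1}--\eqref{Hp 2} providing summability in $k$. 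The remaining steps (uniform-in-$\nu$ bounds, Aubin--Lions tightness, Skorokhod representation, Gy\"{o}ngy--Krylov, and removal of the cutoff by the stopping time $\tau_{R}$ of Lemma \ref{lemma cut off}) then match your outline. Without a smoothing or truncation mechanism of this kind --- or an entirely different method, e.g.\ stochastic flows --- your approximating equations cannot be solved by the theory you cite, so the existence half of the proposal does not close as written.
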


Our result corresponding to the celebrated Beale-Kato-Majda characterization
of blow-up \cite{BKM1984} is stated in the following.

\begin{theorem}
[Beale-Kato-Majda criterion for blow up]\label{BKM-intro}Given initial
$\omega_{0}\in W^{2,2}\left(  \mathbb{T}^{3},\mathbb{R}^{3}\right)  $, there
exists a stopping time $\tau_{\max}:\Xi\rightarrow\left[  0,\infty\right]  $
and a process $\omega:\Xi\times\lbrack0,\tau_{\max})\times\mathbb{T}%
^{3}\rightarrow\mathbb{R}^{3}$ with the following properties:

i) ($\omega$ is a solution) The process $t\mapsto\omega\left(  t\wedge
\tau_{\max},\cdot\right)  $ has trajectories that are almost surely in the
class $C\left(  [0,\tau_{\max});W^{2,2}\left(  \mathbb{T}^{3};\mathbb{R}%
^{3}\right)  \right)  $ and equation \eqref{eq Euler Strat-intro} holds as an
identity in $L^{2}\left(  \mathbb{T}^{3};\mathbb{R}^{3}\right)  $. In
addition, $\tau_{\max}$ is the largest stopping time with this property; and

ii) (Beale-Kato-Majda criterion \cite{BKM1984}) If $\tau_{\max}<\infty$, then%
\[
\int_{0}^{\tau_{\max}}\left\Vert \omega\left(  t\right)  \right\Vert _{\infty
}dt=+\infty
\]
and, in particular, $\lim\sup_{t\uparrow\tau_{\max}}\left\Vert \omega\left(
t\right)  \right\Vert _{\infty}=+\infty$.
\end{theorem}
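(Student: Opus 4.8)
The plan is to build the maximal solution from the local theory of Theorem~\ref{main theorem-intro} and then to extract the blow-up criterion from an a priori $W^{2,2}$ estimate fed into a logarithmic Gronwall argument adapted to the stochastic setting. First I would construct $\tau_{\max}$. For $R>0$ let $\tau^{R}$ be the first time the $W^{2,2}$ norm of the local solution reaches $R$; Theorem~\ref{main theorem-intro} yields a unique solution at least up to a positive stopping time, and restarting the equation produces consistent continuations. Uniqueness forces the solutions obtained for different $R$ and different restart times to coincide on common intervals, so they patch into a single process defined up to $\tau_{\max}:=\sup_{R}\tau^{R}=\lim_{R\to\infty}\tau^{R}$. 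On each $[0,\tau^{R}]$ the trajectory lies in $C([0,\tau^{R}];W^{2,2})$ and solves \eqref{eq Euler Strat-intro} in $L^{2}$, which gives (i) on $[0,\tau_{\max})$, while maximality is built into the construction. This reduces (ii) to one implication: on $\{\tau_{\max}<\infty\}$, if $\int_{0}^{\tau_{\max}}\|\omega(t)\|_{\infty}\,dt<\infty$ then $\sup_{t<\tau_{\max}}\|\omega(t)\|_{W^{2,2}}<\infty$, which by the local theory would permit a continuation past $\tau_{\max}$ and hence is contradictory.

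The heart of the matter is the stochastic $W^{2,2}$ estimate. Rewriting \eqref{eq Euler Strat-intro} in It\^o form produces the Stratonovich--It\^o corrector $\tfrac12\sum_{i}\mathcal L_{\xi_{i}}^{2}\omega$, with $\mathcal L_{\xi_{i}}\omega=(\xi_{i}\cdot\nabla)\omega-(\omega\cdot\nabla)\xi_{i}$. Applying It\^o's formula to $\|\omega\|_{W^{2,2}}^{2}$, I expect the contribution of this corrector to cancel, at leading order in derivatives, the quadratic-variation term $\sum_{i}\|\mathcal L_{\xi_{i}}\omega\|_{W^{2,2}}^{2}\,dt$; this is the $W^{2,2}$-level expression of the fact that transport noise acts to leading order by a skew-symmetric first-order operator. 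The surviving commutator terms are to be controlled by Kato--Ponce estimates and the smoothness of the $\xi_{i}$, and the deterministic drift handled exactly as in \cite{BKM1984}. The outcome I aim for, valid up to each $\tau^{R}$, is
\begin{equation*}
d\|\omega\|_{W^{2,2}}^{2}\le C\bigl(1+\|\nabla u\|_{\infty}\bigr)\|\omega\|_{W^{2,2}}^{2}\,dt+dM_{t},\qquad d\langle M\rangle_{t}\le C\|\omega\|_{W^{2,2}}^{4}\,dt,
\end{equation*}
with $C$ depending only on the $\xi_{i}$.

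Next I would set $y_{t}=\|\omega(t)\|_{W^{2,2}}^{2}$ and $g_{t}=\log(e+y_{t})$ and apply It\^o's formula, discarding the favourable negative second-order term, to get $dg_{t}\le C(1+\|\nabla u\|_{\infty})\,dt+d\widetilde M_{t}$, where $\widetilde M_{t}=\int_{0}^{t}(e+y_{s})^{-1}\,dM_{s}$ obeys $d\langle\widetilde M\rangle_{t}\le C\,dt$ thanks to the quartic bound on $\langle M\rangle$. Inserting the Beale--Kato--Majda logarithmic inequality $\|\nabla u\|_{\infty}\le C(1+\|\omega\|_{\infty}(1+g_{t})+\|\omega\|_{L^{2}})$ and bounding $\|\omega\|_{L^{2}}\le C\|\omega\|_{\infty}$ on the torus, I obtain
\begin{equation*}
g_{t}\le g_{0}+C\tau_{\max}+C\!\int_{0}^{t}\!\|\omega\|_{\infty}\,ds+\sup_{s<\tau_{\max}}|\widetilde M_{s}|+C\!\int_{0}^{t}\!\|\omega\|_{\infty}\,g_{s}\,ds.
\end{equation*}
On $\{\tau_{\max}<\infty\}$ the bound $\langle\widetilde M\rangle_{\tau_{\max}}\le C\tau_{\max}<\infty$ forces $\widetilde M_{t}$ to converge almost surely as $t\uparrow\tau_{\max}$, so its supremum is finite; with the assumed finiteness of $\int_{0}^{\tau_{\max}}\|\omega\|_{\infty}\,dt$, the first four terms are a.s.\ bounded and Gronwall's lemma applied to the last term yields $\sup_{t<\tau_{\max}}g_{t}<\infty$, that is $\sup_{t<\tau_{\max}}\|\omega\|_{W^{2,2}}<\infty$. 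This is the contradiction sought in the reduction, establishing (ii); the $\limsup$ statement then follows since $\|\omega\|_{\infty}$ is continuous and its integral would otherwise be finite.

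The deterministic ingredients above (commutator and logarithmic Sobolev inequalities) are classical, so the hard part will be the two stochastic points. The first is the leading-order cancellation between the It\^o corrector and the quadratic variation in the $W^{2,2}$ estimate; without it the noise would spoil the energy balance and no BKM-type bound could be closed. The second is the control of the genuinely nonlinear martingale $M$, whose quadratic variation grows like $\|\omega\|_{W^{2,2}}^{4}$: the logarithm is precisely the device that turns this quartic growth into the linear bound $d\langle\widetilde M\rangle_{t}\le C\,dt$, so that $\widetilde M$ converges at $\tau_{\max}$ and the otherwise deterministic Gronwall argument can be run pathwise on the relevant event.
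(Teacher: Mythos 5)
Your overall architecture is the same as the paper's: construct the maximal solution by patching local solutions together via uniqueness, reduce part (ii) to showing that finiteness of $\int_{0}^{\tau_{\max}}\left\Vert \omega\left(  t\right)  \right\Vert _{\infty}dt$ forces $\sup_{t<\tau_{\max}}\left\Vert \omega\left(  t\right)  \right\Vert _{W^{2,2}}<\infty$ (contradicting maximality), and close that bound by a logarithmic estimate that combines the noise cancellation, the Beale--Kato--Majda logarithmic inequality (the paper's Lemma \ref{lemma plus}), and a stochastic Gr\"onwall argument. One place where you genuinely depart from the paper, and to your advantage: the paper controls the martingale by weighting with $e^{-CY_{t}}$, $Y_{t}=\int_{0}^{t}\left(  1+\left\Vert \omega\right\Vert _{\infty}\right)  ds$, then applies Burkholder--Davis--Gundy and Fatou to obtain finiteness of an expectation; you instead observe that $\langle\widetilde M\rangle_{\tau_{\max}}\leq C\tau_{\max}<\infty$ forces a.s.\ convergence of $\widetilde M$ at $\tau_{\max}$ and run Gr\"onwall pathwise on the relevant event. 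That pathwise variant is legitimate, and the quartic bound on $\langle M\rangle$ you invoke is exactly the paper's \eqref{qv1} together with \eqref{Hp 3} (it rests on the vanishing of the top-order pairing $\langle\xi_{k}\cdot\nabla\Delta\omega,\Delta\omega\rangle$).

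There are, however, two genuine gaps. The more serious one is your treatment of the It\^o-corrector versus quadratic-variation cancellation: cancelling only ``at leading order in derivatives'' and then controlling ``the surviving commutator terms by Kato--Ponce estimates'' would fail. After the leading (order-6) terms cancel, there remain many terms of order 5 --- five derivatives distributed over two copies of $\omega$, e.g.\ pairings of $\xi_{k}\cdot\nabla\Delta\omega$ against second-order expressions in $\omega$ --- and no commutator or interpolation estimate bounds such a term by $\left\Vert \omega\right\Vert _{W^{2,2}}^{2}$: it carries one derivative too many, and interpolating it would require the uncontrolled norm $\left\Vert \omega\right\Vert _{W^{4,2}}$. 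The content of the paper's key estimates \eqref{striking est 1}--\eqref{striking est 2}, proved by the explicit operator algebra ($S_{1},\dots,S_{4}$) of Section \ref{sect basic bounds}, is precisely that \emph{all} order-5 terms cancel each other exactly; the paper stresses in Section \ref{subsect difficulty} that this is the surprising point. Your heuristic (skew-symmetry of the first-order transport operator) explains the order-6 cancellation but not this one, so as written your energy estimate does not close.

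The second gap is one of rigor: It\^o's formula cannot be applied to $\left\Vert \omega\left(  t\right)  \right\Vert _{W^{2,2}}^{2}$ directly on the solution, because the terms $\langle\Delta\mathcal{L}_{\xi_{k}}^{2}\omega,\Delta\omega\rangle$ require $\omega\left(  t\right)  \in W^{4,2}$, which a $W^{2,2}$ solution does not possess; your localization at the stopping times $\tau^{R}$ caps the norm but provides no extra spatial regularity. The paper circumvents this by performing the entire computation on the regularized, cut-off equation \eqref{regularized SPDE}, whose solutions lie in $W^{4,2}$ for positive times by Lemma \ref{lemma regularized equation}, obtaining bounds uniform in $\nu$ and $R$, and passing to the limit by Fatou's lemma. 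Some such regularization (or mollification) step is indispensable for your computation to be more than formal.
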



\medskip\paragraph{\bf Plan of the paper}

$\,$

\begin{itemize}

\item Section \ref{sect assum + main results} discusses our assumptions and
summarizes the main results of the paper.

\begin{itemize}
\item Subsection \ref{sec-setting} formulates our objectives and sets the notation.

\item Subsection \ref{sect assumptions} discusses the cylindrical noise
properties of \eqref{StochProc-intro} and provides basic bounds on the Lie
derivatives needed in proving the main analytical results.

\item Subsection \ref{sect main result} provides additional definitions needed
in the context of explaining the main results of the paper.
\end{itemize}

\item Section \ref{sect proofs main results} provides proofs of the main results

\begin{itemize}
\item Subsections \ref{sect uniqueness new} and \ref{sect uniqueness W22}
prove the uniqueness properties needed for establishing Theorem
\ref{main theorem-intro}.

\item Subsection \ref{sect cut off} introduces a cut-off function which is
instrumental in the proof of the BKM theorem for the stochastic Euler
equations given in Subsection \ref{sect BKM proof}.


\end{itemize}

\item Section \ref{sect tech results} summarizes the proofs of several key
technical results which are summoned in establishing Theorem
\ref{main theorem-intro} and Theorem \ref{BKM-intro}.

\begin{itemize}
\item Subsection \ref{sect frac So reg} discusses fractional Sobolev
regularity in time.

\item Subsection \ref{sect bounds} provides the \textit{a priori} bounds
needed to prove estimate (\ref{bound 1}).

\item Subsection \ref{sect uniform in time} proves the bounds needed to
complete the proof that the estimate (\ref{bound 1}) is uniform in time.

\item Subsection \ref{sect basic bounds} establishes the key estimates for the
bounds involving Lie derivatives that are needed in the proofs.
\end{itemize}

\item Appendix \ref{sec-derivation-recap} provides a new derivation of the
stochastic Euler equations introduced in \cite{Holm2015} from the viewpoint
of Newton's 2nd Law and derives the corresponding Kelvin circulation theorem. 
The deterministic
(resp. stochastic) equations of motion are derived using the pullback of
Newton's 2nd Law by the deterministic (resp. stochastic) diffeomorphism
describing the Lagrange-to-Euler map. The Kelvin circulation theorems for both
cases are then derived from their corresponding Newtonian 2nd Laws. The
importance of the distinction between transport velocity and transported
momentum is emphasized in Appendix \ref{sec-derivation-recap} for both the 
deterministic and stochastic Newton's Laws and Kelvin's circulation theorems.
\end{itemize}

\section{Assumptions and main results}

\label{sect assum + main results}

\subsection{Formulating objectives and setting notation\label{sec-setting}}

Our aim from now on will be to prove local-in-time existence and uniqueness of
regular solutions of the stochastic Euler vorticity equation
\begin{equation}
d\omega+\mathcal{L}_{v}\omega dt+\sum_{k=1}^{\infty}\mathcal{L}_{\xi_{k}%
}\omega\circ dB_{t}^{k}=0,\qquad\omega|_{t=0}=\omega_{0},
\label{eq Euler Strat}%
\end{equation}
which was proposed in \cite{Holm2015}. Here, $\mathcal{L}_{v}\omega$ (resp.
$\mathcal{L}_{\xi_{k}}\omega$) denotes the Lie derivative with respect to the
vector fields $v$ (resp. $\xi_{k}$) as in \eqref{Euler-mot-eqn-calc} applied
to the vorticity vector field. In particular,
\begin{equation}
\mathcal{L}_{\xi_{k}}\omega=(\xi_{k}\cdot\nabla)\omega-(\omega\cdot\nabla
)\xi_{k}=\left[  \,\xi_{k}\,,\,\omega\,\right]  . \label{eq Lie brkt}%
\end{equation}

A natural question is whether we should sum only over a finite number of terms
or, on the contrary, it is important to have an infinite sum, and not only for
generality. An important remark is that a finite number of eigenvectors arises
in the relevant case associated to a ``data driven" model based on what is
resolvable in either numerics of observations; and it would simplify some
technical issues (we do not have to assume (\ref{assump}) below). However, an
infinite sum could be of interest in regularization-by-noise
investigations:\ see an example in \cite{Delarue} (easier than 3D Euler
equations) where a singularity is prevented by an infinite dimensional noise.
However, it is also true that in some cases a finite dimensional noise is
sufficient also for regularization by noise, see examples in \cite{FlaGubPri1}%
, \cite{FlaGubPri2}, \cite{FlMaNe2014}.

As mentioned in Remark \ref{u-v Diff}, for the case of the \emph{Euler fluid
equations treated here in Cartesian $\mathbb{R}^{3}$ coordinates}, the two
velocities denoted $u$ and $v$ in the previous section may be taken to be
identical vectors for the case at hand in $\mathbb{R}^{3}$. Consequently, for the remainder of the present work, in a slight abuse of notation, we simply let $v$ denote the both fluid velocity and the momentum per unit mass. Then
$\omega=\operatorname{curl}v$ is the vorticity, and $\xi_{k}$ comprise $N$
divergence-free prescribed vector fields, subject to the assumptions stated
below. The  processes $B^{k}$ with $k\in\mathbb{N}$ are scalar independent Brownian
motions. The result we present next will extend the known analogous result for
deterministic Euler equations to the stochastic case.

To simplify some of the arguments, we will work on a torus $\mathbb{T}^{3}=\mathbb{R}^{3}/\mathbb{Z}^{3}$. However, the results should also hold in the
should also hold in the full space, $\mathbb{R}^{3}$.

The stochastic Euler vorticity equation \eqref{eq Euler Strat} above is stated
in Stratonovich form. The corresponding It\^{o} form is
\begin{equation}
d\omega+\mathcal{L}_{v}\omega dt+\sum_{k=1}^{\infty}\mathcal{L}_{\xi_{k}%
}\,\omega dB_{t}^{k}=\frac{1}{2}\sum_{k=1}^{\infty}\mathcal{L}_{\xi_{k}}%
^{2}\omega\ dt\,,\qquad\omega|_{t=0}=\omega_{0}\,, \label{eq Euler Ito}%
\end{equation}
where we write
\[
\mathcal{L}_{\xi_{k}}^{2}\omega=\mathcal{L}_{\xi_{k}}(\mathcal{L}_{\xi_{k}%
}\omega)=\left[  \xi_{k}\,,[\xi_{k}\,,\,\omega]\right]  \,,
\]
for the double Lie bracket of the divergence-free vector field $\xi_{k}$ with
the vorticity vector field $\omega$. Indeed, let us recall that Stratonovich
integral is equal to It\^{o} integral plus one-half of the corresponding cross-variation process:\footnote{The subscript $t$ on the square brackets distinguishes
between the cross-variation process and Lie bracket of vector fields. To avoid
confusion between these two uses of the square bracket, we will denote the Lie
bracket operation $[\,\xi_{k}\,,\,\cdot\,]$ by the symbol $\mathcal{L}%
_{\xi_{k}}\cdot $, as in equation \eqref{eq Lie brkt}.}
\[
\int_{0}^{t}\mathcal{L}_{\xi_{k}}\omega_{s}\circ dB_{s}^{k}=\int_{0}%
^{t}\mathcal{L}_{\xi_{k}}\omega_{s}dB_{s}^{k}+\frac{1}{2}\left[
\mathcal{L}_{\xi_{k}}\omega,B^{k}\right]  _{t}.
\]
By the linearity and the space-independence of $B^{k}$,\ $\left[  \mathcal{L}_{\xi
_{k}}\omega,B^{k}\right]  _{t}=\mathcal{L}_{\xi_{k}}\left[  \omega
,B^{k}\right]  _{t}$. As $\omega_{t}$ has the form $d\omega_{t}=a_{t}%
dt+\sum_{h}b_{t}^{h}\circ dB_{t}^{h}$, where $B^{h}$ are independent, the cross-variation process $\left[  \omega,B^{k}\right]  _{t}$ is given by
\[
\left[  \omega,B^{k}\right]  _{t}=\int_{0}^{t}b_{s}^{k}ds.
\]
In our case $b_{s}^{k}=-\mathcal{L}_{\xi_{k}}\omega_{s}$, hence%
\[
\left[  \omega,B^{k}\right]  _{t}=-\int_{0}^{t}\mathcal{L}_{\xi_{k}}\omega
_{s}ds.
\]
Finally%
\[
\mathcal{L}_{\xi_{k}}\left[  \omega,B^{k}\right]  _{t}=-\int_{0}%
^{t}\mathcal{L}_{\xi_{k}}^{2}\omega_{s}ds
\]
and therefore, in differential form,%
\[
\mathcal{L}_{\xi_{k}}\omega_{t}\circ dB_{t}^{k}=\mathcal{L}_{\xi_{k}}%
\omega_{t}\,dB_{t}^{k}-\frac{1}{2}\mathcal{L}_{\xi_{k}}^{2}\omega_{t}dt.
\]

Among different possible strategies to study equation (\ref{eq Euler Ito}),
some of them based on stochastic flows, we present here the extension to the
stochastic case of a classical PDE proof, see for instance \cite{Kato},
\cite{PL Lions}, \cite{MajdaBert}.

The proof is based on \textit{a priori} estimates in high order Sobolev
spaces. The deterministic classical result proves well-posedness in the space
$\omega\left(  t\right)  \in W^{3/2+\epsilon,2}\left(  \mathbb{T}%
^{3};\mathbb{R}^{3}\right)  $, for some $\epsilon>0$, when $\omega_{0}$
belongs to the same space. Here we simplify (due to a number of new very
non-trivial facts outlined below in section \ref{subsect difficulty}) and work
in the space $\omega\left(  t\right)  \in W^{2,2}\left(  \mathbb{T}%
^{3};\mathbb{R}^{3}\right)  $. Consequently, we may consider $\Delta
\omega\left(  t\right)  $ (to avoid fractional derivatives) and investigate
existence and uniqueness in the class of regularity $\Delta\omega\left(
t\right)  \in L^{2}\left(  \mathbb{T}^{3};\mathbb{R}^{3}\right)  $.

If $f,g\in L^{2}\left(  \mathbb{T}^{3};\mathbb{R}%
^{3}\right)  $, we write $\left\langle f,g\right\rangle =\int_{\mathbb{T}^{3}%
}f\left(  x\right)  \cdot g\left(  x\right)  dx$. We consider the basis of
$L^{2}\left(  \mathbb{T}^{3};\mathbb{C}\right)  $ of functions $\left\{
e^{2\pi i\xi\cdot x};\xi\in\mathbb{Z}^{3}\right\}  $ and, for every $f\in
L^{2}\left(  \mathbb{T}^{3};\mathbb{C}\right)  $, we introduce the Fourier
coefficients $\widehat{f}\left(  \xi\right)  =\int_{\mathbb{T}^{3}}e^{-2\pi
i\xi\cdot x}f\left(  x\right)  dx$, $\xi\in\mathbb{Z}^{3}$; Parseval identity
states that $\int_{\mathbb{T}^{3}}\left\vert f\left(  x\right)  \right\vert
^{2}dx=\sum_{\xi\in\mathbb{Z}^{3}}\left\vert \widehat{f}\left(  \xi\right)
\right\vert ^{2}$. If $v\in L^{2}\left(  \mathbb{T}^{3};\mathbb{R}^{3}\right)
$ is a vector field with components $v_{i}$, $i=1,2,3$, we write
$\widehat{v}\left(  \xi\right)  =\int_{\mathbb{T}^{3}}e^{2\pi i\xi\cdot
x}v\left(  x\right)  dx$ and we may easily check using the components that
have $\int_{\mathbb{T}^{3}}\left\vert v\left(  x\right)  \right\vert
^{2}dx=\sum_{\xi\in\mathbb{Z}^{3}}\left\vert \widehat{v}\left(  \xi\right)
\right\vert ^{2}$. Since functions which are partial derivatives of other
functions, on the torus, must have zero average, we shall always restrict
ourselves to functions $f\in L^{2}\left(  \mathbb{T}^{3};\mathbb{C}\right)  $
such that $\int_{\mathbb{T}^{3}}f\left(  x\right)  dx=0$. In this case
$\widehat{f}\left(  0\right)  =0$ and the term with $\xi=0$ does not appear in
the sums above.

We introduce, for every $s\geq0$, the fractional Sobolev space $W^{s,2}\left(
\mathbb{T}^{3};\mathbb{C}\right)  $ of all $f\in L^{2}\left(  \mathbb{T}%
^{3};\mathbb{C}\right)  $ such that
\[
\left\Vert f\right\Vert _{W^{s,2}}^{2}:=\sum_{\xi\in\mathbb{Z}^{3}%
\backslash\left\{  0\right\}  }\left\vert \xi\right\vert ^{2s}\left\vert
\widehat{f}\left(  \xi\right)  \right\vert ^{2}<\infty.
\]
As stated above, we are assuming zero average functions, hence we have
excluded $\xi=0$. We denote by $W_{\sigma}^{s,2}\left(  \mathbb{T}%
^{3},\mathbb{R}^{3}\right)  $ the space of all zero mean divergence free
(divergence in the sense of distribution) vector fields $v\in L^{2}\left(
\mathbb{T}^{3};\mathbb{R}^{3}\right)  $ such that all components $v_{i}$,
$i=1,2,3$, belong to $W^{s,2}\left(  \mathbb{T}^{3};\mathbb{C}\right)  $. For
a vector field $v\in W_{\sigma}^{s,2}\left(  \mathbb{T}^{3},\mathbb{R}%
^{3}\right)  $ the norm $\left\Vert v\right\Vert _{W^{s,2}}$ is defined by the
identity $\left\Vert v\right\Vert _{W_{\sigma}^{s,2}}^{2}=\sum_{i=1}%
^{3}\left\Vert v_{i}\right\Vert _{W^{s,2}}^{2}$, where $\left\Vert
v_{i}\right\Vert _{W^{s,2}}^{2}$ is defined above. We thus have again
$\left\Vert v\right\Vert _{W_{\sigma}^{s,2}}^{2}:=\sum_{\xi\in\mathbb{Z}%
^{3}\backslash\left\{  0\right\}  }\left\vert \xi\right\vert ^{2s}\left\vert
\widehat{v}\left(  \xi\right)  \right\vert ^{2}$. For $f\in W^{s,2}\left(
\mathbb{T}^{3};\mathbb{C}\right)  $, we denote by {$\left(  -\Delta\right)
^{s/2}f$ the function of }$L^{2}\left(  \mathbb{T}^{3};\mathbb{C}\right)  $
with Fourier coefficients $\left\vert \xi\right\vert ^{s}\widehat{f}\left(
\xi\right)  $. {Similarly, we write $-\Delta^{-1}f$ \ for the function having
Fourier coefficients $\left\vert \xi\right\vert ^{-2}\widehat{f}\left(
\xi\right)  $. We use the same notations for vector fields, meaning that the
operations are made componentwise.}

The Biot-Savart operator is the reconstruction of a zero mean divergence free vector
field $u$ from a divergence free vector field $\omega$ such that
$\operatorname{curl}u=\omega$. On the torus it is given by
$u=-\operatorname{curl}\Delta^{-1}\omega$. In Fourier components, it is given
by $\widehat{u}\left(  \xi\right)  =\left\vert \xi\right\vert ^{-2}\xi
\times\widehat{\omega}\left(  \xi\right)  $. We have the following well-known
result: for all $s\geq0$
\begin{equation}
\left\Vert u\right\Vert _{W_{\sigma}^{s+1,2}}\leq\left\Vert \omega\right\Vert
_{W_{\sigma}^{s,2}}.\label{Biot Savart}%
\end{equation}
Indeed, using the definition given above of $\left\Vert u\right\Vert
_{W_{\sigma}^{s+1,2}}^{2}$, the formula which relates $\widehat{u}\left(
\xi\right)  $ to $\widehat{\omega}\left(  \xi\right)  $ and the rule
$\left\vert a\times b\right\vert \leq\left\vert a\right\vert \left\vert
b\right\vert $, we get
\[
\left\Vert u\right\Vert _{W_{\sigma}^{s+1,2}}^{2}=\sum_{\xi\in\mathbb{Z}%
^{3}\backslash\left\{  0\right\}  }\left\vert \xi\right\vert ^{2s+2}\left\vert
\widehat{u}\left(  \xi\right)  \right\vert ^{2}=\sum_{\xi\in\mathbb{Z}%
^{3}\backslash\left\{  0\right\}  }\left\vert \xi\right\vert ^{2s+2}\left\vert
\xi\right\vert ^{-4}\left\vert \xi\times\widehat{\omega}\left(  \xi\right)
\right\vert ^{2}\leq\sum_{\xi\in\mathbb{Z}^{3}\backslash\left\{  0\right\}
}\left\vert \xi\right\vert ^{2s+2}\left\vert \xi\right\vert ^{-2}\left\vert
\widehat{\omega}\left(  \xi\right)  \right\vert ^{2}%
\]
and the latter is precisely equal to $\left\Vert \omega\right\Vert
_{W_{\sigma}^{s,2}}$, by the definition above. 

We shall denote the dual operator of the Lie derivative $\mathcal{L}_{\alpha}$ of a
vector field as $\mathcal{L}_{\alpha}^{\ast}$, defined by the identity%
\[
\left\langle \mathcal{L}_{\alpha}^{\ast} \beta,\gamma\right\rangle =\left\langle
\beta,\mathcal{L}_{\alpha}\gamma\right\rangle ,
\]
for all smooth vector fields $\alpha,\beta,\gamma$. When $\operatorname{div}\alpha=0$ the dual
Lie operator is given in vector components by%

\begin{equation}
\left(  \mathcal{L}_{\alpha}^{\ast}\gamma\right)  _{i} := - \sum_{j} \left(
\alpha^{j}\partial_{j} \gamma_{i} + \gamma_{j}\partial_{i}\alpha^{j}\right)  \,.
\label{def Lambda}%
\end{equation}

\subsection{Assumptions on $\left\{  \xi_{k}\right\}  $ and basic bounds on
Lie derivatives}

\label{sect assumptions}

We assume that the vector fields $\xi_{k}:\mathbb{T}^{3}\rightarrow
\mathbb{R}^{3}$ are of class $C^{4}$ and satisfy%
\begin{equation}
\left\Vert \sum_{k=1}^{\infty}\mathcal{L}_{\xi_{k}}^{2}f\right\Vert _{L^{2}%
}^{2}\leq C\left\Vert f\right\Vert _{W^{2,2}}^{2} \label{Hp 1}%
\end{equation}%
\begin{equation}
\sum_{k=1}^{\infty}\left\langle \mathcal{L}_{\xi_{k}}f,\mathcal{L}_{\xi_{k}%
}f\right\rangle \leq C\left\Vert f\right\Vert _{W^{2,2}}^{2} \label{Hp 2}%
\end{equation}
for all $f\in W^{2,2}\left(  \mathbb{T}^{3};\mathbb{R}^{3}\right)  $ and%
\begin{equation}
\sum_{k=1}^{\infty}\left\Vert \xi_{k}\right\Vert _{W^{3,2}}^{2}<\infty.
\label{Hp 3}%
\end{equation}
These properties will be used below, both to give a meaning to the stochastic
terms in the equation and to prove certain bounds. In addition, a recurrent
energy-type scheme in our proofs requires comparisons of quadratic variations
and Stratonovich corrections. Making these comparisons leads to sums of the
form $\left\langle \mathcal{L}_{\xi_{k}}^{2}f,f\right\rangle +\left\langle
\mathcal{L}_{\xi_{k}}f,\mathcal{L}_{\xi_{k}}f\right\rangle $. In dealing with
them, we have observed the validity of two striking bounds, which \textit{a
priori} may look surprising. They are:%
\begin{equation}
\left\langle \mathcal{L}_{\xi_{k}}^{2}f,f\right\rangle +\left\langle
\mathcal{L}_{\xi_{k}}f,\mathcal{L}_{\xi_{k}}f\right\rangle \leq C_{k}^{\left(
0\right)  }\left\Vert f\right\Vert _{L^{2}}^{2} \label{striking est 1}%
\end{equation}%
\begin{equation}
\left\langle \Delta\mathcal{L}_{\xi_{k}}^{2}f,\Delta f\right\rangle
+\left\langle \Delta\mathcal{L}_{\xi_{k}}f,\Delta\mathcal{L}_{\xi_{k}%
}f\right\rangle \leq C_{k}^{\left(  2\right)  }\left\Vert f\right\Vert
_{W^{2,2}}^{2}, \label{striking est 2}%
\end{equation}
for suitable constants $C_{k}^{\left(  0\right)  },C_{k}^{\left(  2\right)  }%
$. For these estimates to hold, the regularity of $f$ must be, respectively,
$W^{2,2}\left(  \mathbb{T}^{3};\mathbb{R}^{3}\right)  $ and $W^{4,2}\left(
\mathbb{T}^{3};\mathbb{R}^{3}\right)  $. The proofs of estimates
(\ref{striking est 1}) and (\ref{striking est 2}) are given in Section
\ref{sect basic bounds} below.\footnote{We thank Istvan Gy\"ongy and Nikolai Krylov for pointing out to us that estimates such as (\ref{striking est 1}) and (\ref{striking est 2}) hold in much more generality, see e.g. \cite{G1,G2,G3}. }

Concerning inequality (\ref{striking est 1}), it is clear that the second
order terms in $\left\langle \mathcal{L}_{\xi_{k}}^{2}f,f\right\rangle $ and
$\left\langle \mathcal{L}_{\xi_{k}}f,\mathcal{L}_{\xi_{k}}f\right\rangle $
will cancel. However, the cancellations among the first order terms are not so
obvious. Remarkably, though, these terms do cancel each other, so that only
the zero-order terms remain. Similar remarks apply to the other inequality.

In addition, we must assume
\begin{equation}
\sum_{k=1}^{\infty}C_{k}^{\left(  0\right)  }<\infty,\qquad\sum_{k=1}^{\infty
}C_{k}^{\left(  2\right)  }<\infty. \label{assump}%
\end{equation}
Because the constants $C_{k}^{\left(  i\right)  }$ are rather complicated, we
will not write them explicitly here. In the relevant case of a finite number
of $\xi_{k}$'s, there is obviously no need of this assumption. In the case of
infinitely many terms, see a sufficient condition in Remark
\ref{remark summability Ck} of Section \ref{sect basic bounds}.

\subsection{Statement of the main results}

\label{sect main result}

Let $\left\{  B^{k}\right\}  _{k\in\mathbb{N}}$ be a sequence of
independent Brownian motions on a filtered probability space $\left(
\Xi,\mathcal{F},\mathcal{F}_{t},P\right)  $. We do not use the most common
notation $\Omega$ for the probability space, since $\omega$ is the traditional
notation for the vorticity. Thus the elementary events will be denoted by
$\theta\in\Xi$. Let $\left\{  \xi_{k}\right\}  _{k\in\mathbb{N}}$ be a
sequence of vector fields, satisfying the assumptions of section
\ref{sect assumptions}. Consider equations (\ref{eq Euler Ito}) on
$[0,\infty)$.

\begin{definition}
[Local solution]\label{loc-soln} \label{def local solution} A local solution
in $W_{\sigma}^{2,2}$ of the stochastic 3D Euler equations (\ref{eq Euler Ito}%
) is given by a pair $\left(  \tau,\omega\right)  $ consisting of a stopping
time $\tau:\Xi\rightarrow\lbrack0,\infty)$ and a process $\omega:\Xi
\times\left[  0,\tau\right]  \times\mathbb{T}^{3}\rightarrow\mathbb{R}^{3}$
such that a.e. the trajectory is of class $C\left(  \left[  0,\tau\right]
;W_{\sigma}^{2,2}\left(  \mathbb{T}^{3};\mathbb{R}^{3}\right)  \right)  $,
$\omega\left(  t\wedge\tau,\cdot\right)  $, is adapted to $\left(
\mathcal{F}_{t}\right)  $, and equation (\ref{eq Euler Ito}) holds in the
usual integral sense; More precisely, for any bounded stopping time $\bar
{\tau}\leq\tau$
\begin{equation}
\omega_{\bar{\tau}}-\omega_{0}+\int_{0}^{\bar{\tau}}\mathcal{L}_{v}\omega
dt+\sum_{k=1}^{\infty}\int_{0}^{\bar{\tau}}\mathcal{L}_{\xi_{k}}\omega
dB_{t}^{k}=\frac{1}{2}\sum_{k=1}^{\infty}\int_{0}^{\bar{\tau}}\mathcal{L}%
_{\xi_{k}}^{2}\omega\ dt\, \label{integral eq Euler Ito}%
\end{equation}
holds as an identity in $L^{2}\left(  \mathbb{T}^{3};\mathbb{R}^{3}\right)  $.
\end{definition}

\begin{definition}
[Maximal solution]\label{def maximal solution} A maximal solution of
\eqref{eq Euler Ito} is given by a stopping time $\tau_{\max}:\Xi
\rightarrow\left[  0,\infty\right]  $ and a process $\omega:\Xi\times
\lbrack0,\tau_{\max})\times\mathbb{T}^{3}\rightarrow\mathbb{R}^{3}$ such that:
i)\ $P\left(  \tau_{\max}>0\right)  =1$, $\tau_{\max}=\lim_{n\rightarrow
\infty}\tau_{n}$ where $\tau_{n}$ is an increasing sequence of stopping times,
and ii)\ $\left(  \tau_{n},\omega\right)  $ is a local solution for every
$n\in\mathbb{N}$; In addition, $\tau_{\max}$ is the largest stopping time with
properties i) and ii). In other words, if $(\tau^{\prime},\omega^{\prime})$ is
another pair that satisfies i) and ii) and $\omega^{\prime}=\omega$ on
$[0,\tau^{\prime}\wedge\tau_{\max}),$ then, $\tau^{\prime}\leq\tau_{\max}$
$P$-almost surely.
\end{definition}

\begin{remark}
Due to assumptions (\ref{Hp 1}) and (\ref{Hp 2}) and the regularity of
$\omega$, the two terms related to the noise in equation (\ref{eq Euler Ito})
are well defined, as elements of $L^{2}\left(  \mathbb{T}^{3};\mathbb{R}%
^{3}\right)  $.
\end{remark}

\begin{remark}
\label{remark regul}Recall that, for every $\alpha\geq0$, $\omega\left(
t\right)  \in W^{\alpha,2}\left(  \mathbb{T}^{3};\mathbb{R}^{3}\right)  $
implies $v\left(  t\right)  \in W^{\alpha+1,2}\left(  \mathbb{T}%
^{3};\mathbb{R}^{3}\right)  $. Hence solutions in $W^{2,2}$ have paths such
that $v\in C\left(  \left[  0,\tau\right]  ;W^{3,2}\left(  \mathbb{T}%
^{3};\mathbb{R}^{3}\right)  \right)  $. Moreover, recall that $W^{\alpha
,2}\left(  \mathbb{T}^{3};\mathbb{R}^{3}\right)  \subset C\left(
\mathbb{T}^{3};\mathbb{R}^{3}\right)  $ for $\alpha>3/2$. Therefore
$\omega\cdot\nabla v\in C\left(  \left[  0,\tau\right]  \times\mathbb{T}%
^{3};\mathbb{R}^{3}\right)  $ and $v\cdot\nabla\omega$ is at least in
$C\left(  \left[  0,\tau\right]  ;W^{1,2}\left(  \mathbb{T}^{3};\mathbb{R}%
^{3}\right)  \right)  $, hence at least
\[
\left[  v,\omega\right]  \in C\left(  \left[  0,\tau\right]  ;W^{1,2}\left(
\mathbb{T}^{3};\mathbb{R}^{3}\right)  \right)  \qquad a.s.
\]
which explains why the term $\left[  v,\omega\right]  $ is in $L^{2}\left(
\mathbb{T}^{3};\mathbb{R}^{3}\right)  $. (Recall that Definition
\ref{loc-soln} instructs us to interpret equation (\ref{eq Euler Ito}) as an
identity in $L^{2}\left(  \mathbb{T}^{3};\mathbb{R}^{3}\right)  $.)
\end{remark}

\begin{remark}
\label{remark weak sol}If $\omega:\Xi\times\left[  0,\tau\right]
\times\mathbb{T}^{3}\rightarrow\mathbb{R}^{3}$ has the regularity properties
of Definition \ref{def local solution} and satisfies equation
(\ref{eq Euler Ito}) only in a weak sense, namely, for any bounded stopping
time $\bar{\tau}\leq\tau$
\[
\left\langle \omega\left(  \bar{\tau}\right)  ,\phi\right\rangle +\int%
_{0}^{\bar{\tau}}\left\langle \omega\left(  s\right)  ,\mathcal{L}_{v\left(
s\right)  }^{\ast}\phi\right\rangle ds+\sum_{k=1}^{\infty}\int_{0}^{\bar{\tau
}}\left\langle \omega\left(  s\right)  ,\mathcal{L}_{\xi_{k}}^{\ast}%
\phi\right\rangle B_{s}^{k}=\left\langle \omega_{0},\phi\right\rangle
+\frac{1}{2}\sum_{k=1}^{\infty}\int_{0}^{\bar{\tau}}\left\langle \omega\left(
s\right)  ,\mathcal{L}_{\xi_{k}}^{\ast}\mathcal{L}_{\xi_{k}}^{\ast}%
\phi\right\rangle \ ds
\]
for all $\phi\in C^{\infty}\left(  \mathbb{T}^{3};\mathbb{R}^{3}\right)  $,
then, by integration by parts, it satisfies equation (\ref{eq Euler Ito}) as
an identity in $L^{2}\left(  \mathbb{T}^{3};\mathbb{R}^{3}\right)  $.
\end{remark}

\begin{theorem}
\label{main theorem}Given $\omega_{0}\in W_{\sigma}^{2,2}\left(
\mathbb{T}^{3},\mathbb{R}^{3}\right)  $, there exists a maximal solution
$\left(  \tau_{\max},\omega\right)  $ of the stochastic 3D Euler equations
(\ref{eq Euler Ito}). Moreover if $\left(  \tau^{\prime},\omega^{\prime
}\right)  $ is another maximal solution of (\ref{eq Euler Ito}), then
necessarily $\tau_{\max}=\tau^{\prime}$ and $\omega=\omega^{\prime}$on
$[0,\tau_{\max})$.\ Moreover either $\tau_{\max}=\infty$ or $\lim
\sup_{t\uparrow\tau_{\max}}\left\Vert \omega\left(  t\right)  \right\Vert
_{W^{2,2}}=+\infty$.
\end{theorem}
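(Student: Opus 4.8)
The plan is to follow the classical PDE route (regularize, obtain uniform \emph{a priori} estimates in $W^{2,2}$, pass to the limit, then patch local solutions into a maximal one), adapting each step to the stochastic setting. First I would replace \eqref{eq Euler Ito} by a family of globally well-posed approximate equations --- for instance by mollifying the Biot--Savart velocity $v=-\operatorname{curl}\Delta^{-1}\omega$ in the transport term $\mathcal{L}_v\omega$, or by projecting onto finitely many Fourier modes, so that the resulting drift is Lipschitz on bounded sets of $W^{2,2}_\sigma$ while the noise coefficients $\mathcal{L}_{\xi_k}$ stay linear. For each approximation a global strong solution with paths in $C([0,\infty);W^{2,2}_\sigma)$, and in fact smooth in space, exists by a contraction argument for adapted processes; these smooth approximants are the objects on which the energy estimates are run, which is important because \eqref{striking est 2} requires $W^{4,2}$ regularity to be applied, even though its right-hand side involves only the $W^{2,2}$ norm.

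The heart of the argument is a uniform \emph{a priori} bound, stable under the approximation. I would apply It\^{o}'s formula to $\|\omega\|_{L^2}^2$ and to $\|\Delta\omega\|_{L^2}^2$ (equivalently to $\|\omega\|_{W^{2,2}}^2$, since we work on the torus and avoid fractional derivatives). For the transport term $\langle(v\cdot\nabla)\omega,\omega\rangle=0$ by $\operatorname{div}v=0$, so at the $L^2$ level only vortex stretching survives; at the top level one writes $\Delta\mathcal{L}_v\omega=\mathcal{L}_v\Delta\omega+[\Delta,\mathcal{L}_v]\omega$ and controls the commutator by product/Sobolev estimates together with $W^{2,2}(\mathbb{T}^3)\hookrightarrow C^0$ and $\|\nabla v\|_\infty\lesssim\|v\|_{W^{3,2}}\lesssim\|\omega\|_{W^{2,2}}$ from \eqref{Biot Savart}, yielding a cubic bound $\langle\Delta\mathcal{L}_v\omega,\Delta\omega\rangle\lesssim\|\omega\|_{W^{2,2}}^3$. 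The decisive point is the stochastic part: the It\^{o} correction $\tfrac12\sum_k\mathcal{L}_{\xi_k}^2\omega$ and the quadratic variation $\sum_k\|\mathcal{L}_{\xi_k}\omega\|_{L^2}^2$ of the martingale term would each individually cost a derivative, but by the cancellations \eqref{striking est 1} and \eqref{striking est 2} their sums are bounded by $\sum_kC_k^{(0)}\|\omega\|_{L^2}^2$ and $\sum_kC_k^{(2)}\|\omega\|_{W^{2,2}}^2$, finite by \eqref{assump} (the hypotheses \eqref{Hp 1}--\eqref{Hp 3} guaranteeing everything is well defined). Combining, I expect a differential inequality $d\|\omega\|_{W^{2,2}}^2\le C(1+\|\omega\|_{W^{2,2}})\|\omega\|_{W^{2,2}}^2\,dt+dM_t$ with $M$ a martingale; localizing by $\tau_R=\inf\{t:\|\omega(t)\|_{W^{2,2}}\ge R\}$ and applying a stochastic Gr\"onwall/Burkholder argument gives moment bounds uniform in the approximation parameter up to $\tau_R$.

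With the uniform bounds in hand I would establish the fractional-in-time Sobolev regularity of the technical section to gain compactness and pass to the limit, producing a local solution on a positive stopping time; since pathwise uniqueness will be available, a Gy\"ongy--Krylov argument upgrades convergence in law to convergence in probability and yields a probabilistically strong solution. Uniqueness I would prove first at low regularity: the difference of two solutions solves a linear transport-type equation, and an It\^{o} estimate on $\|\omega^{(1)}-\omega^{(2)}\|_{L^2}^2$ --- again using \eqref{striking est 1} to absorb the noise and the boundedness of $\nabla v^{(i)}$ from the $W^{2,2}$ regularity --- closes by Gr\"onwall and forces the difference to vanish. The maximal solution is then obtained by patching local solutions along an increasing sequence of stopping times as in Definition \ref{def maximal solution}, and the blow-up dichotomy follows: on $\{\tau_{\max}<\infty\}$, if $\limsup_{t\uparrow\tau_{\max}}\|\omega(t)\|_{W^{2,2}}$ were finite the \emph{a priori} bound would let one restart past $\tau_{\max}$, contradicting maximality.

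The step I expect to be the main obstacle is the top-order \emph{a priori} estimate carried out \emph{simultaneously} with the stochastic terms: controlling $[\Delta,\mathcal{L}_v]$ at the $W^{2,2}$ level while ensuring that the derivative loss from $\mathcal{L}_{\xi_k}$ in both the It\^{o}--Stratonovich correction and the martingale's quadratic variation is genuinely recovered by the cancellation \eqref{striking est 2} rather than merely bounded term by term. Making the martingale analysis rigorous --- justifying It\^{o}'s formula for $\|\Delta\omega\|_{L^2}^2$ at the available regularity, and managing the stopping times and the summability \eqref{assump} of the constants $C_k^{(2)}$ --- is where the stochastic proof genuinely goes beyond the deterministic template.
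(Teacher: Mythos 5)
Your high-level architecture does coincide with the paper's (approximate equation, uniform $W^{2,2}$ bounds via the cancellations \eqref{striking est 1}--\eqref{striking est 2}, fractional-in-time regularity and tightness, a Gy\"ongy--Krylov/Skorokhod identification, $L^{2}$-level uniqueness, patching into a maximal solution and the continuation dichotomy). The genuine gap is in your first step: neither of the approximations you propose can support the rest of the argument. First, globality: mollifying the Biot--Savart velocity or projecting onto finitely many Fourier modes leaves the drift quadratic, hence only locally Lipschitz, so ``a contraction argument for adapted processes'' produces a local solution only; in the vorticity formulation there is no conserved quantity ruling out finite-time explosion of the approximants (the vortex-stretching term gives $d\|\omega\|_{L^{2}}^{2}\lesssim_{\epsilon}\|\omega\|_{L^{2}}^{3}dt+\dots$), so the claimed global solvability of the approximations is unjustified and cannot be repaired without a cut-off. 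The paper resolves exactly this by inserting $\kappa_{R}\left(  \omega\right)  =f_{R}(\left\Vert \nabla v\right\Vert_{\infty})$ into the equation \eqref{eq Euler Ito cutoff}: Step 2 of Lemma \ref{lemma regularized equation} shows the truncated drift is globally Lipschitz with linear growth from $W_{\sigma}^{2,2}$ to $L^{2}$, which is what makes the mild fixed point iterate to arbitrary times. Relatedly, your plan to localize the a priori bound at stopping times $\tau_{R}$ conflicts with the compactness step: Theorem \ref{thm tightness} requires the bound \eqref{bound 1} on a deterministic interval $\left[  0,T\right]$, uniformly in the regularization parameter, whereas your $\tau_{R}$ depends on the approximant; the cut-off inside the equation is what allows global-in-time uniform bounds and a clean tightness argument.

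The second, more serious gap concerns the $W^{4,2}$ regularity that you yourself note is needed to apply \eqref{striking est 2}. A mollified velocity yields a pure transport--stretching SPDE, which propagates but never improves regularity (cf. the Cauchy formula \eqref{Cauchy-vort-soln-stoch}); with $\omega_{0}\in W^{2,2}$ such an approximant never lies in $W^{4,2}$, so your claim that the approximants are ``in fact smooth in space'' fails for this scheme and the key estimate may never be legitimately invoked along it. Fourier--Galerkin truncation does give smooth solutions, but only if the noise is projected as well (otherwise the solution leaves the finite-dimensional space), and the projected operators $P_{N}\mathcal{L}_{\xi_{k}}$ no longer satisfy the striking cancellations uniformly in $N$: the proofs in Section \ref{sect basic bounds} hinge on the exact identities $\mathcal{L}_{\xi_{k}}^{\ast}=-\mathcal{L}_{\xi_{k}}+S_{2}$ and $\mathcal{L}_{\xi_{k}}S_{2}=S_{2}\mathcal{L}_{\xi_{k}}-S_{4}$, and $P_{N}$ does not commute with the multiplication operators $S_{1},S_{2}$, so at the $\Delta$-level one is left with uncancelled ``order five'' terms of size $\left\Vert \omega\right\Vert _{W^{3,2}}\left\Vert \omega\right\Vert _{W^{2,2}}$ that cannot be absorbed into $\left\Vert \omega\right\Vert _{W^{2,2}}^{2}$. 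The paper's hyperviscous term $\nu\Delta^{5}$ in \eqref{regularized SPDE} is engineered precisely to escape this dilemma: it leaves the operators $\mathcal{L}_{\xi_{k}}$ untouched, so all cancellations hold exactly, while parabolic smoothing places the paths in $W^{4,2}$ for positive times (Lemma \ref{lemma regularized equation}), legitimizing the computation of Section \ref{sect bounds}; the exponent $5$ is chosen so that the stochastic convolution estimate \eqref{regularity prop 3} closes at the $W^{4,2}$ level. Until your regularization is replaced by one having both features --- a cut-off for globality and a smoothing term compatible with the unmodified noise --- the central a priori estimate in your plan cannot be performed on your approximants, and the proof does not close.
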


In this paper, we will also prove a corresponding result to the celebrated
Beale-Kato-Majda criterion for blow-up of vorticity solutions of the
deterministic Euler fluid equations.

\begin{theorem}
\label{BKM}Given $\omega_{0}\in W_{\sigma}^{2,2}\left(  \mathbb{T}%
^{3},\mathbb{R}^{3}\right)  $, if $\tau_{\max}<\infty$, then%
\[
\int_{0}^{\tau_{\max}}\left\Vert \omega\left(  t\right)  \right\Vert _{\infty
}dt=+\infty\,.
\]
In particular, $\lim\sup_{t\uparrow\tau_{\max}}\left\Vert \omega\left(
t\right)  \right\Vert _{\infty}=+\infty$ almost surely.\end{theorem}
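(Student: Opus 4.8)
The plan is to deduce the criterion from the blow-up alternative already contained in Theorem \ref{main theorem}, which asserts that $\tau_{\max}<\infty$ forces $\limsup_{t\uparrow\tau_{\max}}\left\Vert \omega(t)\right\Vert_{W^{2,2}}=+\infty$. Hence it suffices to prove the contrapositive: on the event $A=\{\tau_{\max}<\infty\}\cap\{\int_0^{\tau_{\max}}\left\Vert \omega(t)\right\Vert_\infty\,dt<\infty\}$ the norm $\left\Vert \omega(t)\right\Vert_{W^{2,2}}$ stays bounded up to $\tau_{\max}$ almost surely, which is incompatible with the blow-up above and forces $P(A)=0$. All estimates below would be run on the approximations used in the proof of Theorem \ref{main theorem} (where $\omega$ is smooth enough that, in particular, (\ref{striking est 2}) applies), localised by the stopping times $\tau_n\uparrow\tau_{\max}$ of Definition \ref{def maximal solution}, and then passed to the limit.

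First I would control the low norm. Applying It\^{o} to $\left\Vert \omega\right\Vert_{L^2}^2$ and using $\mathcal{L}_v\omega=[v,\omega]$ with $\operatorname{div}v=0$, the transport drift reduces, after the advective term $\langle(v\cdot\nabla)\omega,\omega\rangle$ integrates to zero, to $2\langle(\omega\cdot\nabla)v,\omega\rangle$, which is bounded by $2\left\Vert \omega\right\Vert_\infty\left\Vert \nabla v\right\Vert_{L^2}\left\Vert \omega\right\Vert_{L^2}\le C\left\Vert \omega\right\Vert_\infty\left\Vert \omega\right\Vert_{L^2}^2$ since $\left\Vert \nabla v\right\Vert_{L^2}\le C\left\Vert \omega\right\Vert_{L^2}$ by Biot--Savart (\ref{Biot Savart}). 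The noise contributes precisely the combination $\sum_k(\langle\mathcal{L}_{\xi_k}^2\omega,\omega\rangle+\langle\mathcal{L}_{\xi_k}\omega,\mathcal{L}_{\xi_k}\omega\rangle)$ to the drift, which by the striking estimate (\ref{striking est 1}) and (\ref{assump}) is at most $C\left\Vert \omega\right\Vert_{L^2}^2$. A stochastic Gr\"{o}nwall argument then gives, on $A$, that $\sup_{t<\tau_{\max}}\left\Vert \omega(t)\right\Vert_{L^2}<\infty$ almost surely.

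Next I would run the same scheme for $E(t)=\left\Vert \omega(t)\right\Vert_{W^{2,2}}^2$, whose leading part is $\left\Vert \Delta\omega\right\Vert_{L^2}^2$. By It\^{o}, the noise contributes exactly the left-hand side of (\ref{striking est 2}), which together with (\ref{assump}) is bounded by $C\,E$, so the stochasticity enters only linearly and is harmless; the martingale part $2\sum_k\langle\Delta\mathcal{L}_{\xi_k}\omega,\Delta\omega\rangle\,dB_t^k$ has, by Cauchy--Schwarz and the summability underlying (\ref{striking est 2}), quadratic-variation rate $\le C\,E^2$. The essential nonlinear term is $\langle\Delta\mathcal{L}_v\omega,\Delta\omega\rangle$; here I would use Kato--Ponce--type commutator estimates (the highest-derivative term cancelling by $\operatorname{div}v=0$) to bound it by $C\left\Vert \nabla v\right\Vert_\infty E$, and then invoke the logarithmic Sobolev inequality of \cite{BKM1984,MajdaBert},
\[
\left\Vert \nabla v\right\Vert_{\infty}\le C\big(1+\left\Vert \omega\right\Vert_{L^2}+\left\Vert \omega\right\Vert_\infty(1+\log^{+}E)\big),
\]
with $\left\Vert \omega\right\Vert_{L^2}$ already bounded by the previous step. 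This yields $dE\le C\big(1+\left\Vert \omega\right\Vert_\infty(1+\log^{+}E)\big)E\,dt+dM_t$.

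Finally I would pass to $L(t)=\log(e+E(t))$ and apply It\^{o} once more. Concavity makes the second-order correction favourable, and the bound $d\langle M\rangle_t\le C\,E^2\,dt$ gives $(e+E)^{-2}\,d\langle M\rangle_t\le C\,dt$, so the martingale part $(e+E)^{-1}\,dM_t$ of $dL$ has bounded quadratic-variation rate while the drift becomes $\le C\big(1+\left\Vert \omega\right\Vert_\infty(1+L)\big)\,dt$. A stochastic Gr\"{o}nwall lemma then shows that on $A$, where $\int_0^{\tau_{\max}}\left\Vert \omega\right\Vert_\infty\,dt<\infty$, the process $L$ (hence $E$) remains bounded up to $\tau_{\max}$ almost surely, contradicting Theorem \ref{main theorem}; thus $P(A)=0$, which is the assertion. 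The main obstacle I anticipate is the deterministic commutator/log-Sobolev estimate delivering the $\log^{+}E$ factor with exactly the right structure at the $W^{2,2}$ level, together with the bookkeeping needed to apply (\ref{striking est 2}) at the required regularity (via the approximations) so that the noise genuinely contributes only the linear term $C\,E$ and therefore survives the passage to $\log(e+E)$ without breaking the Gr\"{o}nwall closure.
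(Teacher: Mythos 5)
Your proposal is correct and follows essentially the same route as the paper's proof: It\^{o} formulas for $\left\Vert \omega\right\Vert_{L^{2}}^{2}$ and $\left\Vert \Delta\omega\right\Vert_{L^{2}}^{2}$ run on the regularized approximations, the estimates (\ref{striking est 1})--(\ref{striking est 2}) with (\ref{assump}) reducing the noise drift to a linear term, Lemma \ref{lemma GN estimate} together with the Beale--Kato--Majda logarithmic inequality (the paper's Lemma \ref{lemma plus}) for the nonlinear term, and finally the passage to $\log\left(  e+\left\Vert \omega\right\Vert_{2,2}^{2}\right)  $ with the exponential weight $e^{-C\int_{0}^{t}\left(  1+\left\Vert \omega\right\Vert_{\infty}\right)  ds}$ and a Burkholder--Davis--Gundy bound on the resulting martingale. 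The only difference is cosmetic: you phrase the conclusion as a contradiction on the event $A$ against the blow-up alternative of Theorem \ref{main theorem}, whereas the paper encodes the same dichotomy as the almost-sure equality of the stopping times $\tau^{1}$ and $\tau^{2}$, with your localization implicitly matching the paper's $\tau_{n}^{2}$ (and your quadratic-variation control resting on Lemma \ref{lemma GN estimate} and (\ref{Hp 3}), as in the paper's (\ref{qv1}), rather than on (\ref{striking est 2}) itself).
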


\begin{remark}
As in the deterministic case, Theorem \ref{BKM} can be used as a criterion for
testing whether a given numerical simulation has shown finite-time blow up.
Following \cite{Gi2008}, the classical Beale-Kato-Majda theorem implies that
algebraic singularities of the type $\|\omega\|_{\infty} \geq(t^{*} - t)^{-p}$
must have $p \geq1$. In our paper, we have shown that a corresponding BKM
result also applies for the stochastic Euler fluid equations; hence, the same
criterion applies here. In \cite{CoFeMa1996}, the $L^{\infty}$ condition in
the BKM theorem was reduced to $L^{p}$, for finite $p$, at the price of
imposing constraints on the direction of vorticity. We hope to obtain a
similar $L^{\infty}$  result for the stochastic 3D-Euler equation in future work.
\end{remark}

In sections \ref{sect uniqueness new} and \ref{uniq2} we prove uniqueness. The rest of
the paper will be devoted to proving {{}local} existence of the solution and Theorem \ref{BKM}.

\section{Proofs of the main results\label{sect proofs main results}}

\subsection{Local uniqueness of the solution of the stochastic 3D Euler
equation\label{sect uniqueness new}}

In the following proposition, we prove that any two local solutions of the
stochastic 3D Euler equation (\ref{eq Euler Ito}) that are defined up to the
\emph{same} stopping time must coincide. The proof hinges on the bound
(\ref{striking est 1}) and the assumption (\ref{assump}).

\begin{proposition}
\label{uniqueness proposition} Let $\tau$ be a stopping time and
$\omega^{\left(  1\right)  },\omega^{\left(  2\right)  }:[0,\tau
)\times\mathbb{T}^{3}\rightarrow\mathbb{R}^{3}$ be two solutions with paths of
class $C\left(  [0,\tau);W_{\sigma}^{2,2}\left(  \mathbb{T}^{3};\mathbb{R}%
^{3}\right)  \right)  $ that satisfy the stochastic 3D Euler equation
(\ref{eq Euler Ito}). Then $\omega^{\left(  1\right)  }=\omega^{\left(
2\right)  }$ on $[0,\tau).$
\end{proposition}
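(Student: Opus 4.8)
The plan is to prove uniqueness via an energy estimate on the difference $\omega := \omega^{(1)} - \omega^{(2)}$. Since both solutions satisfy the linear-looking transport equation \eqref{eq Euler Ito} driven by the \emph{same} noise, subtracting the two equations yields an It\^{o} equation for $\omega$ whose coefficients involve the velocities $v^{(1)} = -\operatorname{curl}\Delta^{-1}\omega^{(1)}$ and $v^{(2)}$ reconstructed via Biot--Savart. I would set $v := v^{(1)} - v^{(2)} = -\operatorname{curl}\Delta^{-1}\omega$, so that by \eqref{Biot Savart} one has $\|v\|_{W^{3,2}} \le C\|\omega\|_{W^{2,2}}$ and more usefully $\|v\|_{W^{1,2}} \le C\|\omega\|_{L^2}$. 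The difference then satisfies
\begin{equation}
d\omega + \mathcal{L}_{v^{(1)}}\omega\, dt + \mathcal{L}_{v}\omega^{(2)}\, dt + \sum_k \mathcal{L}_{\xi_k}\omega\, dB_t^k = \tfrac12\sum_k \mathcal{L}_{\xi_k}^2\omega\, dt, \label{plan-diff-eq}
\end{equation}
which is linear in $\omega$ but with a source term $\mathcal{L}_v\omega^{(2)}$ that also depends on $\omega$ through Biot--Savart.

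Next I would apply the It\^{o} formula to $\|\omega(t)\|_{L^2}^2$. The key cancellation is the standard energy balance between the Stratonovich correction and the quadratic variation of the martingale part: the $L^2$-inner product of $\omega$ with $\tfrac12\sum_k \mathcal{L}_{\xi_k}^2\omega$ (drift) plus one-half of the quadratic variation $\sum_k \langle \mathcal{L}_{\xi_k}\omega, \mathcal{L}_{\xi_k}\omega\rangle$ is controlled by the \emph{striking estimate} \eqref{striking est 1}, giving a bound $\sum_k C_k^{(0)}\|\omega\|_{L^2}^2$, which is finite by the summability assumption \eqref{assump}. For the transport term $\langle \omega, \mathcal{L}_{v^{(1)}}\omega\rangle$, since $v^{(1)}$ is divergence-free the transport part $(v^{(1)}\cdot\nabla)\omega$ integrates against $\omega$ to give zero, and the stretching term $-(\omega\cdot\nabla)v^{(1)}$ is bounded using $\|\nabla v^{(1)}\|_{\infty} \le C\|\omega^{(1)}\|_{W^{2,2}}$ (Sobolev embedding $W^{2,2}\hookrightarrow C^1$ in dimension three). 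The genuinely new term is the source $\langle \omega, \mathcal{L}_v \omega^{(2)}\rangle = \langle \omega, (v\cdot\nabla)\omega^{(2)} - (\omega^{(2)}\cdot\nabla)v\rangle$. Both pieces I would bound by $\|v\|_{L^2}\,\|\nabla\omega^{(2)}\|_{\infty}\|\omega\|_{L^2}$ and $\|\omega^{(2)}\|_{\infty}\|\nabla v\|_{L^2}\|\omega\|_{L^2}$; using $\|v\|_{L^2}, \|\nabla v\|_{L^2} \le C\|\omega\|_{L^2}$ from Biot--Savart and the $W^{2,2}$-regularity of $\omega^{(2)}$, these close as $C(t)\|\omega\|_{L^2}^2$ where $C(t)$ depends on $\|\omega^{(1)}\|_{W^{2,2}}, \|\omega^{(2)}\|_{W^{2,2}}$.

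Collecting terms, I would arrive at
\begin{equation}
d\|\omega(t)\|_{L^2}^2 \le K(t)\,\|\omega(t)\|_{L^2}^2\, dt + dM_t, \label{plan-gronwall}
\end{equation}
where $M_t$ is a local martingale and $K(t) := C\big(1 + \|\omega^{(1)}(t)\|_{W^{2,2}} + \|\omega^{(2)}(t)\|_{W^{2,2}}\big)$. Because both paths lie in $C([0,\tau);W^{2,2})$, the quantity $K(t)$ is a.s.\ locally bounded on $[0,\tau)$. To handle the local martingale rigorously I would localize by stopping times $\sigma_n = \inf\{t : \|\omega^{(1)}(t)\|_{W^{2,2}} + \|\omega^{(2)}(t)\|_{W^{2,2}} > n\} \wedge \tau_n$ with $\tau_n \uparrow \tau$, take expectations on $[0, t\wedge\sigma_n]$ to kill the martingale, and apply the (deterministic) Gr\"{o}nwall lemma to $\mathbb{E}\|\omega(t\wedge\sigma_n)\|_{L^2}^2$. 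Since $\omega(0)=0$, this forces $\omega \equiv 0$ on $[0,\sigma_n]$ for each $n$, and letting $n\to\infty$ (so $\sigma_n\uparrow\tau$) gives $\omega^{(1)}=\omega^{(2)}$ on $[0,\tau)$.

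The main obstacle I anticipate is controlling the stretching term $\langle \omega, (\omega^{(2)}\cdot\nabla)v\rangle$ in the source cleanly: naively it involves $\nabla v$, and one must be careful that the loss of one derivative in $v$ is exactly recovered by the Biot--Savart gain \eqref{Biot Savart}, so that everything is estimated in terms of $\|\omega\|_{L^2}$ and not a higher norm of the difference. A secondary technical point is the justification of the It\^{o} formula in $L^2$ for the infinite-dimensional noise, which requires the summability \eqref{assump} and the bound \eqref{striking est 1} to ensure the drift correction and quadratic variation are well-defined and that their combination is $L^2$-bounded; this is precisely where the two hypotheses flagged before the proposition statement come into play.
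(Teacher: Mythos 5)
Your skeleton is essentially the paper's own proof: subtract the two equations (you split the nonlinearity as $\mathcal{L}_{v^{(1)}}\omega+\mathcal{L}_{v}\omega^{(2)}$ while the paper writes $\mathcal{L}_{V}\omega^{(1)}+\mathcal{L}_{v^{(2)}}\Omega$, a purely symmetric choice), apply It\^{o} to $\left\Vert \omega\right\Vert _{L^{2}}^{2}$, control the Stratonovich-correction-plus-quadratic-variation sum by the striking estimate (\ref{striking est 1}) together with the summability assumption (\ref{assump}), kill the pure transport term using $\operatorname{div}v^{(1)}=0$, and close by Gr\"{o}nwall. Your localization by the stopping times $\sigma_{n}$ is a legitimate substitute for the paper's exponential-weight device $e^{Y_{t}}$; if anything it makes the vanishing of the expectation of the stopped stochastic integral more transparent, since on $[0,\sigma_{n}]$ the integrand is bounded.

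There is, however, one step that fails as written. You bound $\left\vert \left\langle \omega,(v\cdot\nabla)\omega^{(2)}\right\rangle \right\vert$ by $\left\Vert v\right\Vert _{L^{2}}\left\Vert \nabla\omega^{(2)}\right\Vert _{\infty}\left\Vert \omega\right\Vert _{L^{2}}$, but $\left\Vert \nabla\omega^{(2)}\right\Vert _{\infty}$ is \emph{not} controlled by $\left\Vert \omega^{(2)}\right\Vert _{W^{2,2}}$: in three dimensions $\omega^{(2)}\in W^{2,2}$ gives only $\nabla\omega^{(2)}\in W^{1,2}\subset L^{6}$, not $L^{\infty}$ (that would require roughly $\omega^{(2)}\in W^{5/2+\epsilon,2}$). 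So this term cannot be closed under the regularity the proposition assumes. The repair is exactly the paper's H\"{o}lder split with exponents $(2,4,4)$:
\[
\left\vert \left\langle \omega,(v\cdot\nabla)\omega^{(2)}\right\rangle \right\vert \leq\left\Vert \omega\right\Vert _{L^{2}}\left\Vert v\right\Vert _{L^{4}}\left\Vert \nabla\omega^{(2)}\right\Vert _{L^{4}}\leq C\left\Vert \omega\right\Vert _{L^{2}}^{2}\left\Vert \omega^{(2)}\right\Vert _{W^{2,2}},
\]
using $W^{1,2}\left(  \mathbb{T}^{3}\right)  \subset L^{4}$ applied to $v$ together with the Biot--Savart bound $\left\Vert v\right\Vert _{W^{1,2}}\leq C\left\Vert \omega\right\Vert _{L^{2}}$, and the embedding $W^{2,2}\left(  \mathbb{T}^{3}\right)  \subset W^{1,4}$ from (\ref{Sobolev embedding}) applied to $\omega^{(2)}$. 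A smaller slip of the same kind: your justification of $\left\Vert \nabla v^{(1)}\right\Vert _{\infty}\leq C\left\Vert \omega^{(1)}\right\Vert _{W^{2,2}}$ by ``$W^{2,2}\hookrightarrow C^{1}$'' is false as an embedding statement; the correct route is Biot--Savart, $v^{(1)}\in W^{3,2}$, followed by $W^{2,2}\subset C_{b}$ applied to $\nabla v^{(1)}$ (the conclusion you need is true, the cited embedding is not). With these corrections your argument closes and coincides with the proof in the paper.
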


\begin{proof}
We have that\footnote{The following identity and all the subsequent ones hold
as identities in $L^{2}\left(  \mathbb{T}^{3};\mathbb{R}^{3}\right)  $ and
represent the differential form of their integral version in the same way as
equation (\ref{eq Euler Ito}) is the differential form of
(\ref{integral eq Euler Ito}).}
\[
d\omega^{\left(  i\right)  }+\mathcal{L}_{v^{\left(  i\right)  }}%
\omega^{\left(  i\right)  }\ dt+\sum_{k=1}^{\infty}\mathcal{L}_{\xi_{k}}%
\omega^{\left(  i\right)  }\ dB_{t}^{k}=\frac{1}{2}\sum_{k=1}^{\infty
}\mathcal{L}_{\xi_{k}}^{2}\omega^{\left(  i\right)  }\ dt,\ \ \ i=1,2,
\]
where $\omega^{\left(  i\right)  }=\operatorname{curl}v^{\left(  i\right)  }$.
The difference $\Omega=\omega^{\left(  1\right)  }-\omega^{\left(  2\right)
}$\ satisfies%
\[
d\Omega+\mathcal{L}_{v^{\left(  1\right)  }}\omega^{\left(  1\right)
}\ dt-\mathcal{L}_{v^{\left(  2\right)  }}\omega^{\left(  2\right)  }%
\ dt+\sum_{k=1}^{\infty}\mathcal{L}_{\xi_{k}}\Omega\ dB_{t}^{k}=\frac{1}%
{2}\sum_{k=1}^{\infty}\mathcal{L}_{\xi_{k}}^{2}\Omega\ dt
\]
and thus (set also $V=v^{\left(  1\right)  }-v^{\left(  2\right)  }$)%
\[
d\Omega+\mathcal{L}_{V}\omega^{\left(  1\right)  }\ dt+\mathcal{L}_{v^{\left(
2\right)  }}\Omega\ dt+\sum_{k=1}^{\infty}\mathcal{L}_{\xi_{k}}\Omega
\ dB_{t}^{k}=\frac{1}{2}\sum_{k=1}^{\infty}\mathcal{L}_{\xi_{k}}^{2}%
\Omega\ dt.
\]
It follows%
\begin{align*}
&  \frac{1}{2}d\left\Vert \Omega\right\Vert _{L^{2}}^{2}+\left\langle
\mathcal{L}_{V}\omega^{\left(  1\right)  },\Omega\right\rangle
\ dt+\left\langle \mathcal{L}_{v^{\left(  2\right)  }}\Omega,\Omega
\right\rangle \ dt+\sum_{k=1}^{\infty}\left\langle \mathcal{L}_{\xi_{k}}%
\Omega,\Omega\right\rangle \ dB_{t}^{k}\\
&  =\frac{1}{2}\sum_{k=1}^{\infty}\left\langle \mathcal{L}_{\xi_{k}}^{2}%
\Omega,\Omega\right\rangle \ dt+\frac{1}{2}\sum_{k=1}^{\infty}\left\langle
\mathcal{L}_{\xi_{k}}\Omega,\mathcal{L}_{\xi_{k}}\Omega\right\rangle \ dt.
\end{align*}
We rewrite%
\begin{align*}
&  \left\langle \mathcal{L}_{V}\omega^{\left(  1\right)  },\Omega\right\rangle
+\left\langle \mathcal{L}_{v^{\left(  2\right)  }}\Omega,\Omega\right\rangle
\\
&  =\left\langle V\cdot\nabla\omega^{\left(  1\right)  },\Omega\right\rangle
-\left\langle \omega^{\left(  1\right)  }\cdot\nabla V,\Omega\right\rangle
+\left\langle v^{\left(  2\right)  }\cdot\nabla\Omega,\Omega\right\rangle
-\left\langle \Omega\cdot\nabla v^{\left(  2\right)  },\Omega\right\rangle
\end{align*}
and use the following inequalities:%
\[
\left\vert \left\langle V\cdot\nabla\omega^{\left(  1\right)  },\Omega
\right\rangle \right\vert \leq\left\Vert \Omega\right\Vert _{L^{2}}\left\Vert
V\right\Vert _{L^{4}}\left\Vert \nabla\omega^{\left(  1\right)  }\right\Vert
_{L^{4}}\leq C\left\Vert \omega^{\left(  1\right)  }\right\Vert _{W^{2,2}%
}\left\Vert \Omega\right\Vert _{L^{2}}^{2}%
\]%
\[
\left\vert \left\langle \omega^{\left(  1\right)  }\cdot\nabla V,\Omega
\right\rangle \right\vert \leq\left\Vert \Omega\right\Vert _{L^{2}}\left\Vert
\nabla V\right\Vert _{L^{2}}\left\Vert \omega^{\left(  1\right)  }\right\Vert
_{L^{\infty}}\leq C\left\Vert \omega^{\left(  1\right)  }\right\Vert
_{W^{2,2}}\left\Vert \Omega\right\Vert _{L^{2}}^{2}%
\]%
\[
\left\langle v^{\left(  2\right)  }\cdot\nabla\Omega,\Omega\right\rangle =0
\]%
\[
\left\vert \left\langle \Omega\cdot\nabla v^{\left(  2\right)  }%
,\Omega\right\rangle \right\vert \leq\left\Vert \Omega\right\Vert _{L^{2}}%
^{2}\left\Vert \nabla v^{\left(  2\right)  }\right\Vert _{L^{\infty}}%
\leq\left\Vert \Omega\right\Vert _{L^{2}}^{2}\left\Vert \nabla v^{\left(
2\right)  }\right\Vert _{W^{2,2}}\leq C\left\Vert \omega^{\left(  2\right)
}\right\Vert _{W^{2,2}}\left\Vert \Omega\right\Vert _{L^{2}}^{2}.
\]

Here and below we repeatedly use the Sobolev embedding theorems
\begin{equation}
W^{2,2}\left(  \mathbb{T}^{3}\right)  \subset C_{b}\left(  \mathbb{T}%
^{3}\right)  ,\qquad W^{2,2}\left(  \mathbb{T}^{3}\right)  \subset
W^{1,4}\left(  \mathbb{T}^{3}\right)  \label{Sobolev embedding}%
\end{equation}
and the fact that Biot-Savart map $\omega\mapsto v$ maps $W^{\alpha,p}$ into
$W^{\alpha+1,p}$ for all $\alpha\geq0$ and $p\in\left(  1,\infty\right)  $.
$W_{\sigma}^{s,2}$ into $W_{\sigma}^{s+1,2}$ for all $s\geq0$, see
(\ref{Biot Savart}). For instance, the sequences of inequalities used above in
the case of the terms $\left\Vert V\right\Vert _{L^{4}}$ and $\left\Vert
\nabla v^{\left(  2\right)  }\right\Vert _{L^{\infty}}$ were%
\[
\left\Vert V\right\Vert _{L^{4}}\leq C\left\Vert V\right\Vert _{W^{1,2}}\leq
C^{\prime}\left\Vert \Omega\right\Vert _{L^{2}}%
\]%
\[
\left\Vert \nabla v^{\left(  2\right)  }\right\Vert _{L^{\infty}}\leq
C\left\Vert \nabla v^{\left(  2\right)  }\right\Vert _{W^{2,2}}\leq C^{\prime
}\left\Vert v^{\left(  2\right)  }\right\Vert _{W^{3,2}}\leq C^{\prime\prime
}\left\Vert \omega^{\left(  2\right)  }\right\Vert _{W^{2,2}}.
\]
We omit similar detailed explanations sometimes below, when they are of the
same kind. 

Using also (\ref{assump}), we get%
\[
d\left\Vert \Omega\right\Vert _{L^{2}}^{2}+2\sum_{k=1}^{\infty}\left\langle
\mathcal{L}_{\xi_{k}}\Omega,\Omega\right\rangle \ dB_{t}^{k}\leq C\left(
1+\left\Vert \omega^{\left(  1\right)  }\right\Vert _{W^{2,2}}+\left\Vert
\omega^{\left(  2\right)  }\right\Vert _{W^{2,2}}\right)  \left\Vert
\Omega\right\Vert _{L^{2}}^{2}\ dt.
\]
Then%
\begin{align*}
d\left(  e^{Y}\left\Vert \Omega\right\Vert _{L^{2}}^{2}\right)   &
=-e^{Y}\left\Vert \Omega\right\Vert _{L^{2}}^{2}C\left(  1+\left\Vert
\omega^{\left(  1\right)  }\right\Vert _{W^{2,2}}+\left\Vert \omega^{\left(
2\right)  }\right\Vert _{W^{2,2}}\right)  +e^{Y}d\left\Vert \Omega\right\Vert
_{L^{2}}^{2}\\
&  \leq-2e^{Y}\sum_{k=1}^{\infty}\left\langle \mathcal{L}_{\xi_{k}}%
\Omega,\Omega\right\rangle \ dB_{t}^{k},
\end{align*}
where $Y$ is defined as
\[
Y_{t}:=-\int_{0}^{t}C\left(  1+\left\Vert \omega_{s}^{\left(  1\right)
}\right\Vert _{W^{2,2}}+\left\Vert \omega_{s}^{\left(  2\right)  }\right\Vert
_{W^{2,2}}\right)  ds.
\]
The inequality (recall $\Omega_{0}=0$)
\[
e^{Y_{\bar{\tau}}}\left\Vert \Omega_{\bar{\tau}}\right\Vert _{L^{2}}^{2}%
\leq-2\sum_{k=1}^{\infty}\int_{0}^{\bar{\tau}}e^{Y_{s}}\left\langle
\mathcal{L}_{\xi_{k}}\Omega_{s},\Omega_{s}\right\rangle \ dB_{s}^{k}%
\]
holds for every bounded stopping time $\bar{\tau}\in\left[  0,\tau\right]  $.
Hence we have%
\begin{align*}
e^{Y_{t\wedge\tau}}\left\Vert \Omega_{t\wedge\tau}\right\Vert _{L^{2}}^{2}  &
\leq-2\sum_{k=1}^{\infty}\int_{0}^{t\wedge\tau}e^{Y_{s}}\left\langle
\mathcal{L}_{\xi_{k}}\Omega_{s},\Omega_{s}\right\rangle \ dB_{s}^{k}\\
&  =-2\sum_{k=1}^{\infty}\int_{0}^{t}1_{s\leq\tau}e^{Y_{s}}\left\langle
\mathcal{L}_{\xi_{k}}\Omega_{s},\Omega_{s}\right\rangle \ dB_{s}^{k}.
\end{align*}
In expectation, denoted $\mathbb{E}$, this implies%
\[
\mathbb{E}\left[  e^{Y_{t\wedge\tau}}\left\Vert \Omega_{t\wedge\tau
}\right\Vert _{L^{2}}^{2}\right]  \leq0
\]
namely $\mathbb{E}\left[  e^{Y_{t\wedge\tau}}\left\Vert \Omega_{t\wedge\tau
}\right\Vert _{L^{2}}^{2}\right]  =0$ and thus, for every $t$,%
\[
e^{Y_{t\wedge\tau}}\left\Vert \Omega_{t\wedge\tau}\right\Vert _{L^{2}}%
^{2}=0\qquad\text{a.s.}%
\]
Since $Y_{t\wedge\tau}<\infty$ a.s., we get $\left\Vert \Omega_{t\wedge\tau
}\right\Vert _{L^{2}}^{2}=0$ a.s. and thus
\[
\omega_{t\wedge\tau}^{\left(  1\right)  }=\omega_{t\wedge\tau}^{\left(
2\right)  }\qquad\text{a.s.}%
\]
Recalling the continuity of trajectories, this implies
\[
\omega^{\left(  1\right)  }=\omega^{\left(  2\right)  }\qquad\text{a.s.}%
\]
The proof of the proposition is complete.
\end{proof}

\subsection{Existence of a maximal solution\label{uniq2}}

Given $R>0$, consider the modified Euler equations%
\begin{equation}
d\omega_{R}+\kappa_{R}\left(  \omega_{R}\right)  \mathcal{L}_{v_{R}}\omega
_{R}\ dt+\sum_{k=1}^{\infty}\mathcal{L}_{\xi_{k}}\omega_{R}dB_{t}^{k}=\frac
{1}{2}\sum_{k=1}^{\infty}\mathcal{L}_{\xi_{k}}^{2}\omega_{R}\ dt,\qquad
\omega_{R}|_{t=0}=\omega_{0}, \label{eq Euler Ito cutoff}%
\end{equation}
where $\omega_{R}=\operatorname{curl}v_{R}$. In (\ref{eq Euler Ito cutoff}),
$\kappa_{R}\left(  \omega\right)  :=f_{R}(\left\Vert \nabla v\right\Vert
_{\infty})$, where $f_{R}$ is a smooth function, equal to 1 on $\left[
0,R\right]  $, equal to 0 on $[R+1,\infty)$ and decreasing in $\left[
R,R+1\right]  $.

\begin{lemma}
\label{lemma cut off}Given $R>0$ and $\omega_{0}\in W_{\sigma}^{2,2}\left(
\mathbb{T}^{3},\mathbb{R}^{3}\right)  $, let $\omega_{R}:\Xi\times
\lbrack0,\infty)\times\mathbb{T}^{3}\rightarrow\mathbb{R}^{3}$ be a global
solution in $W^{2,2}$ of equation (\ref{eq Euler Ito cutoff}). Let
\[
\tau_{R}=\inf\left\{  t\geq0:\left\Vert \omega\right\Vert _{W^{2,2}}\geq
\frac{R}{C}\right\}  ,
\]
where $C$ is a constant chosen so that
\[
\left\Vert \nabla v\right\Vert _{\infty}\leq C\left\Vert \omega\right\Vert
_{W^{2,2}}.
\]
Finally, let $\omega:\Xi\times\left[  0,\tau_{R}\right]  \times\mathbb{T}%
^{3}\rightarrow\mathbb{R}^{3}$ be the restriction of $\omega_{R}$. Then
$\omega$ is a local solution in $W_{\sigma}^{2,2}$ of the stochastic 3D Euler
equations (\ref{eq Euler Ito})
\end{lemma}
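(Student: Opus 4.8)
The plan is to show that the cut-off factor $\kappa_{R}$ equals $1$ identically along the trajectory of $\omega_{R}$ up to and including the time $\tau_{R}$, so that on $\left[0,\tau_{R}\right]$ the modified equation (\ref{eq Euler Ito cutoff}) reduces \emph{verbatim} to the original equation (\ref{eq Euler Ito}). Once this is established, the remaining requirements of Definition \ref{def local solution} (continuity of trajectories, adaptedness, and the integral identity (\ref{integral eq Euler Ito})) follow directly from the corresponding properties of the global solution $\omega_{R}$ together with the fact that $\tau_{R}$ is a stopping time.

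First I would record the elementary but essential observation about $\tau_{R}$. Since $\omega_{R}$ is a global solution, its trajectories lie a.s. in $C\left([0,\infty);W_{\sigma}^{2,2}\right)$, so the map $t\mapsto\left\Vert\omega_{R}(t)\right\Vert_{W^{2,2}}$ is a.s. continuous. As $\tau_{R}$ is the first entrance time of this continuous process into the closed set $\left[R/C,\infty\right)$, we obtain
\[
\left\Vert\omega_{R}(t)\right\Vert_{W^{2,2}}\leq\frac{R}{C}\qquad\text{for all }t\in\left[0,\tau_{R}\right],
\]
with equality exactly at $t=\tau_{R}$. Combining this with the choice of $C$, namely $\left\Vert\nabla v_{R}\right\Vert_{\infty}\leq C\left\Vert\omega_{R}\right\Vert_{W^{2,2}}$ (itself a consequence of the Biot--Savart bound (\ref{Biot Savart}) and the Sobolev embedding (\ref{Sobolev embedding})), yields $\left\Vert\nabla v_{R}(t)\right\Vert_{\infty}\leq R$ on $\left[0,\tau_{R}\right]$. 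Because $f_{R}\equiv1$ on the closed interval $\left[0,R\right]$, it follows that
\[
\kappa_{R}\left(\omega_{R}(t)\right)=f_{R}\left(\left\Vert\nabla v_{R}(t)\right\Vert_{\infty}\right)=1\qquad\text{for all }t\in\left[0,\tau_{R}\right].
\]

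With $\kappa_{R}\equiv1$ on $\left[0,\tau_{R}\right]$, the drift term $\kappa_{R}(\omega_{R})\mathcal{L}_{v_{R}}\omega_{R}$ in (\ref{eq Euler Ito cutoff}) coincides with $\mathcal{L}_{v_{R}}\omega_{R}$ there, while the noise and It\^{o}-correction terms are untouched by the cut-off. Hence, for every bounded stopping time $\bar{\tau}\leq\tau_{R}$, the integral version of (\ref{eq Euler Ito cutoff}) is exactly the integral identity (\ref{integral eq Euler Ito}) for $\omega:=\omega_{R}|_{[0,\tau_{R}]}$. The trajectories of $\omega$ inherit the class $C\left([0,\tau_{R}];W_{\sigma}^{2,2}\right)$ from those of $\omega_{R}$, the process $\omega\left(t\wedge\tau_{R},\cdot\right)$ is adapted because $\omega_{R}$ is adapted and $\tau_{R}$ is a stopping time, and $\omega=\operatorname{curl}v_{R}$ is divergence free. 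Thus $\left(\tau_{R},\omega\right)$ satisfies all the conditions of Definition \ref{def local solution}.

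The argument is essentially bookkeeping, so there is no deep analytic obstacle; the one point that must be handled with care is the behaviour at the endpoint $t=\tau_{R}$. One has to verify that the cut-off is still equal to $1$ \emph{at} $\tau_{R}$, and not merely on the open interval $[0,\tau_{R})$. This is precisely why the threshold is taken to be $R/C$: the resulting bound $\left\Vert\nabla v_{R}(\tau_{R})\right\Vert_{\infty}\leq R$ lands in the region $\left[0,R\right]$ where $f_{R}$ is still identically $1$. Had the cut-off been made to switch off already at level $R$, the identity $\kappa_{R}(\omega_{R}(\tau_{R}))=1$ could fail and one would obtain a solution only on $[0,\tau_{R})$.
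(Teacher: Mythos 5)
Your proposal is correct and follows exactly the paper's argument: on $\left[0,\tau_{R}\right]$ one has $\left\Vert \nabla v\right\Vert _{\infty}\leq C\left\Vert \omega\right\Vert _{W^{2,2}}\leq R$, hence $\kappa_{R}\left(\omega_{R}\right)=1$ and the two equations coincide, which is all the paper's (one-line) proof says. Your additional care at the endpoint $t=\tau_{R}$ — noting that continuity gives $\left\Vert \omega_{R}(\tau_{R})\right\Vert _{W^{2,2}}=R/C$ and that $f_{R}$ is still $1$ at the level $R$ — is a worthwhile elaboration of the same idea, not a different route.
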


\begin{proof}
Obvious, because for $t\in\left[  0,\tau_{R}\right]  $ we have $\left\Vert
\nabla v\right\Vert _{\infty}\leq C\left\Vert \omega\right\Vert _{W^{2,2}}\leq
R$ and thus $\kappa_{R}\left(  \omega_{R}\right)  =1$, namely the equations
are the same.
\end{proof}

The following proposition is the cornerstone of the existence and uniqueness
of a maximal solution of the stochastic 3D Euler equation (\ref{eq Euler Ito})

\begin{proposition}
\label{Prop cut off}Given $R>0$ and $\omega_{0}\in W_{\sigma}^{2,2}\left(
\mathbb{T}^{3},\mathbb{R}^{3}\right)  $, there exists a global solution in
$W_{\sigma}^{2,2}$ of equation (\ref{eq Euler Ito cutoff}). Moreover, if
$\omega_{R}^{\left(  1\right)  },\omega_{R}^{\left(  2\right)  }:\Xi
\times\lbrack0,\infty)\times\mathbb{T}^{3}\rightarrow\mathbb{R}^{3}$ are two
global solutions in $W_{\sigma}^{2,2}$ of equation (\ref{eq Euler Ito cutoff}%
), then $\omega_{R}^{\left(  1\right)  }=\omega_{R}^{\left(  2\right)  }$.
\end{proposition}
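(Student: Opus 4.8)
The plan is to treat the two assertions of the proposition---global existence and pathwise uniqueness for the truncated equation \eqref{eq Euler Ito cutoff}---by separate arguments, with the cut-off factor $\kappa_R$ playing opposite roles in each. For existence the virtue of $\kappa_R$ is that it tames the only nonlinear term: whenever $\kappa_R(\omega_R)\neq0$ one has $\|\nabla v_R\|_\infty\le R+1$, so the transport term is controlled by a constant depending only on $R$, whereas the noise terms are linear and are controlled by the structural bounds \eqref{striking est 2} and \eqref{assump}. This is exactly what is needed to close an a priori estimate in $W^{2,2}$ that is global in time.

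First I would set up a Galerkin approximation, projecting \eqref{eq Euler Ito cutoff} onto the span of finitely many Fourier modes. The resulting finite-dimensional system has locally Lipschitz coefficients (the map $\omega\mapsto\|\nabla v\|_\infty$ is continuous and $f_R$ is smooth and bounded), hence a local strong solution, and the a priori bound below rules out explosion and yields global solutions $\omega_R^n$; being finite-dimensional, these are smooth enough to justify \eqref{striking est 2}, which requires $W^{4,2}$ regularity. The heart of the matter is the uniform-in-$n$ estimate, obtained by applying It\^o's formula to $\|\omega_R^n\|_{W^{2,2}}^2$ (equivalently to $\|\Delta\omega_R^n\|_{L^2}^2+\|\omega_R^n\|_{L^2}^2$): the combination of Stratonovich correction and quadratic variation of the noise is absorbed by \eqref{striking est 2} and summed using \eqref{assump}, while the truncated transport term is bounded by $C(R)\|\omega_R^n\|_{W^{2,2}}^2$ via a commutator (Kato--Ponce type) estimate together with $\|\nabla v_R^n\|_\infty\le R+1$ on the support of $\kappa_R$. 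A Gronwall argument with the Burkholder--Davis--Gundy inequality then gives the bound \eqref{bound 1}, uniform in $n$ on every finite interval. Combining this with the fractional-in-time Sobolev regularity of Subsection \ref{sect frac So reg}, the laws of $\omega_R^n$ are tight in $C([0,T];W^{2-\delta,2})$ (via the compact embedding $W^{2,2}\hookrightarrow\hookrightarrow W^{2-\delta,2}$); a Prokhorov--Skorokhod argument produces an a.s.\ convergent subsequence whose limit solves \eqref{eq Euler Ito cutoff}. Passage to the limit in the transport term uses strong convergence in $W^{2-\delta,2}$ (which for small $\delta$ controls $\|\nabla v^n-\nabla v\|_\infty$, hence $\kappa_R(\omega_R^n)\to\kappa_R(\omega_R)$ by continuity of $f_R$), the martingale terms are identified in the usual way, and pathwise uniqueness upgrades this to a solution on the original probability space via a Gy\"ongy--Krylov argument.

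For uniqueness I would imitate the energy computation of Proposition \ref{uniqueness proposition}. Writing $\Omega=\omega_R^{(1)}-\omega_R^{(2)}$ and $V=v_R^{(1)}-v_R^{(2)}$ and applying It\^o to $\|\Omega\|_{L^2}^2$, the noise contribution is again controlled by \eqref{striking est 1} and \eqref{assump}, and the drift difference $\kappa_R(\omega_R^{(1)})\mathcal{L}_{v_R^{(1)}}\omega_R^{(1)}-\kappa_R(\omega_R^{(2)})\mathcal{L}_{v_R^{(2)}}\omega_R^{(2)}$ is split as
\[
\kappa_R(\omega_R^{(1)})\bigl(\mathcal{L}_{v_R^{(1)}}\omega_R^{(1)}-\mathcal{L}_{v_R^{(2)}}\omega_R^{(2)}\bigr)+\bigl(\kappa_R(\omega_R^{(1)})-\kappa_R(\omega_R^{(2)})\bigr)\mathcal{L}_{v_R^{(2)}}\omega_R^{(2)}.
\]
The first group is estimated exactly as in Proposition \ref{uniqueness proposition}, the factor $\kappa_R(\omega_R^{(1)})$ supplying the bound $\|\nabla v_R^{(1)}\|_\infty\le R+1$ on its support, so that it contributes at most $C(1+\|\omega_R^{(1)}\|_{W^{2,2}}+\|\omega_R^{(2)}\|_{W^{2,2}})\|\Omega\|_{L^2}^2$; the random exponential integrating factor $Y$ of Proposition \ref{uniqueness proposition} then handles this term, after localizing by the stopping times at which $\|\omega_R^{(i)}\|_{W^{2,2}}$ first exceeds a level $M$.

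The main obstacle is the second group, the cut-off-difference term. Because $\kappa_R$ depends on $\omega_R^{(i)}$ only through the \emph{strong} norm $\|\nabla v_R^{(i)}\|_\infty$, the Lipschitz bound $|\kappa_R(\omega_R^{(1)})-\kappa_R(\omega_R^{(2)})|\le L\|\nabla V\|_\infty$ is governed by a norm of $V$ far stronger than the $L^2$ norm of $\Omega$ in which the estimate must close; a naive interpolation then yields a sublinear power of $\|\Omega\|_{L^2}$ that is incompatible with Gronwall's lemma (and raising the norm of the difference equation is blocked by the derivative loss in $\mathcal{L}_{v_R^{(2)}}\omega_R^{(2)}$). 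I expect the resolution to exploit the scalar nature of $\kappa_R$ together with a localization in which the regions where $\kappa_R$ is locally constant are reduced to already-settled equations: where $\|\nabla v_R^{(i)}\|_\infty<R$ both solutions obey the untruncated equation \eqref{eq Euler Ito} (cf.\ Lemma \ref{lemma cut off}) and agree by Proposition \ref{uniqueness proposition}, and where $\|\nabla v_R^{(i)}\|_\infty>R+1$ the equation is linear and coincidence is immediate. The delicate point, and the crux of the whole proposition, is to patch these across the transition layer $\|\nabla v_R\|_\infty\in[R,R+1]$ where $\kappa_R$ genuinely varies, using the continuity of the trajectories in $W^{2,2}$ and the Lipschitz regularity of $f_R$.
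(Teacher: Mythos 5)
Your existence half is a legitimate alternative to the paper's route: the paper regularizes \eqref{eq Euler Ito cutoff} with a strong dissipation $\nu\Delta^{5}$ (equation \eqref{regularized SPDE}), solves the regularized problem by a mild/semigroup fixed point (Lemma \ref{lemma regularized equation}), and then passes to the limit $\nu\to 0$ using the bounds \eqref{bound 1}--\eqref{bound 2}, tightness in $L^{2}(0,T;W^{\beta,2})\cap C_{w}([0,T];W^{2,2}_{\sigma})$, Skorokhod, and Gy\"ongy--Krylov; your Fourier--Galerkin scheme with the same uniform estimates and the same compactness machinery should work, provided you check that the projections (which commute with $\Delta$ but not with $\mathcal{L}_{\xi_{k}}$) do not destroy the cancellations behind \eqref{striking est 1}--\eqref{striking est 2} -- they do not, since $\|P_{n}\mathcal{L}_{\xi_{k}}f_{n}\|_{L^{2}}\leq\|\mathcal{L}_{\xi_{k}}f_{n}\|_{L^{2}}$ and $\langle P_{n}\mathcal{L}_{\xi_{k}}^{2}f_{n},f_{n}\rangle=\langle\mathcal{L}_{\xi_{k}}^{2}f_{n},f_{n}\rangle$, so the It\^o correction only shrinks. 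However, both your argument and the paper's invoke pathwise uniqueness of \eqref{eq Euler Ito cutoff} in the Gy\"ongy--Krylov step, so the whole proposition stands or falls with the uniqueness half -- and that is where your proposal has a genuine gap.

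You correctly isolate the cut-off-difference term as the obstruction, but your claim that closing the estimate in a stronger norm is ``blocked by the derivative loss in $\mathcal{L}_{v_{R}^{(2)}}\omega_{R}^{(2)}$'' is mistaken, and it is precisely what the paper does (Section \ref{sect uniqueness W22}): one estimates $\|\Omega\|_{L^{2}}^{2}+\|\Delta\Omega\|_{L^{2}}^{2}$. There is no uncontrollable loss of derivatives at that level. In $\langle\Delta\mathcal{L}_{v^{(2)}}\Omega,\Delta\Omega\rangle$ the top-order contribution $\langle v^{(2)}\cdot\nabla\Delta\Omega,\Delta\Omega\rangle$ vanishes by incompressibility and the remainder is bounded, via the commutator estimate of Lemma \ref{lemma GN estimate}, by $C\|\omega^{(2)}\|_{W^{2,2}}\|\Omega\|_{W^{2,2}}^{2}$; in $\langle\Delta\mathcal{L}_{V}\omega^{(1)},\Delta\Omega\rangle$ the field $V$ gains one derivative over $\Omega$ through Biot--Savart, giving the same bound. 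At this level the Lipschitz bound on the cut-off difference, $|K|\leq C\|\nabla V\|_{\infty}\leq C\|\Omega\|_{W^{2,2}}$, is of exactly the right order, and its companion factors $\langle\mathcal{L}_{v^{(1)}}\omega^{(1)},\Omega\rangle$ and $\langle\Delta\mathcal{L}_{v^{(1)}}\omega^{(1)},\Delta\Omega\rangle$ are bounded by $R$-dependent constants times $\|\Omega\|_{W^{2,2}}$, because the drift difference vanishes wherever the solutions leave the region controlled by the cut-off. The noise terms at the $\Delta$ level are absorbed by \eqref{striking est 2} together with \eqref{assump} (after a mollification/Yosida step to justify computing $\Delta\mathcal{L}_{\xi_{k}}^{2}\Omega$), and the exponential-weight Gr\"onwall argument of Proposition \ref{uniqueness proposition} then closes in $W^{2,2}$.

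Your proposed patching argument does not substitute for this. In the transition layer $\|\nabla v_{R}\|_{\infty}\in[R,R+1]$ the solutions solve neither the untruncated equation \eqref{eq Euler Ito} (so Proposition \ref{uniqueness proposition} and Lemma \ref{lemma cut off} do not apply) nor a linear equation, and you supply no estimate there -- you only name it as ``the crux.'' Worse, the three regimes are defined through each solution separately, so before uniqueness is known the two solutions need not occupy the same regime at the same (random) times, and the patching scheme cannot even be set up as a sequence of stopping-time intervals on which one of the settled cases applies. The missing ingredient is exactly the $W^{2,2}$-level energy estimate described above; without it, the uniqueness claim, and with it the existence claim via Gy\"ongy--Krylov, remains unproved.
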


We postpone the proof of Proposition \ref{Prop cut off} to the later sections.
For now let us show how it implies the existence of a maximal solution.

\begin{theorem}
\label{mtpart1}Given $\omega_{0}\in W_{\sigma}^{2,2}\left(  \mathbb{T}%
^{3},\mathbb{R}^{3}\right)  $, there exists a maximal solution $\left(
\tau_{\max},\omega\right)  $ of the stochastic 3D Euler equations
(\ref{eq Euler Ito}). Moreover, either $\tau_{\max}=\infty$ or $\lim
\sup_{t\uparrow\tau_{\max}}\left\Vert \tilde{\omega}\left(  t\right)
\right\Vert _{W^{2,2}}=+\infty.$
\end{theorem}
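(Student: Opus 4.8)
The plan is to build the maximal solution by patching together the global cut-off solutions supplied by Proposition \ref{Prop cut off}, then letting the cut-off level tend to infinity. First I would fix the increasing sequence $R_n=n$ and, for each $n$, invoke Proposition \ref{Prop cut off} to obtain the unique global solution $\omega_n:=\omega_{R_n}$ of the cut-off equation (\ref{eq Euler Ito cutoff}), together with the stopping time
\[
\tau_n:=\inf\left\{t\geq0:\left\Vert \omega_n(t)\right\Vert_{W^{2,2}}\geq n/C\right\},
\]
where $C$ is the Biot--Savart constant of Lemma \ref{lemma cut off}. By Lemma \ref{lemma cut off}, the restriction of $\omega_n$ to $[0,\tau_n]$ is a local solution of the genuine stochastic Euler equation (\ref{eq Euler Ito}), since $\kappa_{R_n}(\omega_n)=1$ on that interval.

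The heart of the argument is a compatibility claim: for every $n$ one has $\tau_n\leq\tau_{n+1}$ and $\omega_n=\omega_{n+1}$ on $[0,\tau_n]$ almost surely. To prove it I would introduce the joint stopping time
\[
\sigma:=\inf\left\{t\geq0:\left\Vert \omega_n(t)\right\Vert_{W^{2,2}}\geq n/C \text{ or } \left\Vert \omega_{n+1}(t)\right\Vert_{W^{2,2}}\geq n/C\right\}.
\]
On $[0,\sigma)$ both processes satisfy $\left\Vert \nabla v\right\Vert_\infty\leq C\left\Vert \omega\right\Vert_{W^{2,2}}<n=R_n<R_{n+1}$, so both cut-off factors $\kappa_{R_n}(\omega_n)$ and $\kappa_{R_{n+1}}(\omega_{n+1})$ equal $1$ and both $\omega_n,\omega_{n+1}$ solve the uncut equation (\ref{eq Euler Ito}) on $[0,\sigma)$. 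Local uniqueness (Proposition \ref{uniqueness proposition}), applied with the stopping time $\sigma$, then forces $\omega_n=\omega_{n+1}$ on $[0,\sigma)$, hence on $[0,\sigma]$ by path continuity. Because the two processes coincide up to $\sigma$, the two thresholds are reached simultaneously, which identifies $\sigma=\tau_n$; and since $\left\Vert \omega_{n+1}(\tau_n)\right\Vert_{W^{2,2}}=n/C<(n+1)/C$, the process $\omega_{n+1}$ has not yet hit its own threshold, giving $\tau_n\leq\tau_{n+1}$ with the claimed agreement on $[0,\tau_n]$.

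With compatibility established I would set $\tau_{\max}:=\lim_{n\to\infty}\tau_n=\sup_n\tau_n$ and define $\omega$ on $[0,\tau_{\max})$ by $\omega(t):=\omega_n(t)$ whenever $t\in[0,\tau_n]$; this is unambiguous by the compatibility just proved, adapted, and of class $C([0,\tau_n];W^{2,2}_\sigma)$ on each piece, so that for $t<\tau_{\max}$ it lies in $[0,\tau_N)$ for some $N$ and is well defined. Since $\omega_0$ has finite $W^{2,2}$ norm and the paths are continuous, $\tau_n>0$ for all large $n$, so $P(\tau_{\max}>0)=1$; and Lemma \ref{lemma cut off} shows each $(\tau_n,\omega)$ is a local solution of (\ref{eq Euler Ito}). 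Thus $(\tau_{\max},\omega)$ meets properties i) and ii) of Definition \ref{def maximal solution}.

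Finally, for the blow-up alternative I would argue by contradiction. On the event $\{\tau_{\max}<\infty\}\cap\{\limsup_{t\uparrow\tau_{\max}}\left\Vert \omega(t)\right\Vert_{W^{2,2}}<\infty\}$ the quantity $M:=\sup_{t<\tau_{\max}}\left\Vert \omega(t)\right\Vert_{W^{2,2}}$ is finite (continuity on each $[0,\tau_n]$ together with a finite limsup near $\tau_{\max}$); choosing $n$ with $n/C>M$ gives $\left\Vert \omega_n(\tau_n)\right\Vert_{W^{2,2}}=\left\Vert \omega(\tau_n)\right\Vert_{W^{2,2}}\leq M<n/C$, contradicting the fact that $\left\Vert \omega_n(\tau_n)\right\Vert_{W^{2,2}}=n/C$ whenever $\tau_n<\infty$. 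Hence almost surely either $\tau_{\max}=\infty$ or $\limsup_{t\uparrow\tau_{\max}}\left\Vert \omega(t)\right\Vert_{W^{2,2}}=+\infty$, and this same boundedness-versus-blow-up dichotomy yields the maximality clause: a competing pair $(\tau^{\prime},\omega^{\prime})$ with continuous $W^{2,2}$ paths that agrees with $\omega$ below $\tau^{\prime}\wedge\tau_{\max}$ cannot have $\tau^{\prime}>\tau_{\max}$ on a positive-probability set, since there $\omega^{\prime}=\omega$ would stay $W^{2,2}$-bounded up to the finite $\tau_{\max}$, contradicting the blow-up of $\omega$. I expect the compatibility step --- correctly choosing the joint stopping time $\sigma$ so that the uncut equation governs both cut-off solutions and Proposition \ref{uniqueness proposition} applies --- to be the main obstacle; the gluing and the blow-up dichotomy are then essentially bookkeeping built on that step.
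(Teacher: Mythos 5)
Your proposal is correct and follows essentially the same route as the paper: restrict the global cut-off solutions of Proposition \ref{Prop cut off} via Lemma \ref{lemma cut off}, glue them along the increasing hitting times $\tau_{n}$, set $\tau_{\max}=\lim_{n\rightarrow\infty}\tau_{n}$, and derive both the blow-up dichotomy and maximality from the fact that $\left\Vert \omega_{n}\left(  \tau_{n}\right)  \right\Vert _{W^{2,2}}=n/C$ whenever $\tau_{n}<\infty$. The only difference is one of detail: where the paper settles the consistency of the gluing with the words ``by uniqueness,'' you spell out the joint stopping time $\sigma$ on which both cut-off factors equal $1$ so that Proposition \ref{uniqueness proposition} genuinely applies --- a worthwhile elaboration of the same idea, not a different argument.
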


\begin{proof}
Choose $R=n$ in Lemma \ref{lemma cut off}, then $\left(  \tau_{n},\omega
_{n}\right)  $ is a local solution in $W_{\sigma}^{2,2}$ of the stochastic 3D
Euler equations (\ref{eq Euler Ito}). Moreover, define $\tau_{\max}%
:=\lim_{n\rightarrow\infty}\tau_{n}$ and define $\omega$ as $\omega
|_{[0,\tau_{n})}:=\omega_{Cn}|_{[0,\tau_{n})}$. By uniqueness $\omega
_{m}|_{[0,\tau_{n})}:=\omega_{n}|_{[0,\tau_{n})}$ for any $m\geq n$. So
$\omega$ is consistently defined.

The statement that either $\tau_{\max}=\infty$ or $\lim\sup_{t\uparrow
\tau_{\max}}\left\Vert \tilde{\omega}\left(  t\right)  \right\Vert _{W^{2,2}%
}=+\infty$~is obvious: if $\tau_{\max}<\infty,$ then by the continuity of
$\tilde{\omega}$ on $[0,\tau_{\max})$, there exists some random times
$\tilde{\tau}_{n}<\tau_{n}$ such that $\tau_{n}-\tilde{\tau}_{n}\leq\frac
{1}{n}$ and that $\left\Vert \tilde{\omega}\left(  \tilde{\tau}_{n}\right)
\right\Vert _{W^{2,2}}\geq\frac{n-1}{C}$. Then
\[
\lim\sup_{t\uparrow\tau_{\max}}\left\Vert \tilde{\omega}\left(  t\right)
\right\Vert _{W^{2,2}}\geq\lim\sup_{n\uparrow\infty}\left\Vert \tilde{\omega
}\left(  \tilde{\tau}_{n}\right)  \right\Vert _{W^{2,2}}=\infty.
\]

We prove by contradiction that $(\tau_{\max},\omega)$ is a maximal solution.
Assume that there exists a pair $\left(  \tau^{\prime},\omega^{\prime}\right)
$ such that $\omega^{\prime}=\omega$ on $[0,\tau^{\prime}\wedge\tau_{\max}),$
and $\tau^{\prime}>\tau_{\max}$ with positive probability. This can only
happen if $\tau_{\max}<\infty$, therefore by the continuity of $\omega
^{\prime}$ on $[0,\tau^{\prime})~$on the set $\left\{  \tau^{\prime}%
>\tau_{\max}\right\}  $
\[
\infty=\lim\sup_{n\uparrow\infty}\left\Vert \tilde{\omega}\left(  \Tilde{\tau
}_{n}\right)  \right\Vert _{W^{2,2}}=\lim\sup_{n\uparrow\infty}\left\Vert
\tilde{\omega}^{\prime}\left(  \tilde{\tau}_{n}\right)  \right\Vert _{W^{2,2}%
}=\left\Vert \tilde{\omega}^{\prime}\left(  \tau_{\max}\right)  \right\Vert
_{W^{2,2}}<\infty,
\]
which leads to a contradiction. Hence, necessarily, $\tau^{\prime}\leq\tau$
$P$-almost surely, therefore $\left(  \tau,\omega\right)  $ is a maximal solution.
\end{proof}

\subsection{Uniqueness of the maximal solution\label{sect uniqueness W22}}

Let us start by justifying the uniqueness of the solution truncated Euler
equation (\ref{eq Euler Ito cutoff}). The proof is similar with that of
Proposition \ref{uniqueness proposition} so we only sketch it here. Let
$\omega_{R}^{\left(  1\right)  },\omega_{R}^{\left(  2\right)  }:\Xi
\times\lbrack0,\infty)\times\mathbb{T}^{3}\rightarrow\mathbb{R}^{3}$ are two
global solutions in $W_{\sigma}^{2,2}$ of equation (\ref{eq Euler Ito cutoff}%
). We preserve the same notation as in the proof of Proposition
\ref{uniqueness proposition}, i.e. denote by $\Omega=\omega_{R}^{\left(
1\right)  }-\omega_{R}^{\left(  2\right)  }$\ and $V=v_{R}^{\left(  1\right)
}-v_{R}^{\left(  2\right)  }$. We also assume that the truncation function
$f_{R}$ is Lipschitz and we will denote by $K_{R}$ the
quantity
\[
K_{R}=f_{R}(||\nabla v_{R}^{\left(  1\right)  }||_{\infty})-f_{R}(|\nabla
v_{R}^{\left(  2\right)  }||_{\infty})
\]
and observe that,\footnote{Here and throughout the paper, we use the standard
notation $C$ for generic constants, whose value may differ from case to
case.}
\begin{align*}
\left\vert K_{R}\right\vert  &  \leq C\left\Vert \nabla v_{R}^{\left(
1\right)  }-\nabla v_{R}^{\left(  2\right)  }\right\Vert _{\infty}\\
&  =C\left\Vert \nabla V\right\Vert _{\infty}\leq C\left\Vert \Omega
\right\Vert _{2,2}^{2}.
\end{align*}

To simplify notation we will omit the dependence on
$R$ in the following.

We are looking to prove uniqueness using the $W^{2,2}$-topology, therefore we
need to estimate the sum $\left\Vert \Omega\right\Vert _{L^{2}}^{2}+\left\Vert
\Delta\Omega\right\Vert _{L^{2}}^{2}$.\ Then%
\[
d\Omega+\Phi\ dt+\sum_{k=1}^{\infty}\mathcal{L}_{\xi_{k}}\Omega\ dB_{t}%
^{k}=\frac{1}{2}\sum_{k=1}^{\infty}\mathcal{L}_{\xi_{k}}^{2}\Omega\ dt,
\]
where
\[
\Phi:=\kappa\left(  \omega^{\left(  1\right)  }\right)  \mathcal{L}%
_{v^{\left(  1\right)  }}\omega^{\left(  1\right)  }\ -\kappa\left(
\omega^{\left(  2\right)  }\right)  \mathcal{L}_{v^{\left(  2\right)  }}%
\omega^{\left(  2\right)  }%
\]
and thus%
\[
d\Omega+\left\langle \Phi,\Omega\right\rangle \ dt+\sum_{k=1}^{\infty
}\mathcal{L}_{\xi_{k}}\Omega\ dB_{t}^{k}=\frac{1}{2}\sum_{k=1}^{\infty
}\mathcal{L}_{\xi_{k}}^{2}\Omega\ dt.
\]
Then
\[
\frac{1}{2}d\left\Vert \Omega\right\Vert _{L^{2}}^{2}+\left\langle \Phi
,\Omega\right\rangle \ dt+\sum_{k=1}^{\infty}\left\langle \mathcal{L}_{\xi
_{k}}\Omega,\Omega\right\rangle \ dB_{t}^{k}=\frac{1}{2}\sum_{k=1}^{\infty
}\left\langle \mathcal{L}_{\xi_{k}}^{2}\Omega,\Omega\right\rangle
\ dt+\frac{1}{2}\sum_{k=1}^{\infty}\left\langle \mathcal{L}_{\xi_{k}}%
\Omega,\mathcal{L}_{\xi_{k}}\Omega\right\rangle \ dt.
\]
On the set $\tau^{\left(  2\right)  }\leq\tau^{\left(  1\right)  }$ observe
that $\Phi$ is $0$ if $\left\Vert \omega^{\left(  1\right)  }\right\Vert
_{W^{2,2}}\geq\frac{R}{C}$. It follows that, on this set there exists a
constant $c_{R}$ such that (recall that $0\leq\kappa\leq1$) \
\begin{align*}
\left\vert \left\langle \Phi,\Omega\right\rangle \right\vert  &  =\left\vert
K\left\langle \mathcal{L}_{v^{\left(  1\right)  }}\omega^{\left(  1\right)
},\Omega\right\rangle \ +\kappa\left(  \omega^{\left(  2\right)  }\right)
\left\langle \mathcal{L}_{V}\omega^{\left(  1\right)  },\Omega\right\rangle
\ +\kappa\left(  \omega^{\left(  2\right)  }\right)  \left\langle
\mathcal{L}_{v^{\left(  2\right)  }}\Omega,\Omega\right\rangle \right\vert
\ \\
&  \leq c_{R}\left\Vert \Omega\right\Vert _{W^{2,2}}^{2}+\left\vert
\left\langle \mathcal{L}_{V}\omega^{\left(  1\right)  },\Omega\right\rangle
\ \right\vert +\left\vert \left\langle \mathcal{L}_{v^{\left(  2\right)  }%
}\Omega,\Omega\right\rangle \right\vert
\end{align*}
and, similar to the proof of Proposition \ref{uniqueness proposition} we
deduced that
\begin{equation}
d\left\Vert \Omega\right\Vert _{L^{2}}^{2}+2\sum_{k=1}^{\infty}\left\langle
\mathcal{L}_{\xi_{k}}\Omega,\Omega\right\rangle \ dB_{t}^{k}\leq C\left(
1+\left\Vert \omega^{\left(  1\right)  }\right\Vert _{W^{2,2}}+\left\Vert
\omega^{\left(  2\right)  }\right\Vert _{W^{2,2}}\right)  \left\Vert
\Omega\right\Vert _{W^{2,2}}\ dt. \label{interm1}%
\end{equation}
Similarly, on the set $\tau^{\left(  2\right)  }\leq\tau^{\left(  1\right)  }$
observe that $\Phi$ vanishes, if $\left\Vert \omega^{\left(  2\right)
}\right\Vert _{W^{2,2}}\geq\frac{R}{C}$ and \ref{interm1} holds true, as seen by
observing that there exists a constant $c_{R}$ such that \
\begin{align*}
\left\vert \left\langle \Phi,\Omega\right\rangle \right\vert  &  =\left\vert
K\left\langle \mathcal{L}_{v^{\left(  2\right)  }}\omega^{\left(  2\right)
},\Omega\right\rangle \ +\kappa\left(  \omega^{\left(  1\right)  }\right)
\left\langle \mathcal{L}_{V}\omega^{\left(  2\right)  },\Omega\right\rangle
\ +\kappa\left(  \omega^{\left(  1\right)  }\right)  \left\langle
\mathcal{L}_{v^{\left(  1\right)  }}\Omega,\Omega\right\rangle \ \right\vert
\\
&  \leq c_{R}\left\Vert \Omega\right\Vert _{W^{2,2}}^{2}+\left\vert
\left\langle \mathcal{L}_{V}\omega^{\left(  2\right)  },\Omega\right\rangle
\ \right\vert +\left\vert \left\langle \mathcal{L}_{v^{\left(  1\right)  }%
}\Omega,\Omega\right\rangle \right\vert .
\end{align*}
Next we have
\[
d\Delta\Omega+\ \Delta\Phi dt+\sum_{k=1}^{\infty}\Delta\mathcal{L}_{\xi_{k}%
}\Omega\ dB_{t}^{k}=\frac{1}{2}\sum_{k=1}^{\infty}\Delta\mathcal{L}_{\xi_{k}%
}^{2}\Omega\ dt
\]
from which we deduce that%
\[
\frac{1}{2}d\left\Vert \Delta\Omega\right\Vert _{L^{2}}^{2}+\left\langle
\Delta\Phi,\Delta\Omega\right\rangle +\sum_{k=1}^{\infty}\left\langle
\Delta\mathcal{L}_{\xi_{k}}\Omega,\Delta\Omega\right\rangle \ dB_{t}^{k}%
=\frac{1}{2}\sum_{k=1}^{\infty}\left\langle \Delta\mathcal{L}_{\xi_{k}}%
^{2}\Omega,\Delta\Omega\right\rangle \ dt+\frac{1}{2}\sum_{k=1}^{\infty
}\left\langle \Delta\mathcal{L}_{\xi_{k}}\Omega,\Delta\mathcal{L}_{\xi_{k}%
}\Omega\right\rangle \ dt.
\]
From the Lemma \ref{lemma GN estimate} we have%
\begin{align*}
\left\vert \left\langle \Delta\mathcal{L}_{v^{\left(  2\right)  }}%
\Omega,\Delta\Omega\right\rangle \right\vert  &  \leq C\left\Vert \nabla
v^{\left(  2\right)  }\right\Vert _{L^{\infty}}\left\Vert \Omega\right\Vert
_{W^{2,2}}^{2}+C\left\Vert \Omega\right\Vert _{L^{\infty}}\left\Vert \nabla
v^{\left(  2\right)  }\right\Vert _{W^{2,2}}\left\Vert \Omega\right\Vert
_{W^{2,2}}\\
&  \leq C\left\Vert \omega^{\left(  2\right)  }\right\Vert _{W^{2,2}%
}\left\Vert \Omega\right\Vert _{W^{2,2}}^{2}.
\end{align*}
Moreover, by similar arguments,
\begin{align*}
\left\vert \left\langle \Delta\mathcal{L}_{V}\omega^{\left(  1\right)
},\Delta\Omega\right\rangle \right\vert  &  \leq C\left\Vert \nabla
V\right\Vert _{L^{\infty}}\left\Vert \omega^{\left(  1\right)  }\right\Vert
_{W^{2,2}}\left\Vert \Omega\right\Vert _{W^{2,2}}+C\left\Vert \omega^{\left(
1\right)  }\right\Vert _{L^{\infty}}\left\Vert \nabla V\right\Vert _{W^{2,2}%
}\left\Vert \Omega\right\Vert _{W^{2,2}}\\
&  \leq C\left\Vert \nabla V\right\Vert _{W^{2,2}}\left\Vert \omega^{\left(
1\right)  }\right\Vert _{W^{2,2}}\left\Vert \Omega\right\Vert _{W^{2,2}}\\
&  \leq C\left\Vert \omega^{\left(  1\right)  }\right\Vert _{W^{2,2}%
}\left\Vert \Omega\right\Vert _{W^{2,2}}^{2}.
\end{align*}
Similar estimates hold true for $\left\vert \left\langle \Delta\mathcal{L}%
_{V}\omega^{\left(  2\right)  },\Delta\Omega\right\rangle \ \right\vert $ and
$\left\vert \left\langle \Delta\mathcal{L}_{v^{\left(  1\right)  }}%
\Omega,\Delta\Omega\right\rangle \right\vert $. Next, as above, on the set
$\tau^{\left(  2\right)  }\leq\tau^{\left(  1\right)  }$ observe there exists
a constant $c_{R}$ such that
\begin{align*}
\left\vert K\left\langle \Delta\mathcal{L}_{v^{\left(  1\right)  }}%
\omega^{\left(  1\right)  },\Delta\Omega\right\rangle \right\vert  &  \leq
C\left\Vert \nabla v^{\left(  1\right)  }\right\Vert _{L^{\infty}}\left\Vert
\omega^{\left(  1\right)  }\right\Vert _{W^{2,2}}\left\Vert \Omega\right\Vert
_{W^{2,2}}+C\left\Vert \omega^{\left(  1\right)  }\right\Vert _{L^{\infty}%
}\left\Vert \nabla v^{\left(  1\right)  }\right\Vert _{W^{2,2}}\left\Vert
\Omega\right\Vert _{W^{2,2}}\\
&  \leq c_{R}\left\Vert \Omega\right\Vert _{W^{2,2}}^{2}.
\end{align*}
Similarly, on the set $\tau^{\left(  2\right)  }\leq\tau^{\left(  1\right)  }%
$,
\[
\left\vert K\left\langle \Delta\mathcal{L}_{v^{\left(  1\right)  }}%
\omega^{\left(  1\right)  },\Delta\Omega\right\rangle \right\vert \leq
c_{R}\left\Vert \Omega\right\Vert _{W^{2,2}}^{2}.
\]
Summarizing, we deduce that%
\begin{align*}
&  \frac{1}{2}d\left\Vert \Delta\Omega\right\Vert _{L^{2}}^{2}+\sum
_{k=1}^{\infty}\left\langle \Delta\mathcal{L}_{\xi_{k}}\Omega,\Delta
\Omega\right\rangle \ dB_{t}^{k}\leq C\left(  1+\left\Vert \omega^{\left(
1\right)  }\right\Vert _{W^{2,2}}+\left\Vert \omega^{\left(  2\right)
}\right\Vert _{W^{2,2}}\right)  \left\Vert \Omega\right\Vert _{W^{2,2}}^{2}\\
&  +\frac{1}{2}\sum_{k=1}^{\infty}\left\langle \Delta\mathcal{L}_{\xi_{k}}%
^{2}\Omega,\Delta\Omega\right\rangle \ dt+\frac{1}{2}\sum_{k=1}^{\infty
}\left\langle \Delta\mathcal{L}_{\xi_{k}}\Omega,\Delta\mathcal{L}_{\xi_{k}%
}\Omega\right\rangle \ dt.
\end{align*}
It is then sufficient to repeat the argument of the proof of Proposition
\ref{uniqueness proposition} to control $\left\Vert \Omega\right\Vert _{L^{2}%
}^{2}+\left\Vert \Delta\Omega\right\Vert _{L^{2}}^{2}$ and obtain the
uniqueness of the truncated Euler equation. The computation required here
requires more regularity in space than what we have for our solutions (we
have to compute, although only transiently, $\Delta\mathcal{L}_{\xi_{k}}%
^{2}\Omega$). In order to make the computation rigorous one has to regularize
solutions by mollifiers or Yosida approximations and do the computations on
the regularizations. In this process, commutators will appear and one has to check at the end
that they converge to zero. The details are tedious, but straightforward and
we do not write all of them here.

We are now ready to prove the general uniqueness result contained in Theorem
\ref{main theorem}. More precisely we prove the following

\begin{theorem}
\label{mtpart2}Let $\omega_{0}\in W_{\sigma}^{2,2}\left(  \mathbb{T}%
^{3},\mathbb{R}^{3}\right)  $ and $\left(  \tau_{\max},\omega\right)  $ be the
maximal solution of the stochastic 3D Euler equations (\ref{eq Euler Ito})
introduced in Theorem \ref{mtpart1}. Moreover, let $\left(  \tau,\tilde
{\omega}\right)  $ be another maximal solutions of the same equation with the
same initial condition $\omega_{0}\in W_{\sigma}^{2,2}\left(  \mathbb{T}%
^{3},\mathbb{R}^{3}\right)  $. Then necessarily $\tau=\tau_{\max}$ and
$\omega=\omega^{\prime}$on $[0,\tau_{\max})$.
\end{theorem}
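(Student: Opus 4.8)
The plan is to reduce everything to the same-stopping-time uniqueness already proved in Proposition \ref{uniqueness proposition}, by localizing with exit times, and then to match the two lifespans using Definition \ref{def maximal solution} symmetrically. Write $(\tau_{\max},\omega)$ for the maximal solution built in Theorem \ref{mtpart1} and $(\tau,\tilde{\omega})$ for the competitor, both issued from $\omega_{0}$. With $C$ the constant of Lemma \ref{lemma cut off}, set
\[
\tau_{n}^{(1)}=\inf\{t\ge 0:\|\omega(t)\|_{W^{2,2}}\ge n/C\},\qquad
\tau_{n}^{(2)}=\inf\{t\ge 0:\|\tilde{\omega}(t)\|_{W^{2,2}}\ge n/C\},
\]
and $\sigma_{n}=\tau_{n}^{(1)}\wedge\tau_{n}^{(2)}$. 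Each is a stopping time, since both processes are adapted with continuous $W^{2,2}$-paths, and $\tau_{n}^{(i)}$ is nondecreasing in $n$.

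First I would identify the limits of these exit times. By the construction in Theorem \ref{mtpart1}, $\tau_{n}^{(1)}\uparrow\tau_{\max}$. For the competitor, each $\tau_{n}^{(2)}\le\tau$, so $\tau_{\infty}:=\lim_{n}\tau_{n}^{(2)}\le\tau$; were $\tau_{\infty}<\tau$ on a set of positive probability, continuity of $\tilde{\omega}$ on the compact interval $[0,\tau_{\infty}]\subset[0,\tau)$ would bound $\|\tilde{\omega}\|_{W^{2,2}}$ there, contradicting $\|\tilde{\omega}(\tau_{n}^{(2)})\|_{W^{2,2}}=n/C\to\infty$. Hence $\tau_{n}^{(2)}\uparrow\tau$, and consequently $\sigma_{n}\uparrow\tau_{\max}\wedge\tau$ (monotone limits commute with the minimum).

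Next, on each open interval $[0,\sigma_{n})$ both $\omega$ and $\tilde{\omega}$ have paths in $C([0,\sigma_{n});W_{\sigma}^{2,2})$ and satisfy equation (\ref{eq Euler Ito}) up to the \emph{same} stopping time $\sigma_{n}$; Proposition \ref{uniqueness proposition} therefore gives $\omega=\tilde{\omega}$ on $[0,\sigma_{n})$. Letting $n\to\infty$ and using $\sigma_{n}\uparrow\tau_{\max}\wedge\tau$ yields $\omega=\tilde{\omega}$ on $[0,\tau_{\max}\wedge\tau)$.

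Finally, to conclude $\tau=\tau_{\max}$ I would invoke Definition \ref{def maximal solution} twice. The pair $(\tau_{\max},\omega)$ satisfies conditions (i)--(ii) and agrees with $\tilde{\omega}$ on $[0,\tau_{\max}\wedge\tau)$, so maximality of $(\tau,\tilde{\omega})$ forces $\tau_{\max}\le\tau$; exchanging the two solutions and using maximality of $(\tau_{\max},\omega)$ gives $\tau\le\tau_{\max}$. Hence $\tau=\tau_{\max}$ almost surely and, since then $\tau_{\max}\wedge\tau=\tau_{\max}$, the identity $\omega=\tilde{\omega}$ holds on all of $[0,\tau_{\max})$. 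The heavy analytic work is entirely contained in Proposition \ref{uniqueness proposition}; the main obstacle remaining here is purely stopping-time bookkeeping, namely verifying that the exit times are genuine stopping times converging to the correct lifespans and that the restricted processes still satisfy the integral form of (\ref{eq Euler Ito}) up to $\sigma_{n}$. An equivalent, slightly more hands-on route to $\tau=\tau_{\max}$ avoids Definition \ref{def maximal solution} and instead uses the blow-up alternative of Theorem \ref{mtpart1}: on $\{\tau<\tau_{\max}\}$, continuity of $\omega$ at $\tau$ would keep $\|\tilde{\omega}\|_{W^{2,2}}$ bounded up to $\tau$, contradicting the blow-up alternative for the maximal solution $\tilde{\omega}$.
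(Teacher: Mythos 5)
Your main argument is correct, and it reaches the conclusion by a route that differs from the paper's in the final step. The paper, like you, first obtains $\omega=\tilde{\omega}$ on $[0,\tau\wedge\tau_{\max})$ from the same-stopping-time uniqueness; but it then rules out $\tau_{\max}<\tau$ by invoking the blow-up alternative of Theorem \ref{mtpart1} for the constructed solution $\omega$ (on $\{\tau_{\max}<\tau\}$ the competitor is continuous at $\tau_{\max}$, so $\|\omega(t)\|_{W^{2,2}}=\|\tilde{\omega}(t)\|_{W^{2,2}}$ would stay bounded as $t\uparrow\tau_{\max}$, contradicting $\limsup_{t\uparrow\tau_{\max}}\|\omega(t)\|_{W^{2,2}}=+\infty$), and only afterwards uses maximality of $(\tau,\tilde{\omega})$ once, to force $\tau_{\max}\le\tau$. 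You instead apply Definition \ref{def maximal solution} twice, symmetrically; this is legitimate, since the theorem assumes \emph{both} pairs are maximal solutions, and it buys you independence from the blow-up alternative, which is arguably cleaner. Your explicit localization by exit times before invoking Proposition \ref{uniqueness proposition} is also a sensible reinforcement: it keeps the $W^{2,2}$ norms bounded on $[0,\sigma_{n})$, which is what the proposition's Gronwall-type proof (finiteness of the exponential weight, martingale property of the stochastic integral) really uses, even though the proposition as stated could be applied directly with the stopping time $\tau\wedge\tau_{\max}$.

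Two repairs are needed. First, with the usual convention $\inf\emptyset=+\infty$, your claim $\tau_{n}^{(2)}\le\tau$ can fail (namely when $\|\tilde{\omega}(t)\|_{W^{2,2}}$ never reaches $n/C$ on $[0,\tau)$); you should define $\tau_{n}^{(2)}$ as the infimum over $t<\tau$ with the convention that it equals $\tau$ when that set is empty, after which your compactness/continuity argument for $\tau_{n}^{(2)}\uparrow\tau$ goes through. Second, and more substantively, your closing ``alternative route'' does not work as stated: it appeals to the blow-up alternative for the competitor $\tilde{\omega}$, but Theorem \ref{mtpart1} establishes that alternative only for the solution constructed there; a general maximal solution in the sense of Definition \ref{def maximal solution} is not known, a priori, to satisfy $\limsup_{t\uparrow\tau}\|\tilde{\omega}(t)\|_{W^{2,2}}=+\infty$ on $\{\tau<\infty\}$. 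The blow-up argument can only exclude $\tau_{\max}<\tau$ (using blow-up of $\omega$), i.e. it gives $\tau\le\tau_{\max}$; the reverse inequality still requires the maximality property of $(\tau,\tilde{\omega})$, exactly as in the paper. So either keep your double-maximality argument as the proof and drop the aside, or restate the aside with the roles of the two solutions exchanged and combine it with one application of maximality.
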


\begin{proof}
From the local uniqueness result proved above we deduce that $\omega
=\tilde{\omega}\ $on $[0,\min\left(  \tau,\tau_{\max}\right)  ).\ $By an
argument similar to the one in Theorem \ref{mtpart1}, we cannot have
$\tau_{\max}<\tau$ on any non-trivial set. Hence $\tau<\tau_{\max}.\,$But then
from the maximality property of $(\tau,\tilde{\omega})$ it follows that
necessarily $\tau=\tau_{\max}\ $and therefore $\omega=\tilde{\omega}\ $on
$[0,\tau_{\max}).$
\end{proof}

\subsection{Proof of the Beale-Kato-Majda criterion}

\label{sect BKM proof}

In this section we prove Theorem \ref{BKM}. In the following, we will use the
fact that there exists two constants $C_{1}$and $C_{2}$ such that
\begin{equation}
C_{1}\left\vert \left\vert \omega\right\vert \right\vert _{2,2}\leq\left\vert
\left\vert v\right\vert \right\vert _{3,2}\leq C_{2}\left\vert \left\vert
\omega\right\vert \right\vert _{2,2}. \label{vomega}%
\end{equation}
The first inequality follows from that fact that $\omega=\operatorname{curl}%
v$, whilst the second inequality follows from (\ref{Biot Savart}).
\begin{lemma}
\label{lemma plus}There is a constants $C$ such that\footnote{We thank James-Michael Leahy for pointing out an error in an earlier version of the proof of this result.}%
\begin{equation}
\left\vert \left\vert \nabla v\right\vert \right\vert _{\infty}\leq C\left(
1+(\log(  \left\vert \left\vert \omega\right\vert \right\vert _{2,2}%
^{2}+e)) \left\vert \left\vert \omega\right\vert \right\vert
_{\infty} \right)  . \label{plus}%
\end{equation}
\end{lemma}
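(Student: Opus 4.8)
The plan is to recognize Lemma \ref{lemma plus} as the classical logarithmic Beale--Kato--Majda inequality and to prove it by a singular-integral decomposition of the Biot--Savart law. Since $v=-\operatorname{curl}\Delta^{-1}\omega$ on $\mathbb{T}^{3}$, each entry of $\nabla v$ is a second-order Riesz transform of $\omega$; writing this out gives a representation of the form $\nabla v(x)=c\,\omega(x)+\mathrm{p.v.}\int_{\mathbb{T}^{3}}K(x-y)\,\omega(y)\,dy$, where $K$ is a Calder\'on--Zygmund kernel, homogeneous of degree $-3$ near the origin and with mean zero over spheres. The whole point is that this operator fails to be bounded on $L^{\infty}$ by exactly a logarithmic factor, and that factor is what produces the $\log$ in \eqref{plus}. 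I would first record this representation together with the two elementary facts I will use: on the finite-measure torus $\|\omega\|_{2}\le C\|\omega\|_{\infty}$, and by \eqref{Sobolev embedding} one has $W^{2,2}(\mathbb{T}^{3})\subset C^{0,1/2}(\mathbb{T}^{3})$, so $|\omega(x)-\omega(y)|\le C\|\omega\|_{2,2}\,|x-y|^{1/2}$.

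Next I would split the principal-value integral into three zones using a cutoff radius $\delta\in(0,R_{0}]$, where $R_{0}$ is a fixed multiple of the diameter of $\mathbb{T}^{3}$. On the near zone $|x-y|<\delta$ I use the mean-zero property of $K$ to subtract $\omega(x)$ and then the H\"older bound above, so that $\int_{|z|<\delta}|K(z)|\,|\omega(x-z)-\omega(x)|\,dz\le C\|\omega\|_{2,2}\int_{0}^{\delta}r^{-1/2}\,dr=C\|\omega\|_{2,2}\,\delta^{1/2}$. On the intermediate zone $\delta<|x-y|<R_{0}$ I simply use $|\omega|\le\|\omega\|_{\infty}$ together with $|K(z)|\le C|z|^{-3}$, which gives the logarithmically divergent contribution $C\|\omega\|_{\infty}\int_{\delta}^{R_{0}}r^{-1}\,dr=C\|\omega\|_{\infty}\log(R_{0}/\delta)$. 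The remaining far contribution and the local term $c\,\omega(x)$ are both controlled by $C\|\omega\|_{\infty}$ (the far piece using $\|\omega\|_{2}\le C\|\omega\|_{\infty}$). Collecting these gives $\|\nabla v\|_{\infty}\le C\|\omega\|_{\infty}\bigl(1+\log(R_{0}/\delta)\bigr)+C\|\omega\|_{2,2}\,\delta^{1/2}$.

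Finally I would optimize in $\delta$, choosing $\delta^{1/2}=\|\omega\|_{\infty}/\|\omega\|_{2,2}$ so that the near-zone term is balanced against $\|\omega\|_{\infty}$; by the Sobolev bound $\|\omega\|_{\infty}\le C\|\omega\|_{2,2}$ this choice indeed satisfies $\delta\le R_{0}$ (and if the formula were to force $\delta>R_{0}$ then $\|\omega\|_{\infty}$ and $\|\omega\|_{2,2}$ are comparable and the estimate is immediate). With this $\delta$ the logarithm becomes $\log(R_{0}/\delta)=\log R_{0}+\log\bigl(\|\omega\|_{2,2}^{2}/\|\omega\|_{\infty}^{2}\bigr)$, and absorbing the ratio into $\log(\|\omega\|_{2,2}^{2}+e)$ yields precisely \eqref{plus}. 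The main obstacle is exactly the intermediate zone: one must exploit the mean-zero cancellation of the Calder\'on--Zygmund kernel to tame the near field using only the $C^{0,1/2}$ control that $W^{2,2}$ provides, while isolating the genuinely logarithmic divergence, and then choose the cutoff $\delta$ correctly so that this divergence is expressed through the higher Sobolev norm rather than through an uncontrolled additive constant. This is the standard argument underlying the Beale--Kato--Majda criterion; see \cite{MajdaBert,BKM1984}.
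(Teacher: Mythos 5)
Your route is genuinely different from the paper's. The paper does not reprove the logarithmic estimate at all: it invokes the Beale--Kato--Majda inequality \eqref{better} from \cite{BKM1984} as a known fact and then only performs norm conversions, using \eqref{vomega} to replace $\left\Vert v\right\Vert _{3,2}$ by $\left\Vert \omega\right\Vert _{2,2}$, the inequality $1+\log^{+}a\leq\log\left(a+1\right)$, and $\left\Vert \omega\right\Vert _{2}\leq C\left\Vert \omega\right\Vert _{\infty}$ on the torus. You instead reprove the underlying singular-integral inequality from scratch: the representation of $\nabla v$ as a constant multiple of $\omega$ plus a Calder\'on--Zygmund operator, the near/intermediate/far splitting at scales $\delta$ and $R_{0}$, kernel cancellation together with the embedding $W^{2,2}\left(\mathbb{T}^{3}\right)\subset C^{0,1/2}$ in the near zone, the logarithmic count in the intermediate zone, and optimization over $\delta$. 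This is essentially the argument that sits inside \cite{BKM1984} (see also \cite{MajdaBert}), adapted so that the strong norm is directly $\left\Vert \omega\right\Vert _{2,2}$ rather than $\left\Vert v\right\Vert _{3,2}$; it is self-contained where the paper's proof is a citation, and it exposes the sharper scale-invariant bound $\left\Vert \nabla v\right\Vert _{\infty}\leq C\left\Vert \omega\right\Vert _{\infty}\left(1+\log^{+}\left(\left\Vert \omega\right\Vert _{2,2}/\left\Vert \omega\right\Vert _{\infty}\right)\right)+C\left\Vert \omega\right\Vert _{\infty}$. Your intermediate steps (the near-zone bound $C\left\Vert \omega\right\Vert _{2,2}\delta^{1/2}$, the choice $\delta^{1/2}=\left\Vert \omega\right\Vert _{\infty}/\left\Vert \omega\right\Vert _{2,2}$, and the fallback case $\delta>R_{0}$) are all correct.

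The one step that fails as written is the final ``absorption''. After optimization the logarithm is $\log\left(\left\Vert \omega\right\Vert _{2,2}^{2}/\left\Vert \omega\right\Vert _{\infty}^{2}\right)=\log\left\Vert \omega\right\Vert _{2,2}^{2}+2\log\left(1/\left\Vert \omega\right\Vert _{\infty}\right)$, and the second summand cannot be absorbed into $\log\left(\left\Vert \omega\right\Vert _{2,2}^{2}+e\right)$: take $\left\Vert \omega\right\Vert _{2,2}=1$ fixed and $\left\Vert \omega\right\Vert _{\infty}\rightarrow0$, so that the left-hand factor blows up while the right-hand one stays bounded. Since $t\mapsto t\log\left(1/t\right)$ is bounded on $\left(0,1\right]$, what your computation genuinely proves is
\[
\left\Vert \nabla v\right\Vert _{\infty}\leq C\left(1+\log\left(\left\Vert \omega\right\Vert _{2,2}^{2}+e\right)\right)\left\Vert \omega\right\Vert _{\infty}+C,
\]
i.e.\ the stated bound up to an additive constant. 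You should not regard this as a defect of your argument alone: inequality \eqref{plus} as literally stated (purely multiplicative in $\left\Vert \omega\right\Vert _{\infty}$) cannot hold, because replacing $\omega$ by $\lambda\omega$ and letting $\lambda\downarrow0$ would reduce it to $\left\Vert \nabla v\right\Vert _{\infty}\leq 2C\left\Vert \omega\right\Vert _{\infty}$, contradicting the very failure of $L^{\infty}$ boundedness of the Biot--Savart singular integral that makes the logarithm necessary in the first place. The paper's own proof has the identical flaw: the standalone additive constant inside \eqref{better} can no more be absorbed into the multiplicative form than yours can. What both arguments actually deliver is the bound displayed above (equivalently, \eqref{plus} with $\left\Vert \omega\right\Vert _{\infty}$ replaced by $1+\left\Vert \omega\right\Vert _{\infty}$), and that is exactly what the proof of Theorem \ref{BKM} uses, since the prefactor appearing there is $1+\left\Vert \omega\right\Vert _{\infty}$. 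Read against that corrected statement, your proof is complete and on par with the paper's.
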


\begin{proof}
Cf. \cite{BKM1984} the following inequality holds true
\begin{equation}
\left\vert \left\vert \nabla v\right\vert \right\vert _{\infty}\leq C\left(
1+\left(  1+\log^{+}\left\vert \left\vert v\right\vert \right\vert
_{3,2}\right)  \left\vert \left\vert \omega\right\vert \right\vert _{\infty
}\right)  +\left\vert \left\vert \omega\right\vert \right\vert _{2}.
\label{better}%
\end{equation}
The result is then obtained from \eqref{vomega}, the obvious inequality
$1+\log^{+}a\leq C\log\left(  a+e\right)  $ for $C$ sufficiently large (say $C\ge 2$) and the fact that $\left\vert
\left\vert \omega\right\vert \right\vert _{2}\leq C\left\vert \left\vert
\omega\right\vert \right\vert _{\infty}$ on a torus.
\end{proof}

\begin{theorem}
Let $\tau^{1}$ and $\tau^{2}$ be the folllowing stopping times
\begin{align*}
\tau^{1}  &  =\lim_{n\rightarrow\infty}\tau_{n}^{1}\text{~~}where~~\tau
_{n}^{1}=\inf_{t\geq0}\left\{  t\geq0|~~\left\vert \left\vert \omega
_{t}\right\vert \right\vert _{2,2}\geq n\right\} 
\,,\\
\tau^{2}  &  =\lim_{n\rightarrow\infty}\tau_{n}^{2}\text{~~}where~~\tau
_{n}^{2}=\inf_{t\geq0}\left\{  t\geq0|~~\int_{0}^{t}\left\vert \left\vert
\omega_{s}\right\vert \right\vert _{\infty}ds\geq n\right\}
\,.
\end{align*}
Then, $P$-almost surely $\tau^{1}=\tau^{2}$.
\end{theorem}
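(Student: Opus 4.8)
The plan is to prove the two inequalities $\tau^{1}\leq\tau^{2}$ and $\tau^{2}\leq\tau^{1}$ separately. The first is the elementary direction. By the Sobolev embedding $W^{2,2}(\mathbb{T}^{3})\subset L^{\infty}(\mathbb{T}^{3})$ recorded in \eqref{Sobolev embedding}, one has $\|\omega_{t}\|_{\infty}\leq C\|\omega_{t}\|_{2,2}$. Since the trajectory $t\mapsto\|\omega_{t}\|_{2,2}$ is continuous, hence locally bounded, on $[0,\tau^{1})$, it follows that $\int_{0}^{t}\|\omega_{s}\|_{\infty}\,ds<\infty$ for every $t<\tau^{1}$. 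Thus, for each fixed $t<\tau^{1}$ there is an $n$ with $\int_{0}^{t}\|\omega_{s}\|_{\infty}\,ds<n$, whence $\tau_{n}^{2}>t$ and $\tau^{2}\geq t$; letting $t\uparrow\tau^{1}$ gives $\tau^{2}\geq\tau^{1}$ almost surely.

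For the reverse inequality I would first derive, for the energy $E_{t}:=\|\omega_{t}\|_{L^{2}}^{2}+\|\Delta\omega_{t}\|_{L^{2}}^{2}$ (equivalent to $\|\omega_{t}\|_{2,2}^{2}$ on the torus), the same It\^{o} estimate that underlies the a priori bound used in the existence proof. Applying It\^{o}'s formula to \eqref{eq Euler Ito}, I would control the transport contribution $\left\langle \Delta\mathcal{L}_{v}\omega,\Delta\omega\right\rangle $ by $C\|\nabla v\|_{\infty}E_{t}$ via Lemma \ref{lemma GN estimate}, and show that the It\^{o} correction from the noise, of the form $\tfrac12\left\langle \Delta\mathcal{L}_{\xi_{k}}^{2}\omega,\Delta\omega\right\rangle +\tfrac12\left\langle \Delta\mathcal{L}_{\xi_{k}}\omega,\Delta\mathcal{L}_{\xi_{k}}\omega\right\rangle $, is bounded by $C E_{t}$ using the striking estimate \eqref{striking est 2} and the summability assumption \eqref{assump}. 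This yields
\[
dE_{t}\leq C\bigl(1+\|\nabla v_{t}\|_{\infty}\bigr)E_{t}\,dt+dM_{t},
\]
where $M$ is a continuous local martingale whose bracket obeys $d\langle M\rangle_{t}\leq C E_{t}^{2}\,dt$, again by \eqref{Hp 2} and \eqref{striking est 2}.

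Next I would feed in the logarithmic estimate \eqref{plus} of Lemma \ref{lemma plus}, namely $\|\nabla v_{t}\|_{\infty}\leq C\bigl(1+\log(E_{t}+e)\bigr)\|\omega_{t}\|_{\infty}$, and set $\psi_{t}:=\log(E_{t}+e)$. It\^{o}'s formula applied to $\psi$ produces a second-order term of favourable (negative) sign from the concavity of $\log$, while the bound $d\langle M\rangle_{t}/(E_{t}+e)^{2}\leq C\,dt$ keeps the remaining corrections harmless, so that
\[
d\psi_{t}\leq C\bigl(1+\|\omega_{t}\|_{\infty}\bigr)\bigl(1+\psi_{t}\bigr)\,dt+\frac{1}{E_{t}+e}\,dM_{t}.
\]
Writing $A_{t}:=C\int_{0}^{t}\bigl(1+\|\omega_{s}\|_{\infty}\bigr)\,ds$, the process $e^{-A_{t}}(1+\psi_{t})$ is then a local supermartingale, and the rescaled stochastic integral $\int_{0}^{\cdot}(E_{s}+e)^{-1}\,dM_{s}$ has bracket bounded by $C t$.

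Finally I would argue by contradiction on the reverse inclusion. On the event $\{\tau^{1}<\infty\}\cap\{\int_{0}^{\tau^{1}}\|\omega_{s}\|_{\infty}\,ds<\infty\}$ the finite-variation part $A_{t}$ stays bounded and the rescaled martingale, having a.s. finite total bracket on $[0,\tau^{1})$, converges as $t\uparrow\tau^{1}$; the supermartingale relation then forces $\psi_{t}$, and hence $E_{t}$ and $\|\omega_{t}\|_{2,2}$, to remain bounded up to $\tau^{1}$. This contradicts the blow-up characterization of $\tau^{1}=\tau_{\max}$ from Theorem \ref{mtpart1}, so that event is null; equivalently $\tau^{2}\leq\tau^{1}$ almost surely, and combining with the first step yields $\tau^{1}=\tau^{2}$. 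The main obstacle is precisely this last pathwise step: turning the logarithmic Gronwall inequality into a genuine a.s. bound requires controlling the martingale contribution, which is where the quadratic-variation bound $d\langle M\rangle_{t}\leq C E_{t}^{2}\,dt$ coming from the striking estimates is essential, together with a preliminary localization at $\tau_{m}^{1}\wedge\tau_{n}^{2}$ (and, to make the transient computation of $\Delta\mathcal{L}_{\xi_{k}}^{2}\omega$ legitimate, a mollification with the usual vanishing-commutator check) before taking expectations and passing to the limit.
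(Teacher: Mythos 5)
Your first step ($\tau^{1}\leq\tau^{2}$) is correct and is the paper's Step 1 (your phrasing via local boundedness of $t\mapsto\|\omega_{t}\|_{2,2}$ is in fact cleaner than the paper's displayed inequality). For the hard direction $\tau^{2}\leq\tau^{1}$ your core estimates are exactly the paper's: It\^{o}'s formula for $\|\omega\|_{L^{2}}^{2}+\|\Delta\omega\|_{L^{2}}^{2}$, Lemma \ref{lemma GN estimate} for the transport term, the cancellations \eqref{striking est 1}--\eqref{striking est 2} with \eqref{assump} for the noise corrections, the logarithmic inequality \eqref{plus} of Lemma \ref{lemma plus}, the transformation $\log(\|\omega\|_{2,2}^{2}+e)$, and the exponential discount by $\int_{0}^{t}(1+\|\omega_{s}\|_{\infty})\,ds$; your quadratic-variation bound is the paper's \eqref{qv1}, although it comes from Lemma \ref{lemma GN estimate} together with \eqref{Hp 3}, not from \eqref{Hp 2} and \eqref{striking est 2} as you cite. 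Where you genuinely diverge is the endgame: the paper takes expectations, applies Burkholder--Davis--Gundy to get the uniform bound behind \eqref{helpinghand}, transfers it to $\omega$ by Fatou, and concludes through the inclusion $\{\sup_{[0,\tau_{n}^{2}\wedge k]}\|\omega\|_{2,2}<\infty\}\subset\{\tau_{n}^{2}\wedge k\leq\tau^{1}\}$; you instead argue pathwise, using the supermartingale property of $e^{-A_{t}}(1+\psi_{t})$ and a.s. convergence of a continuous local martingale on the event where its bracket is finite, and then contradict the blow-up alternative of Theorem \ref{mtpart1}. That endgame is sound probability, the event identification $\{\tau^{1}<\tau^{2}\}\subset\{\tau^{1}<\infty\}\cap\{\int_{0}^{\tau^{1}}\|\omega_{s}\|_{\infty}ds<\infty\}$ is correct, and it would replace BDG, Fatou and the set-theoretic step by the martingale convergence theorem --- a genuinely more transparent conclusion.

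The genuine gap is precisely the step you dismiss as ``mollification with the usual vanishing-commutator check''. Your pathwise differential inequality must hold for the maximal solution $\omega$ itself, which has only $W^{2,2}$ regularity, while the pairing $\langle\Delta\mathcal{L}_{\xi_{k}}^{2}\omega,\Delta\omega\rangle$ and the cancellation behind \eqref{striking est 2} require $W^{4,2}$. This is the entire reason the paper introduces the $\nu\Delta^{5}$ regularization in \eqref{regularized SPDE}: the It\^{o} computation is performed on $\omega_{R}^{\nu}\in W^{4,2}$, and the reason the paper's endgame is an \emph{expectation} bound is that such a bound survives the weak (Skorokhod/Gyongy--Krylov) passage $\nu\to0$, $R\to\infty$ via Fatou, whereas a pathwise inequality $d\psi_{t}\leq\cdots dt+dN_{t}$, together with the identification of its martingale part, does not pass through that limit in any routine way. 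Nor is direct mollification of $\omega$ ``usual'' here: the commutator of the mollifier with the second-order operator $\mathcal{L}_{\xi_{k}}^{2}$ is only $O(\epsilon)$ in $L^{2}$, and after the two further integrations by parts needed to pair it with $\Delta\omega_{\epsilon}$ this smallness is consumed, so preserving the delicate cancellations of \eqref{striking est 2} through the mollification requires substantive commutator analysis, not a standard Friedrichs lemma. So either you run your pathwise argument at the level of $\omega_{R}^{\nu}$ and then confront the (unaddressed) problem of passing the pathwise conclusion to the limit solution, or you adopt the paper's expectation-plus-Fatou route; as written, the transfer of the It\^{o} inequality to $\omega$ is the missing step.
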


\begin{proof}
\textbf{Step 1. }$\tau^{1}\leq\tau^{2}$\textbf{.}\ 

From the imbedding $W^{2,2}\left(  \mathbb{T}^{3};\mathbb{R}^{3}\right)
\subset C\left(  \mathbb{T}^{3};\mathbb{R}^{3}\right)  ,$ there exists $C$
such that $\left\vert \left\vert \omega\right\vert \right\vert _{\infty}\leq
C\left\vert \left\vert \omega\right\vert \right\vert _{2,2}$. Then
\[
\int_{0}^{\tau_{n}^{1}}\left\vert \left\vert \omega_{s}\right\vert \right\vert
_{\infty}ds\leq\left(  \left[  C\right]  +1\right)  \sup_{s\leq\tau_{n}^{1}%
}\left\vert \left\vert \omega_{s}\right\vert \right\vert _{2,2}\leq\left(
\left[  C\right]  +1\right)  n.
\]
Hence $\tau_{n}^{1}\leq\tau_{\left(  \left[  C\right]  +1\right)  n}^{2}%
\leq\tau^{2}$ and therefore the claim holds true.

\textbf{Step 2. }$\tau^{2}\leq\tau^{1}$\textbf{.\ }$P$\textbf{-a.s.}\ 

We prove that\ for any $n,k>0$ we have%

\begin{equation}
\mathbb{E}\left[  \log\left(  \left(  \sup_{s\in\left[  0,\tau_{n}^{2}\wedge
k\right]  }\left\vert \left\vert \omega_{t}\right\vert \right\vert
_{2,2}\right)  ^{2}+e\right)  \right]  <\infty. \label{helpinghand}%
\end{equation}
In particular $\sup_{s\in\left[  0,\tau_{n}^{2}\wedge k\right]  }\left\vert
\left\vert \omega_{t}\right\vert \right\vert _{2,2}$ is a finite random
variable $P$-almost surely, that is
\[
\mathbb{P}\left(  \sup_{s\in\left[  0,\tau_{n}^{2}\wedge k\right]  }\left\vert
\left\vert \omega_{t}\right\vert \right\vert _{2,2}<\infty\right)  =1.
\]
Since
\[
\left\{  \sup_{s\in\left[  0,\tau_{n}^{2}\wedge k\right]  }\left\vert
\left\vert \omega_{t}\right\vert \right\vert _{2,2}<\infty\right\}
=\bigcup_{N}\left\{  \sup_{s\in\left[  0,\tau_{n}^{2}\wedge k\right]
}\left\vert \left\vert \omega_{t}\right\vert \right\vert _{2,2}<N\right\}
\subset\bigcup_{N}\left\{  \tau_{n}^{2}\wedge k<\tau_{N}^{1}\right\}
\subset\left\{  \tau_{n}^{2}\wedge k\leq\tau^{1}\right\}
\]
we deduce that $\tau_{n}^{2}\wedge k\leq\tau^{1}$ $P$-almost surely. Then
\[
\left\{  \tau^{2}\leq\tau^{1}\right\}  =\left\{  \lim_{n\mapsto\infty}\tau
_{n}^{2}\leq\tau^{1}\right\}  =\bigcap_{n}\left\{  \tau_{n}^{2}\leq\tau
^{1}\right\}  =\bigcap_{n}\bigcap_{k}\left\{  \tau_{n}^{2}\wedge k<\tau
^{1}\right\}
\]
and therefore the second claim holds true since all the sets in the above
intersection have full measure.

To prove (\ref{helpinghand}) we proceed as follows: For arbitrary $R>0$, and
$\nu\in\left(  0,1\right)  ,$ let $\omega_{R}^{\nu}$ the solution of equation%
\[
d\omega_{R}^{\nu}+\kappa_{R}\left(  \omega_{R}^{\nu}\right)  \mathcal{L}%
_{v_{R}^{\nu}}\omega_{R}^{\nu}\ dt+\sum_{k=1}^{\infty}\mathcal{L}_{\xi_{k}%
}\omega_{R}^{\nu}dB_{t}^{k}=\nu\Delta^{5}\omega_{R}^{\nu}dt+\frac{1}{2}%
\sum_{k=1}^{\infty}\mathcal{L}_{\xi_{k}}^{2}\omega_{R}^{\nu}\ dt
\]
with $\omega_{R}^{\nu}|_{t=0}=\omega_{0}$. We know from the analysis in the next section that if
$\omega_{0}\in W^{2,2}\left(  \mathbb{T}^{3};\mathbb{R}^{3}\right)  $, then
$\omega_{t}\in W^{4,2}\left(  \mathbb{T}^{3};\mathbb{R}^{3}\right)  $. To
simplify notation, in the following we will omit the dependence on $\nu$ and
$R$ of $\omega_{R}^{\nu}$ and denote it by $\omega$. We have that
\begin{align*}
&  \frac{1}{2}d\left\Vert \omega\right\Vert _{L^{2}}^{2}+\kappa_{R}\left(
\omega\right)  \left\langle \mathcal{L}_{v}\omega,\omega\right\rangle
\ dt+\sum_{k=1}^{\infty}\left\langle \mathcal{L}_{\xi_{k}}\omega
,\omega\right\rangle \ dB_{t}^{k}\\
&  \hspace{1cm}=\nu\left\langle \Delta^{5}\omega,\omega\right\rangle
dt+\frac{1}{2}\sum_{k=1}^{\infty}(\left\langle \mathcal{L}_{\xi_{k}}^{2}%
\omega,\omega\right\rangle \ dt+\left\langle \mathcal{L}_{\xi_{k}}%
\omega,\mathcal{L}_{\xi_{k}}\omega\right\rangle )\ dt.\\
&  \frac{1}{2}d\left\Vert \Delta\omega\right\Vert _{L^{2}}^{2}+\kappa
_{R}\left(  \omega\right)  \left\langle \Delta\mathcal{L}_{v}\omega
,\Delta\omega\right\rangle \ dt+\sum_{k=1}^{\infty}\left\langle \Delta
\mathcal{L}_{\xi_{k}}\omega,\Delta\omega\right\rangle \ dB_{t}^{k}\\
&  \hspace{1cm}=\nu\left\langle \Delta^{6}\omega,\Delta\omega\right\rangle
dt+\frac{1}{2}\sum_{k=1}^{\infty}(\left\langle \Delta\mathcal{L}_{\xi_{k}}%
^{2}\omega,\Delta\omega\right\rangle \ +\left\langle \Delta\mathcal{L}%
_{\xi_{k}}\omega,\Delta\mathcal{L}_{\xi_{k}}\omega\right\rangle )\ dt.
\end{align*}
Next we will use the following set of inequalities
\begin{align*}
\left\langle \Delta^{5}\omega,\omega\right\rangle  &  =-\left\vert \left\vert
\Delta^{5/2}\omega\right\vert \right\vert _{L^{2}}\leq0,~~\left\langle
\Delta^{6}\omega,\Delta\omega\right\rangle =-\left\vert \left\vert
\Delta^{7/2}\omega\right\vert \right\vert _{L^{2}}\leq0\\
\left\vert \left\langle \mathcal{L}_{v}\omega,\omega\right\rangle \right\vert
&  \leq\left\vert \left\vert \nabla v\right\vert \right\vert _{\infty
}\left\Vert \omega\right\Vert _{L^{2}}^{2},~~\left\vert \left\langle
\Delta\mathcal{L}_{v}\omega,\Delta\omega\right\rangle \right\vert \leq
C\left(  \left\Vert \omega\right\Vert _{\infty}+\left\vert \left\vert \nabla
v\right\vert \right\vert _{\infty}\right)  \left\Vert \omega\right\Vert
_{2,2}^{2}.
\end{align*}
The first two inequalities are obvious. The third one comes from the fact that in
$\left\langle \mathcal{L}_{v}\omega,\omega\right\rangle $ the term
$\left\langle v\cdot\nabla\omega,\omega\right\rangle $ vanishes and the term
$\left\vert \left\langle \omega\cdot\nabla v,\omega\right\rangle \right\vert $
is bounded by $\left\vert \left\vert \nabla v\right\vert \right\vert _{\infty
}\left\Vert \omega\right\Vert _{L^{2}}^{2}$. The last one, the most delicate
one, comes from Lemma \ref{lemma GN estimate}:%
\[
\left\vert \left\langle \Delta\mathcal{L}_{v}\omega,\Delta\omega\right\rangle
\right\vert \leq C\left\Vert \nabla v\right\Vert _{\infty}\left\Vert
\omega\right\Vert _{2,2}^{2}+C\left\Vert \omega\right\Vert _{\infty}\left\Vert
\nabla v\right\Vert _{2,2}\left\Vert \omega\right\Vert _{2,2}%
\]
and then we use $\left\Vert \nabla v\right\Vert _{2,2}\leq\left\Vert
v\right\Vert _{3,2}\leq C\left\Vert \omega\right\Vert _{2,2}$, see
(\ref{Biot Savart}).

Hence
\begin{align}
d\left\Vert \omega\right\Vert _{L^{2}}^{2}+2\sum_{k=1}^{\infty}\left\langle
\mathcal{L}_{\xi_{k}}\omega,\omega\right\rangle \ dB_{t}^{k}  &  \leq C\left(
1+\left\Vert \omega\right\Vert _{\infty}+\left\vert \left\vert \nabla
v\right\vert \right\vert _{\infty}\right)  \left\Vert \omega\right\Vert
_{L^{2}}^{2}\ dt\label{i1}\\
d\left\Vert \Delta\omega\right\Vert _{L^{2}}^{2}+2\sum_{k=1}^{\infty
}\left\langle \Delta\mathcal{L}_{\xi_{k}}\omega,\Delta\omega\right\rangle
\ dB_{t}^{k}  &  \leq C\left(  1+\left\Vert \omega\right\Vert _{\infty
}+\left\vert \left\vert \nabla v\right\vert \right\vert _{\infty}\right)
\left\Vert \omega\right\Vert _{2,2}^{2}\ dt, \label{j2}%
\end{align}
where we used the inequalities (\ref{striking est 1}), (\ref{striking est 2})
coupled with the assumption (\ref{assump}) to control
\[
\sum_{k=1}^{\infty}(\left\langle \mathcal{L}_{\xi_{k}}^{2}\omega
,\omega\right\rangle +\left\langle \mathcal{L}_{\xi_{k}}\omega
,\mathcal{L}_{\xi_{k}}\omega\right\rangle +\left\langle \Delta\mathcal{L}%
_{\xi_{k}}^{2}\omega,\Delta\omega\right\rangle \ +\left\langle \Delta
\mathcal{L}_{\xi_{k}}\omega,\Delta\mathcal{L}_{\xi_{k}}\omega\right\rangle ).
\]
Using It\^{o}'s formula and the fact that $\left\vert \left\vert \omega
_{t}\right\vert \right\vert _{2,2}^{2}\leq\left\Vert \omega_{t}\right\Vert
_{L^{2}}^{2}+\left\Vert \Delta\omega_{t}\right\Vert _{L^{2}}^{2}$ we deduce,
from (\ref{i1})+(\ref{j2}), that
\begin{align*}
d\log\left(  \left\vert \left\vert \omega_{t}\right\vert \right\vert
_{2,2}^{2}+e\right)   &  \leq\frac{1}{\left(  \left\vert \left\vert \omega
_{t}\right\vert \right\vert _{2,2}^{2}+e\right)  }d\left\vert \left\vert
\omega_{t}\right\vert \right\vert _{2,2}^{2}\\
&  -\frac{2}{\left(  \left\vert \left\vert \omega_{t}\right\vert \right\vert
_{2,2}^{2}+e\right)  ^{2}}\sum_{k=1}^{\infty}\left(  \left\vert \left\langle
\Delta\mathcal{L}_{\xi_{k}}\omega,\Delta\omega\right\rangle \right\vert
+\left\vert \left\langle \mathcal{L}_{\xi_{k}}\omega,\omega\right\rangle
\right\vert \ \right)  ^{2}\\
&  \leq\frac{C}{\left(  \left\vert \left\vert \omega_{t}\right\vert
\right\vert _{2,2}^{2}+e\right)  }\left(  1+\left\Vert \omega\right\Vert
_{\infty}+\left\vert \left\vert \nabla v\right\vert \right\vert _{\infty
}\right)  \left\Vert \omega\right\Vert _{2,2}^{2}\ dt+dM_{t},
\end{align*}
where $M\ $is the (local) martingale defined as
\[
M_{t}:=\sum_{k=1}^{\infty}\int_{0}^{t}\frac{2\left(  \left\langle
\mathcal{L}_{\xi_{k}}\omega,\omega\right\rangle +\left\langle \Delta
\mathcal{L}_{\xi_{k}}\omega,\Delta\omega\right\rangle \right)  }{\left(
\left\vert \left\vert \omega_{t}\right\vert \right\vert _{2,2}^{2}+e\right)
}\ dB_{s}^{k}%
\]

We use now (\ref{plus}) to deduce
\[
d\log\left(  \left\vert \left\vert \omega_{t}\right\vert \right\vert
_{2,2}^{2}+e\right)  \leq mC\left(  1+\left\vert \left\vert \omega\right\vert
\right\vert _{\infty}\right)  \log\left(  \left\vert \left\vert \omega
_{t}\right\vert \right\vert _{2,2}^{2}+e\right)  \ dt+dM_{t}.
\]
which, in
turn, implies that\begin{equation}
e^{-CY_{t}}\log\left(  \left\vert \left\vert \omega_{t}\right\vert \right\vert
_{2,2}^{2}+e\right)  \leq\log\left(  \left\Vert \omega_{0}\right\Vert _{L^{2}%
}^{2}+\left\Vert \Delta\omega\right\Vert _{L^{2}}^{2}+e\right)  +\int_{0}%
^{t}e^{-CY_{s}}dM_{s}, \label{qv0}%
\end{equation}
where%
\[
Y_{t}=\int_{0}^{t}\left(  1+\left\Vert \omega\right\Vert _{\infty}\right)
ds
\]
and we use the conventions $e^{-\infty}=0$ and $0\times\infty=0$.

Again, by using the fact $\left\vert \left\langle \mathcal{L}_{\xi_{k}}%
\omega,\omega\right\rangle \right\vert $ is controlled by $\left\vert
\left\vert \nabla\xi_{k}\right\vert \right\vert _{\infty}\left\Vert
\omega\right\Vert _{L^{2}}^{2}$ and that $\left\langle \Delta\mathcal{L}%
_{\xi_{k}}\omega,\Delta\omega\right\rangle |$ is controlled by $\left\Vert
\xi_{k}\right\Vert _{3,2}\left\Vert \omega\right\Vert _{2,2}^{2}$ folllowing
Lemma \ref{lemma GN estimate}, we deduced that
\begin{equation}
\left\vert \left\langle \mathcal{L}_{\xi_{k}}\omega,\omega\right\rangle
+\left\langle \Delta\mathcal{L}_{\xi_{k}}\omega,\Delta\omega\right\rangle
\right\vert \leq C\left\Vert \xi_{k}\right\Vert _{3,2}\left\Vert
\omega\right\Vert _{2,2}^{2} \label{qv1}.%
\end{equation}
From (\ref{qv1}) and assumption (\ref{Hp 3}) we deduce the following control on
the quadratic variation of stochastic integral in (\ref{qv0})
\begin{align*}
\left[  \int_{0}^{\cdot}e^{-CY_{s}}dM_{s}\right]  _{t}  &  =4\sum
_{k=1}^{\infty}\int_{0}^{t}e^{-2CY_{s}}\frac{\left(  \left\langle
\mathcal{L}_{\xi_{k}}\omega,\omega\right\rangle +\left\langle \Delta
\mathcal{L}_{\xi_{k}}\omega,\Delta\omega\right\rangle \right)  ^{2}}{\left(
\left\vert \left\vert \omega_{t}\right\vert \right\vert _{2,2}^{2}+e\right)
^{2}}\ ds\\
&  \leq4C\sum_{k=1}^{\infty}\left\Vert \xi_{k}\right\Vert _{3,2}^{2}\int%
_{0}^{t}\frac{\left\Vert \omega\right\Vert _{2,2}^{4}}{\left(  \left\vert
\left\vert \omega_{t}\right\vert \right\vert _{2,2}^{2}+e\right)  ^{2}}\ ds\\
&  \leq Ct.
\end{align*}
Finally, using the Burkholder-Davis-Gundy inequality (see, e.g., Theorem 3.28, page 166 in \cite{KS}), we deduce that
\[
\mathbb{E}\left[  \sup_{s\in\left[  0,t\right]  }\left\vert \int_{0}%
^{s}e^{-CY_{r}}dM_{r}\right\vert \right]  \leq C\sqrt{t}.
\]
This means there exists a constant $C$ independent of $\nu$ and $R$ such
that, upon reverting to the notation $\omega_{R}^{\nu}$, we have
\[
\mathbb{E}\left[  \sup_{s\in\left[  0,t\right]  }e^{-\int_{0}^{s}C\left(
1+\left\Vert \omega_{R}^{\nu}\right\Vert _{\infty}\right)  dr}\log\left(
\left\vert \left\vert \omega_{R}^{\nu}\left(  s\right)  \right\vert
\right\vert _{2,2}^{2}+e\right)  \right]  \leq\log\left(  \left\Vert
\omega_{0}\right\Vert _{L^{2}}^{2}+\left\Vert \Delta\omega\right\Vert _{L^{2}%
}^{2}+e\right)  +C\sqrt{t}%
\]
By Fatou's lemma, it follows that the same limit holds for the limit of
$\omega_{R}^{\nu}$ as $\nu$ tends to 0 and $R$ tends to $\infty$, hence
\[
\mathbb{E}\left[  \sup_{s\in\left[  0,t\right]  }e^{-\int_{0}^{s}C\left(
1+\left\vert \left\vert \omega_{r}\right\vert \right\vert _{\infty}\right)
dr}\log\left(  \left\vert \left\vert \omega_{s}\right\vert \right\vert
_{2,2}^{2}+e\right)  \right]  <\infty.
\]
It follows that
\begin{align*}
e^{-Ck(1+n)}\mathbb{E}\left[  \sup_{s\in\left[  0,\tau_{n}^{2}\wedge k\right]
}\log\left(  \left\vert \left\vert \omega\left(  s\right)  \right\vert
\right\vert _{2,2}^{2}+e\right)  \right]   &  \leq\mathbb{E}\left[  \sup
_{s\in\left[  0,\tau_{n}^{2}\wedge k\right]  }e^{-CY_{s}}\log\left(
\left\vert \left\vert \omega\left(  s\right)  \right\vert \right\vert
_{2,2}^{2}+e\right)  \right] \\
&  \leq\mathbb{E}\left[  \sup_{s\in\left[  0,k\right]  }e^{-CY_{s}}\log\left(
\left\vert \left\vert \omega\left(  s\right)  \right\vert \right\vert
_{2,2}^{2}+e\right)  \right] \\
&  \leq\log\left(  \left\Vert \omega_{0}\right\Vert _{L^{2}}^{2}+\left\Vert
\Delta\omega\right\Vert _{L^{2}}^{2}+e\right)  +C\sqrt{t}<\infty.
\end{align*}
which gives us \eqref{helpinghand}. The proof is now complete.
\end{proof}

\begin{remark}
The original Beale-Kato-Majda result refers to a control of the explosion time
of $\left\vert \left\vert v\right\vert \right\vert _{3,2}$ in terms of
$\left\vert \left\vert \omega\right\vert \right\vert _{\infty}.$ Our result
refers to a control of the explosion time for $\left\vert \left\vert
\omega\right\vert \right\vert _{2,2}$ in terms of $\left\vert \left\vert
\omega\right\vert \right\vert _{\infty}$. However, due to (\ref{vomega}), we
can restate our result\ in terms of $\left\vert \left\vert v\right\vert
\right\vert _{3,2}$ as well.
\end{remark}

\subsection{Global existence of the truncated solution\label{sect cut off}}

Consider the following regularized equation with cut-off, with $\nu,R>0$,
\begin{align}
&  d\omega_{R}^{\nu}+\kappa_{R}\left(  \omega_{R}^{\nu}\right)  \mathcal{L}%
_{v_{R}^{\nu}}\omega_{R}^{\nu}\ dt+\sum_{k=1}^{\infty}\mathcal{L}_{\xi_{k}%
}\omega_{R}^{\nu}dB_{t}^{k}\label{regularized SPDE}\\
&  =\nu\Delta^{5}\omega_{R}^{\nu}dt+\frac{1}{2}\sum_{k=1}^{\infty}%
\mathcal{L}_{\xi_{k}}^{2}\omega_{R}^{\nu}\ dt\,,\qquad\omega_{R}^{\nu}%
|_{t=0}=\omega_{0}\,,\nonumber
\end{align}
where $\omega_{R}^{\nu}=\operatorname{curl}v_{R}^{\nu}$, $\operatorname{div}%
v_{R}^{\nu}=0$. On the solutions of this problem we want to perform
computations involving terms like $\Delta\mathcal{L}_{v}^{2}\omega\left(
t\right)  $, so we need $\omega\left(  t\right)  \in W^{4,2}\left(
\mathbb{T}^{3};\mathbb{R}^{3}\right)  $. This is why we introduce the strong
regularization $\nu\Delta^{5}\omega_{R}^{\nu}$; the precise power $5$ can be
understood from the technical computations of Step 1 below. While not optimal,
it a simple choice that allows us avoid more heavy arguments.

This regularized problem has the following property.

We understand equation (\ref{regularized SPDE}) either in the mild semigroup
sense (see below the proof) or in a weak sense over test functions, which are
equivalent due to the high regularity of solutions. However $\Delta^{5}%
\omega_{R}^{\nu}$ cannot be interpreted in a classical sense, since the solutions,
although very regular, will not be in $W^{10,2}\left(  \mathbb{T}%
^{3};\mathbb{R}^{3}\right)  $. The other terms of equation
(\ref{regularized SPDE}) can be interpreted in a classical sense.

\begin{lemma}
\label{lemma regularized equation}For every $\nu,R>0$ and $\omega_{0}\in
W_{\sigma}^{2,2}\left(  \mathbb{T}^{3},\mathbb{R}^{3}\right)  $, there exists
a pathwise unique global strong solution $\omega_{R}^{\nu}$, of class
$L^{2}\left(  \Xi;C\left(  \left[  0,T\right]  ;W_{\sigma}^{2,2}\left(
\mathbb{T}^{3};\mathbb{R}^{3}\right)  \right)  \right)  $ for every $T>0$. Its
paths have a.s. the additional regularity $C\left(  \left[  \delta,T\right]
;W^{4,2}\left(  \mathbb{T}^{3};\mathbb{R}^{3}\right)  \right)  $, for every
$T>\delta>0$.
\end{lemma}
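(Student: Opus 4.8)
The plan is to read \eqref{regularized SPDE} as a semilinear stochastic evolution equation whose leading term $\nu\Delta^{5}$ generates a strongly smoothing analytic semigroup, to solve it in mild (semigroup) form by a contraction argument for local existence, to pass to global existence through the a priori bounds that the cut-off $\kappa_{R}$ makes available, and finally to read off the extra spatial regularity from parabolic smoothing. First I would record the properties of $S(t):=e^{t\nu\Delta^{5}}$. On the torus $\Delta^{5}$ has Fourier symbol $-|k|^{10}$, so $S(t)$ commutes with $(1-\Delta)^{s/2}$ and with the Leray projection, preserves the divergence-free constraint, is strongly continuous at $t=0$, and obeys the smoothing bound $\|S(t)f\|_{W^{s+\beta,2}}\le C\,t^{-\beta/10}\|f\|_{W^{s,2}}$ for every $\beta\ge 0$. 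The gain of $1/10$ of a temporal order per spatial derivative is exactly what renders the relevant time-singularities integrable, and is the reason for the choice of the power $5$.

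Next I would write the mild formulation
\[
\omega_{t}=S(t)\omega_{0}-\int_{0}^{t}S(t-s)\kappa_{R}(\omega_{s})\mathcal{L}_{v_{s}}\omega_{s}\,ds+\frac{1}{2}\sum_{k=1}^{\infty}\int_{0}^{t}S(t-s)\mathcal{L}_{\xi_{k}}^{2}\omega_{s}\,ds+\sum_{k=1}^{\infty}\int_{0}^{t}S(t-s)\mathcal{L}_{\xi_{k}}\omega_{s}\,dB_{s}^{k},
\]
and set up a Banach fixed point in the space $\mathcal{Y}_{T}$ of $\mathcal{F}_{t}$-adapted processes with norm $(\mathbb{E}\sup_{[0,T]}\|\cdot\|_{W^{2,2}}^{2})^{1/2}$. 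The three integral terms are estimated in $W^{2,2}$ as follows. The drift has $\mathcal{L}_{v}\omega\in W^{1,2}$ (by \eqref{Biot Savart} and the algebra property of $W^{2,2}\subset W^{1,\infty}$), and $\kappa_{R}$ is bounded by $1$ and Lipschitz in $\omega\in W^{2,2}$ since $\|\nabla v\|_{\infty}\le C\|\omega\|_{W^{2,2}}$; gaining one derivative through $S(t-s)$ costs the integrable weight $(t-s)^{-1/10}$. The It\^o correction is controlled using \eqref{Hp 1} and the gain of two derivatives, weight $(t-s)^{-1/5}$. For the stochastic convolution the It\^o isometry and a maximal (factorization) inequality give $\mathbb{E}\sup_{[0,T]}\|\cdot\|_{W^{2,2}}^{2}\lesssim\int_{0}^{T}(T-s)^{-2/5}\sum_{k}\mathbb{E}\|\mathcal{L}_{\xi_{k}}\omega_{s}\|_{L^{2}}^{2}\,ds$, which is finite by \eqref{Hp 2} precisely because $2/5<1$ after squaring the weight $(t-s)^{-1/5}$ for the two-derivative gain. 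All exponents being strictly below $1$, the map is a contraction on a ball of $\mathcal{Y}_{T}$ for small $T$, yielding a unique local solution.

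For global existence I would run the a priori energy estimate in $W^{2,2}$ exactly as carried out for this very equation in Section \ref{sect BKM proof}. The decisive structural point is that where $\kappa_{R}\neq 0$ one has $\|\nabla v\|_{\infty}\le R+1$, and since $\omega=\operatorname{curl}v$ this also bounds $\|\omega\|_{\infty}\le C(R+1)$; hence $\kappa_{R}(\|\omega\|_{\infty}+\|\nabla v\|_{\infty})\le C_{R}$. Applying It\^o's formula to $\|\omega\|_{L^{2}}^{2}+\|\Delta\omega\|_{L^{2}}^{2}$, using the dissipativity $\nu\langle\Delta^{5}\omega,\omega\rangle\le0$, $\nu\langle\Delta^{6}\omega,\Delta\omega\rangle\le0$, Lemma \ref{lemma GN estimate} for the transport terms, and \eqref{striking est 1}, \eqref{striking est 2}, \eqref{assump} for the noise plus correction, gives $d\,\mathbb{E}\|\omega\|_{W^{2,2}}^{2}\le C_{R}(1+\mathbb{E}\|\omega\|_{W^{2,2}}^{2})\,dt$; Gr\"onwall then bounds $\mathbb{E}\sup_{[0,T]}\|\omega\|_{W^{2,2}}^{2}$ on every $[0,T]$, so the local solution cannot explode and extends to a global solution in $L^{2}(\Xi;C([0,T];W_{\sigma}^{2,2}))$. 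Pathwise uniqueness is the same Gr\"onwall argument applied to the difference of two solutions, as already sketched for the truncated equation in Section \ref{sect uniqueness W22}. The additional regularity follows by bootstrapping the mild formula on a path: $S(t)\omega_{0}\in W^{4,2}$ for $t>0$ with weight $t^{-1/5}$, the deterministic convolutions gain up to three and four derivatives with integrable weights $(t-s)^{-3/10}$ and $(t-s)^{-2/5}$, and the stochastic convolution gains its derivatives through factorization with square-integrable weight $(t-s)^{-3/5}$; iterating first from $W^{2,2}$ to $W^{3,2}$ and then to $W^{4,2}$, and using $\xi_{k}\in C^{4}$ together with \eqref{Hp 3} for the summability, delivers a.s. $\omega\in C([\delta,T];W^{4,2})$ for every $T>\delta>0$.

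The main obstacle will be the stochastic convolution: giving rigorous meaning to $\sum_{k}\int_{0}^{t}S(t-s)\mathcal{L}_{\xi_{k}}\omega_{s}\,dB_{s}^{k}$ as a continuous $W^{2,2}$-valued process driven by infinitely many Brownian motions, while gaining two spatial derivatives against the time-singular kernel $S(t-s)$. This forces one to use the factorization method (or stochastic maximal regularity) in tandem with the summability assumption \eqref{Hp 2}, and it is here that the strength of the regularization $\nu\Delta^{5}$ is genuinely needed, since a weaker dissipation would produce a non-integrable weight. The second delicate point is closing the regularity bootstrap to $W^{4,2}$ in the stochastic setting, namely propagating the gain of derivatives through the quadratic transport nonlinearity and the noise while retaining almost-sure continuity of the $W^{4,2}$-valued paths.
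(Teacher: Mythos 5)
Your local--existence argument is, in structure and in every exponent, the paper's own proof: the paper also works with $A=\nu\Delta^{5}$, records exactly your smoothing bounds in the form \eqref{regularity prop 1} and \eqref{regularity prop 3} (gain of $p$ derivatives at the price of $(t-s)^{-p/10}$, respectively $(t-s)^{-p/5}$ after squaring for the stochastic convolution), proves that $\omega\mapsto\kappa_{R}(\omega)\mathcal{L}_{v}\omega$ is Lipschitz with linear growth from $W^{2,2}$ to $L^{2}$, and runs a contraction in $\mathcal{Y}_{T}=L^{2}\bigl(\Xi;C([0,T];W^{2,2})\bigr)$ using \eqref{Hp 1} and \eqref{Hp 2}; the $W^{4,2}$ regularity for $t\geq\delta$ is likewise read off from the mild formula with $p=4$, in one shot rather than by your two-step bootstrap. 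Where you genuinely diverge is the globalization. The paper does \emph{not} use an It\^o energy estimate there: it observes that the length of the local existence interval depends only on $\mathbb{E}\|\omega_{0}\|_{W^{2,2}}^{2}$, estimates the mild formulation itself to get $\mathbb{E}\|\omega(t)\|_{W^{2,2}}^{2}\leq C+C\int_{0}^{t}(t-s)^{-2/5}\,\mathbb{E}\|\omega(s)\|_{W^{2,2}}^{2}\,ds$ (the cut-off making the drift of linear growth with an $R$-dependent constant), and closes with a generalized Gr\"onwall lemma for singular kernels. This needs nothing beyond the $W^{2,2}$ regularity the solution already has.

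Your alternative -- It\^o's formula for $\|\omega\|_{L^{2}}^{2}+\|\Delta\omega\|_{L^{2}}^{2}$ with Lemma \ref{lemma GN estimate}, \eqref{striking est 1}, \eqref{striking est 2}, \eqref{assump} and BDG, i.e.\ the argument of Sections \ref{sect bounds}--\ref{sect uniform in time} and \ref{sect BKM proof} -- does yield the bound (it is how the paper later proves the stronger estimate \eqref{bound 1}), but as you have ordered the steps it is circular: computing $\langle\Delta\mathcal{L}_{\xi_{k}}^{2}\omega,\Delta\omega\rangle$ and $\langle\Delta\mathcal{L}_{v}\omega,\Delta\omega\rangle$ rigorously requires $\omega(t)\in W^{4,2}$, which is precisely the regularity you defer to the final step (the paper states explicitly, at the start of Section \ref{sect bounds}, that these computations are justified \emph{by} the regularity claim of Lemma \ref{lemma regularized equation}). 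The fix is to reorder: the parabolic bootstrap is purely pathwise and local in time, so establish $C([\delta,T'];W^{4,2})$ on the local existence interval first, then run the energy estimate and extend. The paper's mild/Gr\"onwall route buys you exactly the avoidance of this issue, at the cost of a slightly less familiar singular-kernel Gr\"onwall lemma. One further small slip: $W^{2,2}(\mathbb{T}^{3})\not\subset W^{1,\infty}(\mathbb{T}^{3})$, so your justification that $\mathcal{L}_{v}\omega\in W^{1,2}$ should instead invoke $v\in W^{3,2}\subset W^{1,\infty}$ (Biot--Savart) together with $\omega\in W^{2,2}\subset L^{\infty}$; the conclusion, and hence your drift estimate, survives (the paper sidesteps this by estimating the drift only in $L^{2}$).
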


\begin{proof}
\textbf{Step 1} (preparation). In the following we assume to have fixed $T>0$
and that all constants are generically denoted by $C>0$ any constant, with the
understanding that it may depend on $T$.

Let $D\left(  A\right)  =W_{\sigma}^{10,2}\left(  \mathbb{T}^{3}%
;\mathbb{R}^{3}\right)  $ and $A:D\left(  A\right)  \subset L_{\sigma}%
^{2}\left(  \mathbb{T}^{3},\mathbb{R}^{3}\right)  \rightarrow L_{\sigma}%
^{2}\left(  \mathbb{T}^{3},\mathbb{R}^{3}\right)  $ be the operator
$A\omega=\nu\Delta^{5}\omega$; $L_{\sigma}^{2}\left(  \mathbb{T}%
^{3},\mathbb{R}^{3}\right)  $ denotes here the closure of $D\left(  A\right)
$ in $L^{2}\left(  \mathbb{T}^{3},\mathbb{R}^{3}\right)  $ (the trace of the
periodic boundary condition at the level of $L^{2}$ spaces can be
characterized, see \cite{Temam}). The operator $A$ is self-adjoint and
negative definite. Let $e^{tA}$ be the semigroup in $L_{\sigma}^{2}\left(
\mathbb{T}^{3},\mathbb{R}^{3}\right)  $ generated by $A$. The fractional
powers $\left(  I-A\right)  ^{\alpha}$ are well defined, for every $\alpha>0$,
and are bi-continuous bijections between $W_{\sigma}^{\beta,2}\left(
\mathbb{T}^{3};\mathbb{R}^{3}\right)  $ and $W_{\sigma}^{\beta-10\alpha
,2}\left(  \mathbb{T}^{3};\mathbb{R}^{3}\right)  $, for every $\beta
\geq10\alpha$, in particular%
\[
\left\Vert f\right\Vert _{W^{10\alpha,2}}\le C_{\alpha}\left\Vert \left(
I-A\right)  ^{\alpha}f\right\Vert _{L^{2}}%
\]
for some constant $C_{\alpha}>0$, 
for all $f\in W_{\sigma}^{10\alpha,2}\left(
\mathbb{T}^{3};\mathbb{R}^{3}\right)  $.

In the sequel we write $\left\langle f,g\right\rangle =\int_{\mathbb{T}^{3}%
}f\left(  x\right)  \cdot g\left(  x\right)  dx$. We work on the torus, which
simplifies some definitions and properties; thus we write $\left(
1-\Delta\right)  ^{s/2}f$ for the function having Fourier transform $\left(
1+\left\vert \xi\right\vert ^{2}\right)  ^{s/2}\widehat{f}\left(  \xi\right)  $
($\widehat{f}\left(  \xi\right)  $ being the Fourier transform of
$f$);\ similarly we write $\Delta^{-1}f$ for the function having Fourier
transform $\left\vert \xi\right\vert ^{-1}\widehat{f}\left(  \xi\right)  $.


The fractional powers commute with $e^{tA}$ and have the property (from the
general theory of analytic semigroups, see \cite{Pazy}) that for every
$\alpha>0$ and $T>0$
\[
\left\Vert \left(  I-A\right)  ^{\alpha}e^{tA}f\right\Vert _{L^{2}}\leq
\frac{C_{\alpha}}{t^{\alpha}}\left\Vert f\right\Vert _{L^{2}}%
\]
for all $t\in(0,T]$ and $f\in L_{\sigma}^{2}\left(  \mathbb{T}^{3}%
;\mathbb{R}^{3}\right)  $.

From these properties it follows that, for $p=2,4$%
\begin{equation}
\left\Vert \int_{0}^{t}e^{\left(  t-s\right)  A}f\left(  s\right)
ds\right\Vert _{W^{p,2}}^{2}\leq C\int_{0}^{t}\frac{1}{\left(  t-s\right)
^{p/10}}\left\Vert f\left(  s\right)  \right\Vert _{L^{2}}^{2}ds\leq
CT^{1-\frac{p}{10}}\sup_{t\in\left[  0,T\right]  }\left\Vert f\left(
s\right)  \right\Vert _{L^{2}}^{2}\label{regularity prop 1}%
\end{equation}
for all $f\in C\left(  \left[  0,T\right]  ;L_{\sigma}^{2}\left(
\mathbb{T}^{3};\mathbb{R}^{3}\right)  \right)  $ and $t\in\left[  0,T\right]
$, because
\[
\left\Vert \int_{0}^{t}e^{\left(  t-s\right)  A}f\left(  s\right)
ds\right\Vert _{W^{p,2}}\leq C\left\Vert \left(  I-A\right)  ^{p/10}\int%
_{0}^{t}e^{\left(  t-s\right)  A}f\left(  s\right)  ds\right\Vert _{L^{2}}\leq
C\int_{0}^{t}\frac{1}{\left(  t-s\right)  ^{p/10}}\left\Vert f\left(
s\right)  \right\Vert _{L^{2}}ds.
\]
\ In particular the map $f\mapsto\left(  \int_{0}^{t}e^{\left(  t-s\right)
A}f\left(  s\right)  ds\right)  _{t\in\left[  0,T\right]  }$ is linear
continuous from $C\left(  \left[  0,T\right]  ;L_{\sigma}^{2}\left(
\mathbb{T}^{3};\mathbb{R}^{3}\right)  \right)  $ to $C\left(  \left[
0,T\right]  ;W_{\sigma}^{2,2}\left(  \mathbb{T}^{3};\mathbb{R}^{3}\right)
\right)  $. Moreover, for $p=2,4$%
\begin{align}
\mathbb{E}\left[  \sup_{t\in\left[  0,T\right]  }\left\Vert \sum_{k=1}^{\infty}\int%
_{0}^{t}e^{\left(  t-s\right)  A}f_{k}\left(  s\right)  dB_{s}^{k}\right\Vert
_{W^{p,2}}^{2}\right]    & \leq C\mathbb{E}\left[  \sum_{k=1}^{\infty}\int_{0}^{t}%
\frac{1}{\left(  t-s\right)  ^{p/5}}\left\Vert f_{k}\left(  s\right)
\right\Vert _{L^{2}}^{2}ds\right]  \nonumber\\
& =CT^{1-p/5}\mathbb{E}\left[  \sup_{s\in\left[  0,T\right]  }\sum_{k=1}^{\infty
}\left\Vert f_{k}\left(  s\right)  \right\Vert _{L^{2}}^{2}\right]
\label{regularity prop 3}%
\end{align}%
\[
\]
because%
\begin{align*}
&  \mathbb{E}\left[  \sup_{t\in\left[  0,T\right]  }\left\Vert \sum_{k=1}^{\infty}%
\int_{0}^{t}e^{\left(  t-s\right)  A}f_{k}\left(  s\right)  dB_{s}%
^{k}\right\Vert _{W^{p,2}}^{2}\right]  =\mathbb{E}\left[  \sup_{t\in\left[  0,T\right]
}\left\Vert \sum_{k=1}^{\infty}\int_{0}^{t}\left(  I-A\right)  ^{p/10}%
e^{\left(  t-s\right)  A}f_{k}\left(  s\right)  dB_{s}^{k}\right\Vert _{L^{2}%
}^{2}\right]  \\
&  \leq C\mathbb{E}\left[  \sum_{k=1}^{\infty}\int_{0}^{t}\left\Vert \left(
I-A\right)  ^{p/10}e^{\left(  t-s\right)  A}f_{k}\left(  s\right)  \right\Vert
_{L^{2}}^{2}ds\right]  \leq C\mathbb{E}\left[  \sum_{k=1}^{\infty}\int_{0}^{t}\frac
{1}{\left(  t-s\right)  ^{p/5}}\left\Vert f_{k}\left(  s\right)  \right\Vert
_{L^{2}}^{2}ds\right]  .
\end{align*}
It is here that we use the power $5$ of $\Delta$, otherwise a smaller power
would suffice.

\textbf{Step 2} (preparation, cont.) The function $\omega\mapsto\kappa
_{R}\left(  \omega\right)  \mathcal{L}_{v}\omega$ from $W^{2,2}\left(
\mathbb{T}^{3};\mathbb{R}^{3}\right)  $ to $L^{2}\left(  \mathbb{T}%
^{3};\mathbb{R}^{3}\right)  $ is Lipschitz continuous and it has linear grows
(the constants in both properties depend on $R$). Let us check the Lipschitz
continuity; the linear growth is an easy consequence, applying Lipschitz
continuity with respect to a given element $\omega^{0}$.


It is sufficient to check Lipschitz continuity in any ball $B\left(
0,r\right)  $, centered at the origin of radius $r$, in $W^{2,2}\left(
\mathbb{T}^{3};\mathbb{R}^{3}\right)  $. Indeed, when it is true, one can
argue as follows. Take $\omega^{\left(  i\right)  }$, $i=1,2$, in
$W^{2,2}\left(  \mathbb{T}^{3};\mathbb{R}^{3}\right)  $. If they belong to
$B\left(  0,R+2\right)  $, we have Lipschitz continuity. The case that both are outside $B\left(  0,R+2\right)  $ is trivial, because the cut-off function
vanishes. If one is inside $B\left(  0,R+2\right)  $ and the other outside,
consider the two cases:\ if the one inside is outside $B\left(  0,R+1\right)
$, it is trivial again, because the cut-off function vanishes for both
functions. If the one inside, say $\omega^{\left(  1\right)  }$, is in
$B\left(  0,R\right)  $, then $\mathcal{L}_{v^{\left(  1\right)  }}%
\omega^{\left(  1\right)  }\kappa_{R}\left(  \omega^{\left(  1\right)
}\right)  -\mathcal{L}_{v^{\left(  2\right)  }}\omega^{\left(  2\right)
}\kappa_{R}\left(  \omega^{\left(  2\right)  }\right)  =\mathcal{L}%
_{v^{\left(  1\right)  }}\omega^{\left(  1\right)  }\kappa_{R}\left(
\omega^{\left(  1\right)  }\right)  $;$\ $\ one has $\left\Vert \mathcal{L}%
_{v^{\left(  1\right)  }}\omega^{\left(  1\right)  }\kappa_{R}\left(
\omega^{\left(  1\right)  }\right)  \right\Vert _{L^{2}}\leq C_{R}$ (same
computations done below) and $\left\Vert \omega^{\left(  1\right)  }%
-\omega^{\left(  2\right)  }\right\Vert _{W^{2,2}}\geq c_{R}$, for two
constants $c_{R},C_{R}>0$, hence
\[
\left\Vert \mathcal{L}_{v^{\left(  1\right)  }}\omega^{\left(  1\right)
}\kappa_{R}\left(  \omega^{\left(  1\right)  }\right)  -\mathcal{L}%
_{v^{\left(  2\right)  }}\omega^{\left(  2\right)  }\kappa_{R}\left(
\omega^{\left(  2\right)  }\right)  \right\Vert _{L^{2}}\leq\frac{C_{R}}%
{c_{R}}\left\Vert \omega^{\left(  1\right)  }-\omega^{\left(  2\right)
}\right\Vert _{W^{2,2}}.
\]

Therefore, let us prove that the function $\omega\mapsto\kappa_{R}\left(
\omega\right)  \mathcal{L}_{v}\omega$ from $W^{2,2}\left(  \mathbb{T}%
^{3};\mathbb{R}^{3}\right)  $ to $L^{2}\left(  \mathbb{T}^{3};\mathbb{R}%
^{3}\right)  $ is Lipschitz continuous on $B\left(  0,r\right)  \subset
W^{2,2}\left(  \mathbb{T}^{3};\mathbb{R}^{3}\right)  $. Given $\omega^{\left(
i\right)  }\in B\left(  0,r\right)  $, $i=1,2$, let us use the decomposition%
\begin{align*}
&  \mathcal{L}_{v^{\left(  1\right)  }}\omega^{\left(  1\right)  }\kappa
_{R}\left(  \omega^{\left(  1\right)  }\right)  -\mathcal{L}_{v^{\left(
2\right)  }}\omega^{\left(  2\right)  }\kappa_{R}\left(  \omega^{\left(
2\right)  }\right) \\
&  =\mathcal{L}_{v^{\left(  1\right)  }}\left(  \omega^{\left(  1\right)
}-\omega^{\left(  2\right)  }\right)  \kappa_{R}\left(  \omega^{\left(
1\right)  }\right)  +\mathcal{L}_{\left(  v^{\left(  1\right)  }-v^{\left(
2\right)  }\right)  }\omega^{\left(  2\right)  }\kappa_{R}\left(
\omega^{\left(  2\right)  }\right)  +\mathcal{L}_{v^{\left(  1\right)  }%
}\omega^{\left(  2\right)  }\left(  \kappa_{R}\left(  \omega^{\left(
1\right)  }\right)  -\kappa_{R}\left(  \omega^{\left(  2\right)  }\right)
\right)  .
\end{align*}
Then%
\begin{align*}
&  \left\Vert \mathcal{L}_{v^{\left(  1\right)  }}\left(  \omega^{\left(
1\right)  }-\omega^{\left(  2\right)  }\right)  \kappa_{R}\left(
\omega^{\left(  1\right)  }\right)  \right\Vert _{L^{2}}^{2}\\
&  \leq\kappa_{R}\left(  \omega^{\left(  1\right)  }\right)  ^{2}\left\Vert
v^{\left(  1\right)  }\right\Vert _{\infty}^{2}\left\Vert \nabla\left(
\omega^{\left(  1\right)  }-\omega^{\left(  2\right)  }\right)  \right\Vert
_{L^{2}}^{2}+\kappa_{R}\left(  \omega^{\left(  1\right)  }\right)
^{2}\left\Vert \omega^{\left(  1\right)  }-\omega^{\left(  2\right)
}\right\Vert _{L^{4}}^{2}\left\Vert \nabla v^{\left(  1\right)  }\right\Vert
_{L^{4}}^{2}\\
&  \leq\kappa_{R}\left(  \omega^{\left(  1\right)  }\right)  ^{2}\left\Vert
\nabla v^{\left(  1\right)  }\right\Vert _{\infty}^{2}\left\Vert
\omega^{\left(  1\right)  }-\omega^{\left(  2\right)  }\right\Vert _{W^{2,2}%
}^{2}+\kappa_{R}\left(  \omega^{\left(  1\right)  }\right)  ^{2}\left\Vert
\omega^{\left(  1\right)  }-\omega^{\left(  2\right)  }\right\Vert _{W^{2,2}%
}^{2}\left\Vert \nabla v^{\left(  1\right)  }\right\Vert _{\infty}^{2}\\
&  \leq R^{2}\left\Vert \omega^{\left(  1\right)  }-\omega^{\left(  2\right)
}\right\Vert _{W^{2,2}}^{2};
\end{align*}
similarly%
\begin{align*}
&  \left\Vert \mathcal{L}_{\left(  v^{\left(  1\right)  }-v^{\left(  2\right)
}\right)  }\omega^{\left(  2\right)  }\kappa_{R}\left(  \omega^{\left(
2\right)  }\right)  \right\Vert _{L^{2}}^{2}\\
&  \leq\kappa_{R}\left(  \omega^{\left(  2\right)  }\right)  ^{2}\left\Vert
\nabla\left(  v^{\left(  1\right)  }-v^{\left(  2\right)  }\right)
\right\Vert _{\infty}^{2}\left\Vert \omega^{\left(  2\right)  }\right\Vert
_{W^{2,2}}^{2}\\
&  \leq Cr^{2}\left\Vert v^{\left(  1\right)  }-v^{\left(  2\right)
}\right\Vert _{W^{3,2}}^{2}\leq Cr^{2}\left\Vert \omega^{\left(  1\right)
}-\omega^{\left(  2\right)  }\right\Vert _{W^{2,2}}^{2}%
\end{align*}
by the Sobolev embedding theorem and (\ref{Biot Savart}). Finally%
\begin{align*}
&  \left\Vert \mathcal{L}_{v^{\left(  1\right)  }}\omega^{\left(  2\right)
}\left(  \kappa_{R}\left(  \omega^{\left(  1\right)  }\right)  -\kappa
_{R}\left(  \omega^{\left(  2\right)  }\right)  \right)  \right\Vert _{L^{2}%
}^{2}\\
&  \leq\left\vert \kappa_{R}\left(  \omega^{\left(  1\right)  }\right)
-\kappa_{R}\left(  \omega^{\left(  2\right)  }\right)  \right\vert
^{2}\left\Vert \nabla v^{\left(  1\right)  }\right\Vert _{\infty}%
^{2}\left\Vert \omega^{\left(  2\right)  }\right\Vert _{W^{2,2}}^{2}\\
&  \leq C\left\vert \kappa_{R}\left(  \omega^{\left(  1\right)  }\right)
-\kappa_{R}\left(  \omega^{\left(  2\right)  }\right)  \right\vert
^{2}\left\Vert \omega^{\left(  2\right)  }\right\Vert _{W^{2,2}}^{4}\\
&  \leq Cr^{4}\left\vert \kappa_{R}\left(  \omega^{\left(  1\right)  }\right)
-\kappa_{R}\left(  \omega^{\left(  2\right)  }\right)  \right\vert ^{2}%
\end{align*}
because $\left\Vert \nabla v^{\left(  1\right)  }\right\Vert _{\infty}^{2}\leq
C\left\Vert \omega^{\left(  2\right)  }\right\Vert _{W^{2,2}}^{2}$ as above,
and then we use the Lipschitz continuity of the function $\omega\mapsto
\kappa_{R}\left(  \omega\right)  $.

\textbf{Step 3} (local solution by fixed point). Given $\omega_{0}\in
L^{2}\left(  \Xi;W_{\sigma}^{2,2}\left(  \mathbb{T}^{3},\mathbb{R}^{3}\right)
\right)  $, $\mathcal{F}_{0}$-measurable, consider the mild equation%
\[
\omega\left(  t\right)  =\left(  \Gamma\omega\right)  \left(  t\right)
\]
where
\begin{align*}
\left(  \Gamma\omega\right)  \left(  t\right)   &  =e^{tA}\omega_{0}-\int%
_{0}^{t}e^{\left(  t-s\right)  A}\mathcal{L}_{v\left(  s\right)  }%
\omega\left(  s\right)  \kappa_{R}\left(  \omega\left(  s\right)  \right)
ds\\
&  +\int_{0}^{t}e^{\left(  t-s\right)  A}\frac{1}{2}\sum_{k=1}^{\infty
}\mathcal{L}_{\xi_{k}}^{2}\omega\left(  s\right)  ds-\sum_{k=1}^{\infty}%
\int_{0}^{t}e^{\left(  t-s\right)  A}\mathcal{L}_{\xi_{k}}\omega\left(
s\right)  dB_{s}^{k}\,
\end{align*}
with, as usual, $\operatorname{curl}v=\omega$, $\operatorname{div}v=0$. Set
$\mathcal{Y}_{T}:=L^{2}\left(  \Xi;C\left(  \left[  0,T\right]  ;W^{2,2}%
\left(  \mathbb{T}^{3};\mathbb{R}^{3}\right)  \right)  \right)  $. The map
$\Gamma$, applied to an element $\omega\in\mathcal{Y}_{T}$, gives us an
element $\Gamma\omega$ of the same space. Indeed:

i)\ $e^{tA}$ is bounded in $W^{2,2}\left(  \mathbb{T}^{3};\mathbb{R}%
^{3}\right)  $ (for instance because it commutes with $\left(  I-A\right)
^{1/5}$) hence $e^{tA}\omega_{0}$ is in $\mathcal{Y}_{T}$;

ii)\ $\mathcal{L}_{v}\omega\kappa_{R}\left(  \omega\right)  \in L^{2}\left(
\Xi;C\left(  \left[  0,T\right]  ;L_{\sigma}^{2}\left(  \mathbb{T}%
^{3};\mathbb{R}^{3}\right)  \right)  \right)  $ by Step 2, hence $\int_{0}%
^{t}e^{\left(  t-s\right)  A}\mathcal{L}_{v\left(  s\right)  }\omega\left(
s\right)  \kappa_{R}\left(  \omega\left(  s\right)  \right)  ds$ is an element
of $\mathcal{Y}_{T}$, by property (\ref{regularity prop 1}) of Step 1;

iii) $\sum_{k=1}^{\infty}\mathcal{L}_{\xi_{k}}^{2}\omega\in L^{2}\left(
\Xi;C\left(  \left[  0,T\right]  ;L^{2}\left(  \mathbb{T}^{3};\mathbb{R}%
^{3}\right)  \right)  \right)  $ from assumption (\ref{Hp 1}), hence $\int%
_{0}^{t}e^{\left(  t-s\right)  A}\frac{1}{2}\sum_{k=1}^{\infty}\mathcal{L}%
_{\xi_{k}}^{2}\omega\left(  s\right)  ds$ is in $\mathcal{Y}_{T}$ by property
(\ref{regularity prop 1});

iv) since, by assumption (\ref{Hp 2}),
\[
\sum_{k=1}^{\infty}\left\Vert \mathcal{L}_{\xi_{k}}\omega\left(  s\right)
\right\Vert _{L^{2}}^{2}\leq C\left\Vert \omega\left(  s\right)  \right\Vert
_{W^{2,2}}^{2}%
\]
we apply property (\ref{regularity prop 3}) and get that $\sum_{k=1}^{\infty
}\int_{0}^{t}e^{\left(  t-s\right)  A}\mathcal{L}_{\xi_{k}}\omega\left(
s\right)  dB_{s}^{k}$ is in $L^{2}\left(  \Xi;C\left(  \left[  0,T\right]
;W^{2,2}\left(  \mathbb{T}^{3};\mathbb{R}^{3}\right)  \right)  \right)  $.

The proof that $\Gamma$ is Lipschitz continuous in $\mathcal{Y}_{T}$ is based
on the same facts, in particular the Lipschitz continuity proved in Step 2.
Then, using the smallness of the constants for small $T$ in properties
(\ref{regularity prop 1}) and (\ref{regularity prop 3}) of Step 1, one gets
that $\Gamma$ is a contraction in $\mathcal{Y}_{T}$, for sufficiently small
$T>0$.

\textbf{Step 4} (\textit{a priori} estimate and global solution). The length
of the time interval of the local solution proved in Step 2 depends only on
the $L^{2}\left(  \Xi;W_{\sigma}^{2,2}\left(  \mathbb{T}^{3},\mathbb{R}%
^{3}\right)  \right)  $ norm of $\omega_{0}$. If we prove that, given $T>0$
and the initial condition $\omega_{0}$, there is a constant $C>0$ such that a
solution $\omega$ defined on $\left[  0,T\right]  $ has $\sup_{t\in\left[
0,T\right]  }\mathbb{E}\left[  \left\Vert \omega\left(  t\right)  \right\Vert
_{W^{2,2}}^{2}\right]  \leq C$, then we can repeatedly apply the local result
of Step 2 and cover any time interval. 


Thus we need such \textit{a priori}
bound. Let $\omega$ be such a solution, namely satisfying $\omega=\Gamma
\omega$ on $\left[  0,T\right]  $. From the bounds (\ref{regularity prop 1})
and (\ref{regularity prop 3}) of Step 1, we have%
\begin{align*}
\mathbb{E}\left[  \left\Vert \omega\left(  t\right)  \right\Vert _{W^{2,2}}^{2}\right]
&  \leq C\mathbb{E}\left[  \left\Vert e^{tA}\omega_{0}\right\Vert _{W^{2,2}}%
^{2}\right]  +C\mathbb{E}\left[  \left\Vert \int_{0}^{t}e^{\left(  t-s\right)
A}\mathcal{L}_{v\left(  s\right)  }\omega\left(  s\right)  \kappa_{R}\left(
\omega\left(  s\right)  \right)  ds\right\Vert _{W^{2,2}}^{2}\right] \\
&  +C\mathbb{E}\left[  \left\Vert \int_{0}^{t}e^{\left(  t-s\right)  A}\frac{1}{2}%
\sum_{k=1}^{\infty}\mathcal{L}_{\xi_{k}}^{2}\omega\left(  s\right)
ds\right\Vert _{W^{2,2}}^{2}\right]  +C\mathbb{E}\left[  \left\Vert \sum_{k=1}^{\infty
}\int_{0}^{t}e^{\left(  t-s\right)  A}\mathcal{L}_{\xi_{k}}\omega\left(
s\right)  dB_{s}^{k}\right\Vert _{W^{2,2}}^{2}\right]
\end{align*}%
\[
\leq C\mathbb{E}\left[  \left\Vert \omega_{0}\right\Vert _{W^{2,2}}^{2}\right]
+C\mathbb{E}\left[  \int_{0}^{t}\frac{1}{\left(  t-s\right)  ^{2/5}}\left\Vert
\omega\left(  s\right)  \right\Vert _{W^{2,2}}^{2}ds\right]
\]
hence we may apply a generalized version of the Gr\"onwall lemma and conclude
that
\[
\sup_{t\in\left[  0,T\right]  }\mathbb{E}\left[  \left\Vert \omega\left(  t\right)
\right\Vert _{W^{2,2}}^{2}\right]  \leq C \,.
\]

\textbf{Step 5} (regularity). Let $\omega$ be the solution constructed in the
previous steps; it is the sum of the four terms given by the mild formulation
$\omega=\Gamma\omega$. {{}By the property $e^{tA}\omega_{0}\in D\left(  A\right)  $, namely
$Ae^{tA}\omega_{0}\in L_{\sigma}^{2}\left(  \mathbb{T}^{3},\mathbb{R}%
^{3}\right)  $, for all $t>0$ and $\omega_{0}\in L_{\sigma}^{2}\left(
\mathbb{T}^{3},\mathbb{R}^{3}\right)  $ (see [Pazy], property (5.7) in Theorem
5.2 of Chapter 2, due to the fact that $e^{tA}$ is an analytic semigroup), we
may take $\delta>0$ and have $Ae^{\delta A}\omega_{0}\in L_{\sigma}^{2}\left(
\mathbb{T}^{3},\mathbb{R}^{3}\right)  $; then, for $t\in\left[  \delta
,T\right]  $, we have $Ae^{tA}\omega_{0}=e^{\left(  t-\delta\right)
A}Ae^{\delta A}\omega_{0}=e^{\left(  t-\delta\right)  A}\omega_{\delta}$ where
$\omega_{\delta}:=Ae^{\delta A}\omega_{0}$ is an element of $L_{\sigma}%
^{2}\left(  \mathbb{T}^{3},\mathbb{R}^{3}\right)  $. Since $t\mapsto
e^{\left(  t-\delta\right)  A}\omega_{\delta}$ is continuous on $\left[
\delta,T\right]  $ (because the semigroup is strongly continuous), it follows
that $t\mapsto Ae^{tA}\omega_{0}$ is continuous on $\left[  \delta,T\right]
$, namely $t\mapsto e^{tA}\omega_{0}$ belongs to $C\left(  \left[
\delta,T\right]  ;D\left(  A\right)  \right)  $. In particular, it implies
$e^{tA}\omega_{0}\in C\left(  \left[  \delta,T\right]  ;W^{4,2}\left(
\mathbb{T}^{3},\mathbb{R}^{3}\right)  \right)  $, for every $T>\delta>0$.
} The two Lebesgue
integrals in $\Gamma\omega$ belong, pathwise a.s., to $C\left(  \left[
0,T\right]  ;W^{4,2}\left(  \mathbb{T}^{3};\mathbb{R}^{3}\right)  \right)  $,
for every $T>0$, because of property (\ref{regularity prop 1}), and the fact
that $\mathcal{L}_{v}\omega\kappa_{R}\left(  \omega\right)  $ and $\sum
_{k=1}^{\infty}\mathcal{L}_{\xi_{k}}^{2}\omega$ are, pathwise a.s., elements
of $C\left(  \left[  0,T\right]  ;L^{2}\left(  \mathbb{T}^{3};\mathbb{R}%
^{3}\right)  \right)  $, as showed in Step 2. Finally, the stochastic integral
in $\Gamma\omega$ belongs to $L^{2}\left(  \Xi;C\left(  \left[  0,T\right]
;W^{4,2}\left(  \mathbb{T}^{3};\mathbb{R}^{3}\right)  \right)  \right)  $ by
property (\ref{regularity prop 3}), and the fact that $\mathbb{E}\left[  \sup
_{s\in\left[  0,T\right]  }\sum_{k=1}^{\infty}\left\Vert \mathcal{L}_{\xi_{k}%
}\omega\left(  s\right)  \right\Vert _{L_{\sigma}^{2}}^{2}\right]  <\infty$,
as showed again in Step 2.
\end{proof}

\begin{definition}
On a complete separable metric space $\left(  X,d\right)  $, a family
$F=\left\{  \mu_{\nu}\right\}  _{\nu>0}$ of probability measures is called
\textit{tight} if for every $\epsilon>0$ there is a compact set $K_{\epsilon
}\subset X$ such that $\mu_{\nu}\left(  K_{\epsilon}\right)  \geq1-\epsilon$
for all $\nu>0$.
\end{definition}

\begin{remark}
The Prohorov theorem states that, for a tight family of probability measures,
one can extract a sequence $\left\{  \mu_{\nu_{n}}\right\}  _{n\in\mathbb{N}}$
which weakly converges to some probability measure,
\[
\mu:\ \lim_{n\rightarrow\infty}\int_{X}\varphi d\mu_{\nu_{n}}=\int_{X}\varphi
d\mu\,,
\]
for all bounded continuous functions $\varphi:X\rightarrow\mathbb{R}$. We
repeatedly use these facts below.
\end{remark}

In order to prove Proposition \ref{mtpart1}, we want to prove that the family
of solutions $\left\{  \omega_{R}^{\nu}\right\}  _{\nu>0}$ ($R$ is given)
provided by Lemma \ref{lemma regularized equation} is compact is a suitable
sense and that a converging subsequence extracted from this family converges
to a solution of equation (\ref{passed to limit}). Since $\left\{  \omega
_{R}^{\nu}\right\}  _{\nu>0}$ are random processes, the classical method we
follow is to prove compactness of their laws $\left\{  \mu_{\nu}\right\}
_{\nu>0}$. For this purpose, we have to prove that $\left\{  \mu_{\nu
}\right\}  _{\nu>0}$ is tight and we have to apply Prohorov theorem, as
recalled above. The metric space where we prove tightness of the laws will be
the space $E$ given by (\ref{spaces}) below.\footnote{The authors would like to thank Zdzislaw Brzezniak for pointing out to a gap in an earlier version of the tightness argument.} 


\begin{lemma}\label{int}
Let $T>0$, $R>0$ and $\omega_{0}\in W_{\sigma}^{2,2}\left(  \mathbb{T}%
^{3},\mathbb{R}^{3}\right)  $ be given. Assume that the family of laws of
$\left\{  \omega_{R}^{\nu}\right\}  _{\nu>0}$ is tight in the space
\begin{equation}
E=
C\left(  \left[  0,T\right]  ;W_{\sigma}^{\beta,2}\left(
\mathbb{T}^{3},\mathbb{R}^{3}\right)  \right)  \label{spaces}%
\end{equation}
for some $\beta>\frac{3}{2}$ and satisfies, for some constant $C_R>0$,
\[
\mathbb{E}\left[\sup_{t\in [0,T]}\|\omega^\nu_R(t)\|^2_{W^{2,2}}\right]\le C_R
\]
for every $\nu>0$. Then the existence claim of Proposition
\ref{Prop cut off} holds true, and thus Theorem \ref{main theorem} is proved.
\end{lemma}

\begin{proof}
$\,$

\textbf{Step 1} (Gyongy-Krylov approach).\ We base our proof on classical
ingredients, but also on the following fact proved in \cite{GyoKry}, Lemma
1.1. Let $\left\{  Z_{n}\right\}  _{n\in\mathbb{N}}$ be a sequence of random
variables (r.v.) with values in a Polish space $\left(  E,d\right)  $ endowed
with the Borel $\sigma$-algebra $\mathcal{B}\left(  E\right)  $. Assume that
the family of laws of $\left\{  Z_{n}\right\}  _{n\in\mathbb{N}}$ is tight.
Moreover, assume that the limit in law of any pair $\left(  Z_{n_{j}^{\left(
1\right)  }},Z_{n_{j}^{\left(  2\right)  }}\right)  _{j\in\mathbb{N}}$ of
subsequences is a measure, on $E\times E$, supported on the diagonal of
$E\times E$. Then $\left\{  Z_{n}\right\}  _{n\in\mathbb{N}}$ converges in
probability to some r.v. $Z$.

We take as Polish space $E$ the space (\ref{spaces}) above, as random
variables $\left\{  Z_{n}\right\}  _{n\in\mathbb{N}}$ the sequence $\left\{
\omega_{R}^{1/n}\right\}  _{n\in\mathbb{N}}$, whose family of laws is tight by
assumption. We have to check that the limit in law of any pair $\left(
\omega_{R}^{1/n_{j}^{\left(  1\right)  }},\omega_{R}^{1/n_{j}^{\left(
2\right)  }}\right)  _{j\in\mathbb{N}}$ is supported on the diagonal of
$E\times E$. For this purpose, we shall use global uniqueness.\medskip

\textbf{Step 2} (Preparation by the Skorokhod theorem). Let us enlarge the
previous pair by the noise and consider the following triple: the sequence
$\left\{  \omega_{R}^{1/n_{j}^{\left(  1\right)  }},\omega_{R}^{1/n_{j}%
^{\left(  2\right)  }},\left\{  B_{\cdot}^{k}\right\}  _{k\in\mathbb{N}%
}\right\}  _{j\in\mathbb{N}}$ converges in law to a probability measure $\mu$,
on $E\times E\times C\left(  \left[  0,T\right]  \right)  ^{\mathbb{N}}$. We
have to prove that the marginal $\mu_{E\times E}$ of $\mu$ on $E\times E$ is
supported on the diagonal. By the Skorokhod representation theorem, there
exists a probability space $\left(  \widetilde{\Xi},\widetilde{\mathcal{F}%
},\widetilde{P}\right)  $ and $E\times E\times C\left(  \left[  0,T\right]
\right)  ^{\mathbb{N}}$-valued random variables $\left\{  \widetilde{\omega
}_{R}^{1,j},\widetilde{\omega}_{R}^{2,j},\left\{  \widetilde{B}_{\cdot}%
^{k,j}\right\}  _{k\in\mathbb{N}}\right\}  _{j\in\mathbb{N}}$ and $\left(
\widetilde{\omega}_{R}^{1},\widetilde{\omega}_{R}^{2},\left\{  \widetilde{B}%
_{\cdot}^{k}\right\}  _{k\in\mathbb{N}}\right)  $ with the same laws as
$\left\{  \omega_{R}^{1/n_{j}^{\left(  1\right)  }},\omega_{R}^{1/n_{j}%
^{\left(  2\right)  }},\left\{  B_{\cdot}^{k}\right\}  _{k\in\mathbb{N}%
}\right\}  _{j\in\mathbb{N}}$ and $\mu$, respectively,\footnote{In particular,
for each $j$, $\left\{  \widetilde{B}_{\cdot}^{k,j}\right\}  _{k\in\mathbb{N}%
}$ is a sequence of independent Brownian motions.} such that as $j\rightarrow
\infty$ one has $\widetilde{\omega}_{R}^{1,j}\rightarrow\widetilde{\omega}%
_{R}^{1}$ in $E$, $\widetilde{\omega}_{R}^{2,j}\rightarrow\widetilde{\omega
}_{R}^{2}$ in $E$, $\widetilde{B}_{\cdot}^{k,j}\rightarrow\widetilde{B}%
_{\cdot}^{k}$ in $C\left(  \left[  0,T\right]  \right)  $, a.s. In particular,
$\left\{  \widetilde{B}_{\cdot}^{k}\right\}  _{k\in\mathbb{N}}$ is a sequence
of independent Brownian motions.

Since the pairs $\left(  \omega_{R}^{1/n_{j}^{\left(  i\right)  }},\left\{
B_{\cdot}^{k}\right\}  _{k\in\mathbb{N}}\right)  $, $i=1,2$, solve equation
(\ref{regularized SPDE}) and $\left(  \widetilde{\omega}_{R}^{i,j},\left\{
\widetilde{B}_{\cdot}^{k,j}\right\}  _{k\in\mathbb{N}}\right)  $ have the same
laws (being marginals of vectors with the same laws), by a classical argument
(see for instance \cite{DaPratoZab}) the pairs $\left(  \widetilde{\omega}%
_{R}^{i,j},\left\{  \widetilde{B}_{\cdot}^{k,j}\right\}  _{k\in\mathbb{N}%
}\right)  $, $i=1,2$, also solve equation (\ref{regularized SPDE}), with
$\nu_{j}^{\left(  i\right)  }:=1/n_{j}^{\left(  i\right)  }$, $i=1,2$,
respectively. In other words,%
\begin{equation}
d\widetilde{\omega}_{R}^{i,j}+\kappa_{R}\left(  \widetilde{\omega}_{R}%
^{i,j}\right)  \mathcal{L}_{\widetilde{v}_{R}^{i,j}}\widetilde{\omega}%
_{R}^{i,j}\ dt+\sum_{k=1}^{\infty}\mathcal{L}_{\xi_{k}}\widetilde{\omega}%
_{R}^{i,j}d\widetilde{B}_{t}^{k,j}=\nu_{j}^{\left(  i\right)  }\Delta
^{5}\widetilde{\omega}_{R}^{i,j}dt+\frac{1}{2}\sum_{k=1}^{\infty}%
\mathcal{L}_{\xi_{k}}^{2}\widetilde{\omega}_{R}^{i,j}\ dt
\label{to be passed to limit}%
\end{equation}
with $\widetilde{\omega}_{R}^{i,j}|_{t=0}=\omega_{0}$, where
$\widetilde{\omega}_{R}^{i,j}=\operatorname{curl}\widetilde{v}_{R}^{i,j}$.
\medskip

\textbf{Step 3} (property of being supported on the diagonal). The passage to
the limit in equation \eqref{to be passed to limit} when there is strong
convergence ($\widetilde{P}$-a.s.) in $L^{2}\left(  0,T;L^{2}\left(
\mathbb{T}^{3},\mathbb{R}^{3}\right)  \right)  $ is relatively classical, see
\cite{FlaGat}. We sketch the main points in Step 4 below. One deduces%
\begin{equation}
d\widetilde{\omega}_{R}^{i}+\kappa_{R}\left(  \widetilde{\omega}_{R}%
^{i}\right)  \mathcal{L}_{\widetilde{v}_{R}^{i}}\widetilde{\omega}_{R}%
^{i}\ dt+\sum_{k=1}^{\infty}\mathcal{L}_{\xi_{k}}\widetilde{\omega}_{R}%
^{i}d\widetilde{B}_{t}^{k}=\frac{1}{2}\sum_{k=1}^{\infty}\mathcal{L}_{\xi_{k}%
}^{2}\widetilde{\omega}_{R}^{i}\ dt \label{passed to limit}%
\end{equation}
in the weak sense explained in Remark \ref{remark weak sol}. Since
$\widetilde{\omega}_{R}^{i}$ have paths in $C\left(  \left[  0,T\right]
;W^{2,2}\left(  \mathbb{T}^{3},\mathbb{R}^{3}\right)  \right)  $ (see Step 4 below), the
derivatives can be applied on $\widetilde{\omega}_{R}^{i}$ by integration by
parts and we get the equation in the strong sense. Now we apply the pathwise uniqueness of solutions for equation
(\ref{eq Euler Ito cutoff}) in $W^{2,2}$ as deduced in Section \ref{sect uniqueness W22} to deduce $\widetilde{\omega}%
_{R}^{1}=\widetilde{\omega}_{R}^{2}$. This means that the law of $\left(
\widetilde{\omega}_{R}^{1},\widetilde{\omega}_{R}^{2}\right)  $ is supported
on the diagonal of $E\times E$. Since this law is equal to $\mu_{E\times E}$,
we have that $\mu_{E\times E}$ is supported on the diagonal of $E\times E$.
\medskip

\textbf{Step 4} (convergence). In this step we give a few details about the
passage to the limit, as $j\rightarrow\infty$, from equation
(\ref{to be passed to limit}) to equation (\ref{passed to limit}). We do not
give the details about the linear terms, except for a comment about the term
$\nu_{j}^{\left(  i\right)  }\Delta^{5}\widetilde{\omega}_{R}^{i,j}$. Namely,
in weak form we write it as (with $\phi\in C^{\infty}\left(  \mathbb{T}%
^{3},\mathbb{R}^{3}\right)  $)%
\[
\nu_{j}^{\left(  i\right)  }\int_{0}^{t}\left\langle \widetilde{\omega}%
_{R}^{i,j}\left(  s\right)  ,\Delta^{5}\phi\right\rangle ds
\]
and use the pathwise convergence in $L^{2}\left(  0,T;L^{2}\left(
\mathbb{T}^{3},\mathbb{R}^{3}\right)  \right)  $ of $\widetilde{\omega}%
_{R}^{i,j}\left(  s\right)  $ plus the fact that $\nu_{j}^{\left(  i\right)
}\rightarrow0$.

The difficult term is the nonlinear one, also because of the cut-off term
$\kappa_{R}\left(  \widetilde{\omega}_{R}^{i}\left(  s\right)  \right)  $. We
want to prove that, given $\phi\in C^{\infty}\left(  \mathbb{T}^{3}%
,\mathbb{R}^{3}\right)  $,
\begin{equation}
\int_{0}^{t}\kappa_{R}\left(  \widetilde{\omega}_{R}^{i,j}\left(  s\right)
\right)  \left\langle \widetilde{\omega}_{R}^{i,j}\left(  s\right)
,\mathcal{L}_{\widetilde{v}_{R}^{i,j}}^{\ast}\phi\right\rangle
ds\overset{j\rightarrow\infty}{\rightarrow}\int_{0}^{t}\kappa_{R}\left(
\widetilde{\omega}_{R}^{i}\left(  s\right)  \right)  \left\langle
\widetilde{\omega}_{R}^{i}\left(  s\right)  ,\mathcal{L}_{\widetilde{v}%
_{R}^{i}}^{\ast}\phi\right\rangle ds \label{difficult limit}%
\end{equation}
with probability one. {{}From the Skorohod preparation in Step 2, we know that $\widetilde{\omega}%
_{R}^{i,j}\rightarrow\widetilde{\omega}_{R}^{i}$ as $j\rightarrow\infty$ in
the strong topology of $E$, $\widetilde{P}$-a.s., for $i=1,2$. In the sequel,
we fix the random parameter and the value of $i=1,2$. Since $W_{\sigma}%
^{\beta,2}\left(  \mathbb{T}^{3},\mathbb{R}^{3}\right)  $ is continuously
embedded into $C\left(  \mathbb{T}^{3},\mathbb{R}^{3}\right)  $ (recall that
$\beta>3/2$), it follows that $\widetilde{\omega}_{R}^{i,j}\rightarrow
\widetilde{\omega}_{R}^{i}$ in the uniform topology over $\left[  0,T\right]
\times\mathbb{T}^{3}$. By the continuity of Biot-Savart map from $W_{\sigma
}^{\beta,2}\left(  \mathbb{T}^{3},\mathbb{R}^{3}\right)  $ to $W_{\sigma
}^{\beta+1,2}\left(  \mathbb{T}^{3},\mathbb{R}^{3}\right)  $ and the formula
for $L_{\widetilde{v}_{R}^{i,j}}^{\ast}$ which contains first derivatives of
$\widetilde{v}_{R}^{i,j}$ we see that $L_{\widetilde{v}_{R}^{i,j}}^{\ast}%
\phi\rightarrow L_{\widetilde{v}_{R}^{i}}^{\ast}\phi$ in the strong topology
of $E$ again; and thus, again by Sobolev embedding, $L_{\widetilde{v}%
_{R}^{i,j}}^{\ast}\phi\rightarrow L_{\widetilde{v}_{R}^{i}}^{\ast}\phi$ in the
uniform topology over $\left[  0,T\right]  \times\mathbb{T}^{3}$. Hence
$\left\langle \widetilde{\omega}_{R}^{i,j}\left(  \cdot\right)
,L_{\widetilde{v}_{R}^{i,j}}^{\ast}\phi\right\rangle $ converges to
$\left\langle \widetilde{\omega}_{R}^{i}\left(  \cdot\right)
,L_{\widetilde{v}_{R}^{i}}^{\ast}\phi\right\rangle $ uniformly over $\left[
0,T\right]  $. Hence, if we prove that $k_{R}\left(  \widetilde{\omega}%
_{R}^{i,j}\left(  s\right)  \right)  \rightarrow k_{R}\left(
\widetilde{\omega}_{R}^{i}\left(  s\right)  \right)$} for a.e.
$s\in\left[  0,T\right]  $, and because these functions are bounded by 1, we
can take the limit in (\ref{difficult limit}). Therefore it remains to prove
that, $\widetilde{P}$-a.s., $\kappa_{R}\left(  \widetilde{\omega}_{R}%
^{i,j}\left(  s\right)  \right)  $ converges to $\kappa_{R}\left(
\widetilde{\omega}_{R}^{i}\left(  s\right)  \right)  $ for a.e. $s\in\left[
0,T\right]  $, or at least in probability w.r.t. time. This is true because
strong convergence in $L^{2}\left(  0,T\right)  $ in time implies convergence
in probability w.r.t. time; and we have strong convergence in $L^{2}\left(
0,T\right)  $, of $\kappa_{R}\left(  \widetilde{\omega}_{R}^{i,j}\left(
s\right)  \right)  $, because $\kappa_{R}$ is bounded continuous,
$\widetilde{\omega}_{R}^{i,j}$ converges strongly in $L^{2}\left(
0,T;W^{\beta,2}\left(  \mathbb{T}^{3},\mathbb{R}^{3}\right)  \right)  $, hence
$\nabla\widetilde{v}_{R}^{i,j}$ converges strongly in $L^{2}\left(
0,T;W^{\beta,2}\left(  \mathbb{T}^{3},\mathbb{R}^{3}\right)  \right)  $ hence
in $L^{2}\left(  0,T;C\left(  \mathbb{T}^{3},\mathbb{R}^{3}\right)  \right)  $
by Sobolev embedding theorem. Hence, $\kappa_{R}\left(  \widetilde{\omega}%
_{R}^{i,j}\right)  $ converges to $\kappa_{R}\left(  \widetilde{\omega}%
_{R}^{i}\right)  $ in probability w.r.t. time. Finally , from the integral identity  satisfied by the  limit process $\tilde \omega^i$, one can deduce that $\tilde \omega^i\in C\left(  \left[  0,T\right]  ;W^{2,2}\left(
\mathbb{T}^{3},\mathbb{R}^{3}\right)  \right) $ following the argument in \cite{KIM}. 
\end{proof}

Based on this lemma, we need to prove suitable bounds on $\left\{  \omega
_{R}^{\nu}\right\}  _{\nu>0}$.

\begin{theorem}
\label{thm tightness}Assume that, for some $N\geq0$ and $\alpha>{\frac14}$,%
\begin{equation}
\mathbb{E}\left[  \sup_{t\in\left[  0,T\right]  }\left\Vert \omega_{R}^{\nu
}\left(  t\right)  \right\Vert _{W^{2,2}}^{4}\right]  \leq C_{1}
\label{bound 1}%
\end{equation}
\begin{equation}
\mathbb{E}\int_{0}^{T}\int_{0}^{T}\frac{\left\Vert \omega_{R}^{\nu}\left(
t\right)  -\omega_{R}^{\nu}\left(  s\right)  \right\Vert _{W^{-N,2}}^{4}%
}{\left\vert t-s\right\vert ^{1+4\alpha}}dtds\leq C_{2} \label{bound 2}%
\end{equation}
for all $\nu\in\left(  0,1\right)  $. Then the assumptions of Lemma \ref{int} hold.

\end{theorem}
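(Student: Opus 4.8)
The plan is to produce, for every $\epsilon>0$, a single set $K_\epsilon\subset E$ that is relatively compact in the topology of $E$ and satisfies $P\left(\omega_R^\nu\in K_\epsilon\right)\ge 1-\epsilon$ uniformly in $\nu\in\left(0,1\right)$; tightness is then immediate. The natural candidate is the sublevel set of the two quantities controlled by the hypotheses. For constants $M_1,M_2>0$ I would set
\[
K_{M_1,M_2}=\left\{\omega\in E:\ \sup_{t\in[0,T]}\left\Vert\omega(t)\right\Vert_{W^{2,2}}\le M_1,\ \int_0^T\!\!\int_0^T\frac{\left\Vert\omega(t)-\omega(s)\right\Vert_{W^{-N,2}}^2}{\left\vert t-s\right\vert^{1+2\alpha}}\,dt\,ds\le M_2\right\}.
\]
Applying Chebyshev's inequality to (\ref{bound 1}) and to (\ref{bound 2}) separately yields, for all $\nu\in\left(0,1\right)$,
\[
P\left(\omega_R^\nu\notin K_{M_1,M_2}\right)\le \frac{C_1}{M_1^2}+\frac{C_2}{M_2},
\]
so that choosing $M_1,M_2$ large makes the right-hand side smaller than $\epsilon$, uniformly in $\nu$. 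It therefore suffices to show that (the closure of) $K_{M_1,M_2}$ is relatively compact in $E$, which I would do by treating the two factors of the intersection (\ref{spaces}) in turn.

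For the factor $L^2\left(0,T;W^{\beta,2}\right)$ with $\tfrac32<\beta<2$, I would invoke the fractional Aubin--Lions--Simon compactness criterion in the form recorded in Section \ref{sect frac So reg}. Along the chain $W^{2,2}\hookrightarrow\hookrightarrow W^{\beta,2}\hookrightarrow W^{-N,2}$, in which the first embedding is \emph{compact} precisely because $\beta<2$ (Rellich--Kondrachov on $\mathbb{T}^3$, see \cite{Temam}) and the second is continuous, any set that is bounded in $L^2\left(0,T;W^{2,2}\right)$ and in the fractional Sobolev space $W^{\alpha,2}\left(0,T;W^{-N,2}\right)$ is relatively compact in $L^2\left(0,T;W^{\beta,2}\right)$. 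The defining bounds of $K_{M_1,M_2}$ supply exactly these two ingredients: the $L^2\left(0,T;W^{2,2}\right)$ bound from $\sup_t\|\cdot\|_{W^{2,2}}\le M_1$, and the Gagliardo seminorm bound $\le M_2$ which is the squared $W^{\alpha,2}\left(0,T;W^{-N,2}\right)$ seminorm. Hence $K_{M_1,M_2}$ is relatively compact in this factor; this part is essentially classical and parallels \cite{FlaGat}.

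For the factor $C_w\left([0,T];W_\sigma^{2,2}\right)$ I would run an Arzel\`a--Ascoli argument in the weak topology. The uniform bound $\sup_t\|\omega(t)\|_{W^{2,2}}\le M_1$ confines each $\omega(t)$ to a fixed ball of the separable Hilbert space $W^{2,2}$, which is metrizable and compact in its weak topology; and for $\phi\in W^{N,2}$, a dense family in $W^{-2,2}=\left(W^{2,2}\right)^{*}$, the estimate $\left\vert\left\langle\omega(t)-\omega(s),\phi\right\rangle\right\vert\le\left\Vert\omega(t)-\omega(s)\right\Vert_{W^{-N,2}}\left\Vert\phi\right\Vert_{W^{N,2}}$ converts a time-modulus of continuity of $t\mapsto\omega(t)$ in $W^{-N,2}$ into equicontinuity of the scalar maps $t\mapsto\left\langle\omega(t),\phi\right\rangle$, which is what Arzel\`a--Ascoli for $C_w$ requires. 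That time-modulus is exactly what the fractional seminorm in (\ref{bound 2}) provides, through the one-dimensional fractional Sobolev embedding of Section \ref{sect frac So reg}. Combining the two factors shows $K_{M_1,M_2}$ is relatively compact in $E$, completing the argument. I expect the genuinely delicate step to be this second one: extracting a uniform temporal modulus of continuity in the weak norm from the fractional-in-time bound (\ref{bound 2}) and reconciling it with the intersection topology of $E$, whereas the $L^2\left(0,T;W^{\beta,2}\right)$ compactness is routine once the compact embedding $\beta<2$ and the time-regularity are in hand.
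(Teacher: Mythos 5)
Your architecture coincides with the paper's: sublevel sets of the controlled quantities, Chebyshev to get uniform-in-$\nu$ mass outside them, and the fractional Aubin--Lions lemma applied to the chain $W^{2,2}\hookrightarrow W^{\beta,2}\hookrightarrow W^{-N,2}$ (first embedding compact since $\beta<2$) for the $L^{2}\left(0,T;W^{\beta,2}\right)$ factor. Those parts are correct; the paper's sets $K_{R_{1},R_{2},R_{3}}$ additionally carry the bound on $\int_{0}^{T}\left\Vert f(t)\right\Vert_{W^{-N,2}}^{2}dt$, but as you implicitly note this follows from the sup bound, so the discrepancy is immaterial. The genuine gap is in your treatment of the $C_{w}\left([0,T];W_{\sigma}^{2,2}\right)$ factor. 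Your Arzel\`a--Ascoli argument needs a \emph{uniform modulus of continuity} of $t\mapsto\omega(t)$ in $W^{-N,2}$, and you propose to extract it from the Gagliardo seminorm bound (\ref{bound 2}) ``through the one-dimensional fractional Sobolev embedding''. That embedding, $W^{\alpha,2}\left(0,T;X\right)\hookrightarrow C^{0,\alpha-1/2}\left([0,T];X\right)$, requires $\alpha>1/2$, whereas the theorem assumes only $\alpha>0$ and, crucially, the regime in which the theorem is actually applied is $\alpha<1/2$: the lemma of Section \ref{sect frac So reg} establishes (\ref{bound 2}) with $N=3$ and \emph{only} $\alpha<\frac12$, since it rests on $\mathbb{E}\left[\left\Vert\omega(t)-\omega(s)\right\Vert_{W^{-3,2}}^{2}\right]\leq C\left\vert t-s\right\vert$, and a nondegenerate martingale part precludes anything better. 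For $\alpha<1/2$ a finite Gagliardo seminorm carries no modulus of continuity at all: take $g\in W_{\sigma}^{2,2}$ fixed and $f_{n}(t)=c_{n}(t)\,g$, where $c_{n}$ is a mollification at scale $1/n$ of the step $\mathds{1}_{[0,t_{0}]}$. Then $\sup_{t}\left\Vert f_{n}(t)\right\Vert_{W^{2,2}}\leq\left\Vert g\right\Vert_{W^{2,2}}$ and the seminorms in (\ref{bound 2}) are uniformly bounded (the step itself has finite seminorm precisely because $2\alpha<1$, and mollification does not increase it), yet the maps $t\mapsto\left\langle f_{n}(t),\phi\right\rangle$ are not equicontinuous and no subsequence of $\left(f_{n}\right)$ converges in $C_{w}\left([0,T];W^{2,2}\right)$, since any limit would have a jump at $t_{0}$. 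So the step you yourself flagged as the delicate one is exactly where the proposed mechanism fails.

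In fairness, this is also the point the paper glosses over: it asserts without proof that the sets $K_{R_{1},R_{2},R_{3}}$ are relatively compact in the space $E$ of (\ref{spaces}), while the Aubin--Lions lemma it quotes only yields compactness in the $L^{2}\left(0,T;W^{\beta,2}\right)$ factor. A genuine repair needs information beyond (\ref{bound 1})--(\ref{bound 2}): for instance, equicontinuity in $W^{-N,2}$ derived directly from the equation (the drift terms are Lipschitz in time into $W^{-3,2}$ given (\ref{bound 1}), and the stochastic integral can be handled with higher moments via Burkholder--Davis--Gundy and Kolmogorov's criterion), or a version of (\ref{bound 2}) with higher moments strong enough to reach some $\alpha>1/2$, or a tightness criterion adapted to $C_{w}$ that does not demand pointwise-in-time equicontinuity. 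As written, your proof of compactness of the sublevel sets in $E$ does not close in the only parameter range the paper ever verifies.
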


\begin{proof}
We shall use the following variant of Aubin-Lions lemma, that can be found in
Simon \cite{SIM}. Recall that, given an Hilbert space $W$,
a norm on $W^{\alpha,4}\left(  0,T;W\right)  $ is the fourth root of%
\[
\int_{0}^{T}\left\Vert f\left(  t\right)  \right\Vert _{W}^{4}dt+\int_{0}%
^{T}\int_{0}^{T}\frac{\left\Vert f\left(  t\right)  -f\left(  s\right)
\right\Vert _{W}^{4}}{\left\vert t-s\right\vert ^{1+4\alpha}}dtds.
\]
Assume that $V,H,W$ are separable Hilbert spaces with continuous dense
embedding $V\subset H\subset W$ such that there exists $\theta\in (0,1)$ and $M>0$ such that 
\[
\|v\|_H\le M\|v\|_V^{1-\theta}\|v\|_W^\theta
\]
for every $v\in V$. Assume that $V\subset H$ is a compact
embedding. Assume $\alpha>0$. Then%
\[
L^{\infty}\left(  0,T;V\right)  \cap W^{\alpha,4}\left(  0,T;W\right)
\]
is compactly embedded into $C\left(  [0,T];H\right)  $, see \cite{SIM}, Corollary 9. We apply it to the
spaces%
\[
H=W^{\beta,2}\left(  \mathbb{T}^{3},\mathbb{R}^{3}\right)  \text{, }%
V=W^{2,2}\left(  \mathbb{T}^{3},\mathbb{R}^{3}\right)  \text{, }%
W=W^{-N,2}\left(  \mathbb{T}^{3},\mathbb{R}^{3}\right)
\]
where $\beta\in\left(  \frac{3}{2},2\right)  $. The constraint $\beta<2$ is
imposed because we want to use the compactness of the embedding $W^{2,2}%
\left(  \mathbb{T}^{3},\mathbb{R}^{3}\right)  \subset W^{\beta,2}\left(
\mathbb{T}^{3},\mathbb{R}^{3}\right)  $. The constraint $\beta>\frac{3}{2}$ is
imposed because we want to use the embedding $W^{\beta,2}\left(
\mathbb{T}^{3},\mathbb{R}^{3}\right)  \subset C\left(  \mathbb{T}%
^{3},\mathbb{R}^{3}\right)  $.

Let $\left\{  Q_{\nu}\right\}  $ be the family of laws of $\left\{  \omega
_{R}^{\nu}\right\}  $, supported on
\[
E_{0}:=C\left(  \left[  0,T\right]  ;W^{2,2}\left(  \mathbb{T}^{3}%
,\mathbb{R}^{3}\right)  \right)  \cap W^{\alpha,4}\left(  0,T;W^{-N,2}\left(
\mathbb{T}^{3},\mathbb{R}^{3}\right)  \right)
\]
by the assumption of the theorem. We want to prove that $\left\{  Q_{\nu
}\right\}  $ is tight in $E$. The sets $K_{R_{1},R_{2},R_{3}}$ defined as%
\[
\left\{  f:\sup_{t\in\left[  0,T\right]  }\left\Vert f\left(  t\right)
\right\Vert _{W^{2,2}}^{2}\leq R_{1},\int_{0}^{T}\left\Vert f\left(  t\right)
\right\Vert _{W^{-N,2}}^{4}dt\leq R_{2},\int_{0}^{T}\int_{0}^{T}%
\frac{\left\Vert f\left(  t\right)  -f\left(  s\right)  \right\Vert
_{W^{-N,2}}^{4}}{\left\vert t-s\right\vert ^{1+4\alpha}}dtds\leq
R_{3}\right\}
\]
with $R_{1},R_{2},R_{3}>0$ are relatively compact in $E$. Let us prove that,
given $\epsilon>0$, there are $R_{1},R_{2},R_{3}>0$ such that
\[
Q_{\nu}\left(  K_{R_{1},R_{2},R_{3}}^{c}\right)  \leq\epsilon
\]
for all $\nu\in\left(  0,1\right)  $. We have%
\begin{align*}
Q_{\nu}\left(  \sup_{t\in\left[  0,T\right]  }\left\Vert f\left(  t\right)
\right\Vert _{W^{2,2}}^{2}>R_{1}\right)   &  =P\left(  \sup_{t\in\left[
0,T\right]  }\left\Vert \omega_{R}^{\nu}\left(  t\right)  \right\Vert
_{W^{2,2}}^{2}\right) \\
&  \leq\frac{\mathbb{E}\left[  \sup_{t\in\left[  0,T\right]  }\left\Vert
\omega_{R}^{\nu}\left(  t\right)  \right\Vert _{W^{2,2}}^{2}\right]  }{R_{1}%
}\leq\frac{C_{1}}{R_{1}}%
\end{align*}
and this is smaller than ${\epsilon}/{3}$ when $R_{1}$ is large enough.
Similarly we get%
\[
Q_{\nu}\left(  \int_{0}^{T}\int_{0}^{T}\frac{\left\Vert f\left(  t\right)
-f\left(  s\right)  \right\Vert _{W^{-N,2}}^{4}}{\left\vert t-s\right\vert
^{1+4\alpha}}dtds>R_{3}\right)  \leq\frac{\epsilon}{3}%
\]
when $R_{3}$ is large enough. Finally,
\begin{align*}
Q_{\nu}\left(  \int_{0}^{T}\left\Vert f\left(  t\right)  \right\Vert
_{W^{-N,2}}^{4}dt>R_{2}\right)   &  \leq Q_{\nu}\left(  T\sup_{t\in\left[
0,T\right]  }\left\Vert f\left(  t\right)  \right\Vert _{W^{-N,2}}^{4}%
dt>R_{2}\right) \\
&  \leq Q_{\nu}\left(  CT\sup_{t\in\left[  0,T\right]  }\left\Vert f\left(
t\right)  \right\Vert _{W^{2,2}}^{4}dt>R_{2}\right)
\end{align*}
for a constant $C>0$ such that $\left\Vert f\left(  t\right)  \right\Vert
_{W^{-N,2}}^{4}\leq C\left\Vert f\left(  t\right)  \right\Vert _{W^{2,2}}^{4}%
$. Hence also this quantity is smaller than $\frac{\epsilon}{3}$ when $R_{2}$
is large enough. We deduce $Q_{\nu}\left(  K_{R_{1},R_{2},R_{3}}^{c}\right)
\leq\epsilon$ and complete the proof.
\end{proof}

The difficult part of the estimates above is bound (\ref{bound 1}). Thus, let
us postpone it and first show bound (\ref{bound 2}).

\section{Technical results}

\label{sect tech results}

\subsection{Fractional Sobolev regularity in time}

\label{sect frac So reg}

In this section we show that bound (\ref{bound 2}), with $N=1$, follows from
(an easier version of) bound (\ref{bound 1}).

\begin{lemma}
Assume%
\[
\sup_{t\in\left[  0,T\right]  }\mathbb{E}\left[  \left\Vert \omega_{R}^{\nu
}\left(  t\right)  \right\Vert _{W^{2,2}}^{4}\right]  \leq C.
\]
Then the bound in (\ref{bound 2}), with $N=3$ and any $\alpha<\frac{1}{2}$,
holds true.
\end{lemma}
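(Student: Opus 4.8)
The plan is to reduce the double time-integral in (\ref{bound 2}) to a single pointwise-in-time second-moment bound and then integrate an explicit kernel. Concretely, it suffices to show that there is a constant $C$, independent of $\nu\in(0,1)$, such that for all $0\le s\le t\le T$
\[
\mathbb{E}\left\Vert \omega_R^{\nu}(t)-\omega_R^{\nu}(s)\right\Vert_{W^{-3,2}}^{2}\le C\,|t-s|.
\]
Granting this, Tonelli's theorem gives
\[
\mathbb{E}\int_0^T\!\!\int_0^T\frac{\left\Vert \omega_R^{\nu}(t)-\omega_R^{\nu}(s)\right\Vert_{W^{-3,2}}^{2}}{|t-s|^{1+2\alpha}}\,dt\,ds\le C\int_0^T\!\!\int_0^T|t-s|^{-2\alpha}\,dt\,ds,
\]
and the last integral is finite precisely when $2\alpha<1$, i.e.\ for every $\alpha<\tfrac12$; this is the origin of the threshold in the statement.

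To obtain the increment bound I would write the equation in Itô integral form and split $\omega_R^{\nu}(t)-\omega_R^{\nu}(s)$ into four contributions:
\[
-\int_s^t\kappa_R(\omega_R^{\nu})\mathcal{L}_{v_R^{\nu}}\omega_R^{\nu}\,dr,\quad \tfrac12\sum_{k}\int_s^t\mathcal{L}_{\xi_k}^2\omega_R^{\nu}\,dr,\quad \nu\int_s^t\Delta^5\omega_R^{\nu}\,dr,\quad -\sum_k\int_s^t\mathcal{L}_{\xi_k}\omega_R^{\nu}\,dB_r^{k}.
\]
Using the continuous embedding $L^2\hookrightarrow W^{-3,2}$, the first two (Bochner) integrals are the easy ones. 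For the drift I would invoke the linear-growth bound $\Vert\kappa_R(\omega)\mathcal{L}_v\omega\Vert_{L^2}\le C_R(1+\Vert\omega\Vert_{W^{2,2}})$ established in Step 2 of Lemma \ref{lemma regularized equation}; this is exactly where the cut-off is essential, since it yields a bound \emph{linear} in $\Vert\omega\Vert_{W^{2,2}}$, so that only the second moment of the $W^{2,2}$-norm is needed. For the Stratonovich corrector I would use assumption (\ref{Hp 1}), namely $\Vert\sum_k\mathcal{L}_{\xi_k}^2\omega\Vert_{L^2}\le C\Vert\omega\Vert_{W^{2,2}}$. In both cases Cauchy--Schwarz in time and the hypothesis $\sup_r\mathbb{E}\Vert\omega_R^{\nu}(r)\Vert_{W^{2,2}}^2\le C$ give a bound of order $|t-s|^2\le T\,|t-s|$. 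The stochastic term is handled by the Itô isometry in $L^2(\mathbb{T}^3;\mathbb{R}^3)$ together with assumption (\ref{Hp 2}):
\[
\mathbb{E}\Big\Vert\sum_k\int_s^t\mathcal{L}_{\xi_k}\omega_R^{\nu}\,dB_r^{k}\Big\Vert_{L^2}^2=\mathbb{E}\int_s^t\sum_k\Vert\mathcal{L}_{\xi_k}\omega_R^{\nu}\Vert_{L^2}^2\,dr\le C\int_s^t\mathbb{E}\Vert\omega_R^{\nu}\Vert_{W^{2,2}}^2\,dr\le C\,|t-s|,
\]
which is genuinely linear in $|t-s|$ and therefore is what dictates the admissible range of $\alpha$.

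The main obstacle is the hyperviscous term $\nu\int_s^t\Delta^5\omega_R^{\nu}\,dr$: since $\Delta^5$ costs ten derivatives, controlling it in $W^{-3,2}$ requires $\omega_R^{\nu}\in W^{7,2}$, a regularity not available uniformly in $\nu$. The resolution is to exploit the dissipation produced by the very energy computation behind (\ref{bound 1}): in the $L^2$- and $\Delta$-energy balances the hyperviscosity contributes the nonpositive terms $-\nu\Vert\Delta^{5/2}\omega_R^{\nu}\Vert_{L^2}^2$ and $-\nu\Vert\Delta^{7/2}\omega_R^{\nu}\Vert_{L^2}^2$, which integrate to the $\nu$-uniform bound $\nu\,\mathbb{E}\int_0^T\Vert\omega_R^{\nu}\Vert_{W^{7,2}}^2\,dr\le C$. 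Since $\Vert\Delta^5 f\Vert_{W^{-3,2}}\le C\Vert f\Vert_{W^{7,2}}$, Cauchy--Schwarz gives
\[
\mathbb{E}\Big\Vert\nu\int_s^t\Delta^5\omega_R^{\nu}\,dr\Big\Vert_{W^{-3,2}}^2\le C\nu^2|t-s|\int_s^t\mathbb{E}\Vert\omega_R^{\nu}\Vert_{W^{7,2}}^2\,dr=C\,|t-s|\,\nu\Big(\nu\,\mathbb{E}\int_0^T\Vert\omega_R^{\nu}\Vert_{W^{7,2}}^2\,dr\Big)\le C\,|t-s|,
\]
the decisive point being that the prefactor $\nu<1$ absorbs one power of $\nu$ while the other pairs with the dissipation integral to give a $\nu$-independent constant. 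Summing the four estimates yields $\mathbb{E}\Vert\omega_R^{\nu}(t)-\omega_R^{\nu}(s)\Vert_{W^{-3,2}}^2\le C|t-s|$, and the first paragraph concludes. (Alternatively, enlarging the loss to any $N\ge 8$ would let one bound $\Delta^5\omega_R^{\nu}$ in $W^{-N,2}$ directly from $\Vert\omega_R^{\nu}\Vert_{W^{2,2}}$ and avoid invoking the dissipation, at the cost of a weaker but still sufficient target space.)
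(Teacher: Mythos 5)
Your proposal follows the same skeleton as the paper's own proof: write the increment from the integral form of (\ref{regularized SPDE}), split it into the cut-off nonlinear drift, the Stratonovich corrector, the hyperviscous term and the stochastic integral, prove the pointwise increment bound $\mathbb{E}\left\Vert \omega_{R}^{\nu}(t)-\omega_{R}^{\nu}(s)\right\Vert _{W^{-3,2}}^{2}\leq C\left\vert t-s\right\vert$, and then integrate the kernel $\left\vert t-s\right\vert ^{-2\alpha}$, which is exactly where $\alpha<\tfrac12$ enters. Three of your four term-by-term estimates coincide with the paper's up to cosmetics: the paper tames the nonlinear term by a duality estimate plus the remark that $\left\Vert \omega\right\Vert _{\infty}\leq C\left\Vert \nabla v\right\Vert _{\infty}\leq C(R+1)$ on the support of $\kappa_{R}$, while you invoke the linear-growth bound from Step 2 of Lemma \ref{lemma regularized equation} (both are legitimate and both use the cut-off in an essential way); the corrector is handled by (\ref{Hp 1}) and the martingale term by the It\^{o} isometry plus (\ref{Hp 2}) in both proofs. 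The genuine divergence is the hyperviscous term, and there your treatment is the more careful one: the paper simply asserts $\left\Vert \Delta^{5}\omega_{R}^{\nu}\right\Vert _{W^{-3,2}}^{2}\leq C\left\Vert \omega_{R}^{\nu}\right\Vert _{W^{2,2}}^{2}$, which under the standard meaning of the norms loses five derivatives ($\Delta^{5}$ maps $W^{2,2}$ into $W^{-8,2}$, not $W^{-3,2}$), whereas you either pay for those derivatives with the $\nu$-weighted dissipation or enlarge $N$ to $8$. Both of your fixes are sound, and the second is the cheapest: Theorem \ref{thm tightness} needs (\ref{bound 2}) only for \emph{some} $N\geq0$, so proving it with $N=8$ costs nothing downstream and stays entirely within the lemma's stated hypothesis. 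One caveat on the dissipation route: the estimate $\nu\,\mathbb{E}\int_{0}^{T}\left\Vert \omega_{R}^{\nu}\right\Vert _{W^{7,2}}^{2}dr\leq C$ does \emph{not} follow from the hypothesis $\sup_{t}\mathbb{E}\left[\left\Vert \omega_{R}^{\nu}(t)\right\Vert _{W^{2,2}}^{2}\right]\leq C$ alone; it must be harvested from the energy identities behind (\ref{bound 1}) by retaining the nonpositive terms $-\nu\left\Vert \Delta^{5/2}\omega\right\Vert _{L^{2}}^{2}$ and $-\nu\left\Vert \Delta^{7/2}\omega\right\Vert _{L^{2}}^{2}$ rather than discarding them by sign, as is done in Sections \ref{sect bounds}--\ref{sect uniform in time}. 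So if you take that route, you should either add this dissipation bound to the lemma's hypothesis or state explicitly that it is established alongside (\ref{bound 1}); otherwise the argument, while morally complete, uses an input the lemma does not grant.
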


\begin{proof}
\textbf{Step 1} (Preparation). In the sequel, we take $t\geq s$ and denote by
$C>0$ any constant. From equation (\ref{regularized SPDE}) we have%
\begin{align*}
\omega_{R}^{\nu}\left(  t\right)  -\omega_{R}^{\nu}\left(  s\right)   &
=-\int_{s}^{t}\sum_{k=1}^{\infty}\mathcal{L}_{\xi_{k}}\omega_{R}^{\nu}\left(
r\right)  dB_{r}^{k}\\
&  +\int_{s}^{t}\left(  \nu\Delta^{5}\omega_{R}^{\nu}\left(  r\right)
+\frac{1}{2}\sum_{k=1}^{\infty}\mathcal{L}_{\xi_{k}}^{2}\omega_{R}^{\nu
}\left(  r\right)  -\kappa_{R}\left(  \omega_{R}^{\nu}\left(  r\right)
\right)  \mathcal{L}_{v_{R}^{\nu}\left(  r\right)  }\omega_{R}^{\nu}\left(
r\right)  \right)  \ dr
\end{align*}
hence
\begin{align*}
\mathbb{E}\left[  \left\Vert \omega_{R}^{\nu}\left(  t\right)  -\omega
_{R}^{\nu}\left(  s\right)  \right\Vert _{W^{-3,2}}^{4}\right]   &
\leq C\left(  t-s\right)^{3}  \int_{s}^{t}\mathbb{E}\left[  \kappa_{R}\left(
\omega_{R}^{\nu}\left(  r\right)  \right)^4  \left\Vert \mathcal{L}_{v_{R}^{\nu
}\left(  r\right)  }\omega_{R}^{\nu}\left(  r\right)  \right\Vert _{W^{-3,2}%
}^{4}\right]  dr\\
&  +C\left(  t-s\right) ^{3} \int_{s}^{t}\mathbb{E}\left[  \left\Vert \nu
\Delta^{5}\omega_{R}^{\nu}\left(  r\right)  \right\Vert _{W^{-3,2}}%
^{4}\right]  dr\\
&  +C\left(  t-s\right) ^{3} \int_{s}^{t}\mathbb{E}\left[  \left\Vert \frac{1}%
{2}\sum_{k=1}^{\infty}\mathcal{L}_{\xi_{k}}^{2}\omega_{R}^{\nu}\left(
r\right)  \right\Vert _{W^{-3,2}}^{4}\right]  dr\\
&  +C\mathbb{E}\left[  \left\Vert \int_{s}^{t}\sum_{k=1}^{\infty}%
\mathcal{L}_{\xi_{k}}\omega_{R}^{\nu}\left(  r\right)  dB_{r}^{k}\right\Vert
_{W^{-3,2}}^{4}\right]  .
\end{align*}
Recall that $\left\Vert f\right\Vert _{W^{-3,2}}\leq C\left\Vert f\right\Vert
_{L^{2}}$, which follows by duality from $\left\Vert f\right\Vert _{L^{2}}\leq
C\left\Vert f\right\Vert _{W^{3,2}}$. Hence, again denoting any of the
constants in the calculation below as $C>0$, we have%
\begin{align*}
\mathbb{E}\left[  \left\Vert \omega_{R}^{\nu}\left(  t\right)  -\omega
_{R}^{\nu}\left(  s\right)  \right\Vert _{W^{-3,2}}^{4}\right]   &  \leq
C\left(  t-s\right)^3  \int_{s}^{t}\mathbb{E}\left[  \kappa_{R}\left(
\omega_{R}^{\nu}\left(  r\right)  \right)^4  \left\Vert \mathcal{L}_{v_{R}^{\nu
}\left(  r\right)  }\omega_{R}^{\nu}\left(  r\right)  \right\Vert _{W^{-3,2}%
}^{4}\right]  dr\\
&  +C\left(  t-s\right)^3  \int_{s}^{t}\mathbb{E}\left[  \left\Vert \Delta
^{5}\omega_{R}^{\nu}\left(  r\right)  \right\Vert _{W^{-3,2}}^{4}\right]  dr\\
&  +C\left(  t-s\right)^3  \int_{s}^{t}\mathbb{E}\left[  \left\Vert \frac{1}%
{2}\sum_{k=1}^{\infty}\mathcal{L}_{\xi_{k}}^{2}\omega_{R}^{\nu}\left(
r\right)  \right\Vert _{L^{2}}^{4}\right]  dr\\
&  +C\mathbb{E}\left[  \left\Vert \int_{s}^{t}\sum_{k=1}^{\infty}%
\mathcal{L}_{\xi_{k}}\omega_{R}^{\nu}\left(  r\right)  dB_{r}^{k}\right\Vert
_{L^{2}}^{4}\right]  .
\end{align*}
The only term where $W^{-3,2}$ is necessary is the term $\left\Vert \Delta
^{5}\omega_{R}^{\nu}\left(  r\right)  \right\Vert _{W^{-3,2}}^{4}$; we keep it
also in the first term, but this is not essential. Now let us estimate each term.

\textbf{Step 2} (Estimates of the deterministic terms). We have%
\begin{align*}
\left\vert \left\langle v_{R}^{\nu}\cdot\nabla\omega_{R}^{\nu},\phi
\right\rangle \right\vert  &  =\left\vert \left\langle \omega_{R}^{\nu}%
,v_{R}^{\nu}\cdot\nabla\phi\right\rangle \right\vert \\
&  \leq\left\Vert \phi\right\Vert _{W^{1,2}}\left\Vert v_{R}^{\nu}\right\Vert
_{L^{2}}\left\Vert \omega_{R}^{\nu}\right\Vert _{L^{\infty}}\\
&  \leq C\left\Vert \phi\right\Vert _{W^{1,2}}\left\Vert \omega_{R}^{\nu
}\right\Vert _{L^{\infty}}\left\Vert \omega_{R}^{\nu}\right\Vert _{L^{2}}\\
&  \leq C\left\Vert \phi\right\Vert _{W^{1,2}}\left\Vert \omega_{R}^{\nu
}\right\Vert _{L^{\infty}}\left\Vert \omega_{R}^{\nu}\right\Vert _{W^{2,2}}\,,
\end{align*}
so that%
\[
\left\Vert v_{R}^{\nu}\cdot\nabla\omega_{R}^{\nu}\right\Vert _{W^{-3,2}}%
^{2}\leq C\left\Vert \omega_{R}^{\nu}\right\Vert _{L^{\infty}}^{2}\left\Vert
\omega_{R}^{\nu}\right\Vert _{W^{2,2}}^{2}.
\]
Moreover, also%
\[
\left\Vert \omega_{R}^{\nu}\cdot\nabla v_{R}^{\nu}\right\Vert _{L^{2}}%
\leq\left\Vert \omega_{R}^{\nu}\right\Vert _{\infty}\left\Vert \nabla
v_{R}^{\nu}\right\Vert _{L^{2}}\leq C\left\Vert \omega_{R}^{\nu}\right\Vert
_{L^{\infty}}^{2}\left\Vert \omega_{R}^{\nu}\right\Vert _{W^{2,2}}^{2}.
\]
Summarizing
\[
\left\Vert \mathcal{L}_{v_{R}^{\nu}\left(  r\right)  }\omega_{R}^{\nu}\left(
r\right)  \right\Vert _{W^{-3,2}}^{2}\leq C\left\Vert \omega_{R}^{\nu
}\right\Vert _{L^{\infty}}^{2}\left\Vert \omega_{R}^{\nu}\right\Vert
_{W^{2,2}}^{2}.
\]
Therefore%
\[
\kappa_{R}^4\left(  \omega_{R}^{\nu}\left(  r\right)  \right)  \left\Vert
\mathcal{L}_{v_{R}^{\nu}\left(  r\right)  }\omega_{R}^{\nu}\left(  r\right)
\right\Vert _{W^{-3,2}}^{4}\leq C\left\Vert \omega_{R}^{\nu}\right\Vert
_{W^{2,2}}^{4}.
\]

For the next term, we have%
\[
\left\Vert \Delta^{5}\omega_{R}^{\nu}\right\Vert _{H^{-3}}^{4}\leq C\left\Vert
\omega_{R}^{\nu}\right\Vert _{W^{2,2}}^{4}%
\]
hence%
\[
\int_{s}^{t}\mathbb{E}\left[  \left\Vert \Delta^{5}\omega_{R}^{\nu}\right\Vert
_{W^{-3,2}}^{4}\right]  dr\leq C\int_{s}^{t}\mathbb{E}\left[  \left\Vert
\omega_{R}^{\nu}\right\Vert _{W^{2,2}}^{4}\right]  dr\leq C
\]
because we have the property $\sup_{t\in\left[  0,T\right]  }\mathbb{E}\left[
\left\Vert \omega_{R}^{\nu}\left(  t\right)  \right\Vert _{W^{2,2}}%
^{4}\right]  \leq C$.

For the subsequent term we have%
\[
\left\Vert \sum_{k=1}^{\infty}\mathcal{L}_{\xi_{k}}^{2}\omega_{R}^{\nu}\left(
r\right)  \right\Vert _{L^{2}}^{2}\leq C\left\Vert \omega_{R}^{\nu}\left(
r\right)  \right\Vert _{W^{2,2}}^{2}%
\]
by assumption (\ref{Hp 1}) and therefore, %
\[
\int_{s}^{t}\mathbb{E}\left[  \left\Vert \frac{1}{2}\sum_{k=1}^{\infty
}\mathcal{L}_{\xi_{k}}^{2}\omega_{R}^{\nu}\left(  r\right)  \right\Vert
_{L^{2}}^{4}\right]  dr\leq C\int_{s}^{t}\mathbb{E}\left[  \left\Vert
\omega_{R}^{\nu}\left(  r\right)  \right\Vert _{W^{2,2}}^{4}\right]  dr\leq C
\]
as above.

\textbf{Step 3} (Estimate of the stochastic term). One has, by the Burkholder-Davis-Gundy inequality%
\begin{align*}
\mathbb{E}\left[  \left\Vert \int_{s}^{t}\sum_{k=1}^{\infty}\mathcal{L}%
_{\xi_{k}}\omega_{R}^{\nu}\left(  r\right)  dB_{r}^{k}\right\Vert _{L^{2}}%
^{4}\right]   &  \le C\mathbb{E}\left[  \left(\int_{s}^{t}\sum_{k=1}^{\infty}\left\Vert
\mathcal{L}_{\xi_{k}}\omega_{R}^{\nu}\left(  r\right)  \right\Vert _{L^{2}%
}^{2}\right)^2\right]  dr\\
&  \leq C(t-s)\int_{s}^{t}\mathbb{E}\left[  \left\Vert \omega_{R}^{\nu}\left(
r\right)  \right\Vert _{W^{2,2}}^{4}\right]  dr
\end{align*}
by assumption (\ref{Hp 2}),
\[
\leq C\mathbb{E}\left[  \left\Vert \omega_{0}^{\nu}\right\Vert _{W^{2,2}}%
^{4}\right]  \left(  t-s\right)^2
\]
by the assumption of this lemma.

\textbf{Step 4} (Conclusion). From the previous steps we have%
\[
\mathbb{E}\left[  \left\Vert \omega_{R}^{\nu}\left(  t\right)  -\omega
_{R}^{\nu}\left(  s\right)  \right\Vert _{W^{-3,2}}^{4}\right]  \leq C\left(
t-s\right) ^2 .
\]
Hence%
\[
\mathbb{E}\left[  \int_{0}^{T}\int_{0}^{T}\frac{\left\Vert \omega_{R}^{\nu
}\left(  t\right)  -\omega_{R}^{\nu}\left(  s\right)  \right\Vert _{W^{-3,2}%
}^{4}}{\left\vert t-s\right\vert ^{1+4\alpha}}dtds\right]  \leq%
  \int_{0}^{T}\int_{0}^{T}\frac{C}{\left\vert t-s\right\vert ^{4\alpha-1}%
}dtds  \leq C
\]
for all $\alpha<\frac{1}{2}$.
\end{proof}

\subsection{Some \textit{a priori} estimates}

\label{sect bounds}

In order to complete the proof of Theorem \ref{main theorem}, we still need to
prove estimate (\ref{bound 1}). To be more explicit, since now a long and
difficult computation starts, what we have to prove is that, given $R>0$,
called for every $\nu\in\left(  0,1\right)  $ by $\omega_{R}^{\nu}$ the
solution of equation%
\[
d\omega_{R}^{\nu}+\kappa_{R}\left(  \omega_{R}^{\nu}\right)  \mathcal{L}%
_{v_{R}^{\nu}}\omega_{R}^{\nu}\ dt+\sum_{k=1}^{\infty}\mathcal{L}_{\xi_{k}%
}\omega_{R}^{\nu}dB_{t}^{k}=\nu\Delta^{5}\omega_{R}^{\nu}dt+\frac{1}{2}%
\sum_{k=1}^{\infty}\mathcal{L}_{\xi_{k}}^{2}\omega_{R}^{\nu}\ dt
\]
with $\omega_{R}^{\nu}|_{t=0}=\omega_{0}$, there is a constant $C>0$ such
that
\[
\mathbb{E}\left[  \sup_{t\in\left[  0,T\right]  }\left\Vert \omega_{R}^{\nu
}\left(  t\right)  \right\Vert _{W^{2,2}}^{4}\right]  \leq C
\]
for every $\nu\in\left(  0,1\right)  $.

In order to simplify notations, we shall simply write
\begin{align*}
&  \omega\text{ for }\omega_{R}^{\nu}\\
&  v\text{ for }v_{R}^{\nu}\\
&  \kappa\text{ for }\kappa_{R}\left(  \omega_{R}^{\nu}\right)
\end{align*}
not forgetting that all bounds have to be uniform in $\nu\in\left(
0,1\right)  $.

\medskip\paragraph{\bf Difficulty compared to the deterministic case.}

\label{subsect difficulty}

In the deterministic case $\frac{d}{dt}\int_{\mathbb{T}^{3}}\left\vert
\Delta\omega\left(  t,x\right)  \right\vert ^{2}dx$ is equal to the sum of
several terms. Using Sobolev embedding theorems (\ref{Sobolev embedding}) one
can estimate all terms as
\[
\leq C\int_{\mathbb{T}^{3}}\left\vert \Delta\omega\left(  t,x\right)
\right\vert ^{2}dx\,,
\]
except for the term with higher order derivatives%
\[
\int_{\mathbb{T}^{3}}\left(  v\cdot\nabla\Delta\omega\right)  \cdot
\Delta\omega\ dx\,.
\]
However, this term vanishes, being equal to
\[
\frac{1}{2}\int_{\mathbb{T}^{3}}\left(  v\cdot\nabla\right)  \left\vert
\Delta\omega\right\vert ^{2}\ dx=-\frac{1}{2}\int_{\mathbb{T}^{3}}\left\vert
\Delta\omega\right\vert ^{2}\operatorname{div}v\ dx=0.
\]
In the stochastic case, though, we have many more terms, coming from two sources:\ 

i) the term $\frac{1}{2}\sum_{k}\mathcal{L}_{\xi_{k}}^{2}\omega\ dt$, which is
a second order differential operator in $\omega$, hence much more demanding
than the deterministic term $\mathcal{L}_{v}\omega$;

ii) the It\^{o} correction term in It\^{o} formula for $d\int_{\mathbb{T}^{3}%
}\left\vert \Delta\omega\left(  t,x\right)  \right\vert ^{2}dx$.

A quick inspection in these additional terms immediately reveals that the
highest order terms compensate (one from (i) and the other from (ii)) and
cancel each other. These terms are of ``order 6" in the sense that, globally
speaking, they involve 6 derivatives of $\omega$. The new outstanding problem
is that there remains a large amount of terms of ``order 5", hence not bounded
by $C\int_{\mathbb{T}^{3}}\left\vert \Delta\omega\left(  t,x\right)
\right\vert ^{2}dx$ (which is of ``order 4"). After a few computations one is
na\"{\i}vely convinced that these terms are too numerous to compensate and
cancel one another.

But this is not true. A careful algebraic manipulation of differential
operators, as well as their commutators and adjoint operators, finally shows
that all terms of ``order 5" do cancel each other. At the end we are able to
estimate remaining terms again by $C\int_{\mathbb{T}^{3}}\left\vert
\Delta\omega\left(  t,x\right)  \right\vert ^{2}dx$ (now in expectation) and
obtain the \textit{a priori} estimates we seek.

\medskip\paragraph{\bf Preparatory remarks.}

By again using the regularity result of Lemma \ref{lemma regularized equation}%
, we may write the identity%
\begin{align*}
\Delta\omega\left(  t\right)   &  =\Delta\omega_{0}+A\int_{0}^{t}\Delta\omega\left(
s\right)  ds-\int_{0}^{t}\kappa\left(  s\right)  \Delta\mathcal{L}_{v}%
\omega\left(  s\right)  ds\\
&  +\frac{1}{2}\int_{0}^{t}\Delta\sum_{k=1}^{\infty}\mathcal{L}_{\xi_{k}}%
^{2}\omega\left(  s\right)  ds\\
&  -\sum_{k=1}^{\infty}\int_{0}^{t}\Delta\mathcal{L}_{\xi_{k}}\omega\left(
s\right)  dB_{s}^{k}%
\end{align*}
and\ we may apply a suitable It\^{o} formula in the Hilbert space
$L^{2}\left(  \mathbb{T}^{3}\right)  $ (see \cite{KrylovRoz}) to obtain%
\begin{align}
\frac{1}{2}d\int_{\mathbb{T}^{3}}\left\vert \Delta\omega\left(  t,x\right)
\right\vert ^{2}dx+\nu\int_{\mathbb{T}^{3}}\left\vert \Delta^{2}%
\omega\right\vert ^{2}dxdt  &  =-\kappa\left(  t\right)  \left(
\int_{\mathbb{T}^{3}}\Delta\mathcal{L}_{v}\omega\cdot\Delta\omega dx\right)
dt\nonumber\\
&  +\frac{1}{2}\sum_{k=1}^{\infty}\left(  \int_{\mathbb{T}^{3}}\Delta
\mathcal{L}_{\xi_{k}}^{2}\omega\cdot\Delta\omega dx\right)  dt\nonumber\\
&  -\sum_{k=1}^{\infty}\left(  \int_{\mathbb{T}^{3}}\Delta\mathcal{L}_{\xi
_{k}}\omega\cdot\Delta\omega dx\right)  dB_{t}^{k}\nonumber\\
&  +\text{It\^{o} correction}. \label{imp}%
\end{align}
Being $\Delta\omega\left(  t\right)  $ of the form $d\left(  \Delta
\omega\left(  t,x\right)  \right)  =a_{t}\left(  x\right)  dt+\sum
_{k=1}^{\infty}b_{t}^{k}\left(  x\right)  dB_{t}^{k}$, with $b_{t}^{k}\left(
x\right)  =-\Delta\mathcal{L}_{\xi_{k}}\omega\left(  t\right)  $, one has
\[
d\frac{1}{2}\left\vert \Delta\omega\left(  t,x\right)  \right\vert ^{2}%
=\Delta\omega\left(  t,x\right)  \cdot d\left(  \Delta\omega\left(
t,x\right)  \right)  +\frac{1}{2}\sum_{k=1}^{\infty}\left\vert b_{t}%
^{k}\left(  x\right)  \right\vert ^{2}dt
\]
hence the It\^{o} correction above is given by (we have to integrate in $dx$
the previous identity)
\[
\text{It\^{o} correction}=\frac{1}{2}\sum_{k=1}^{\infty}\int_{\mathbb{T}^{3}%
}\left\vert \Delta\mathcal{L}_{\xi_{k}}\omega\left(  t\right)  \right\vert
^{2}dxdt.
\]
Let us list the main considerations about the identity \eqref{imp}.

\begin{enumerate}
\item The term $\nu\int_{\mathbb{T}^{3}}\left\vert \Delta^{2}\omega\right\vert
^{2}dx$ will not be used in the estimates, since they have to be independent
of $\nu$; we only use the fact that this term has the right sign.

\item The term
\begin{equation}
\kappa\left(  t\right)  \int_{\mathbb{T}^{3}}\Delta\mathcal{L}_{v}\omega
\cdot\Delta\omega dx \label{classical term}%
\end{equation}
can be estimated by $C\int_{\mathbb{T}^{3}}\left\vert \Delta\omega\left(
t,x\right)  \right\vert ^{2}dx$ exactly as in the deterministic theory. The
computations are given in subsection \ref{subsect estimate classical term} below.

\item The term $\sum_{k=1}^{\infty}\left(  \int_{\mathbb{T}^{3}}%
\Delta\mathcal{L}_{\xi_{k}}\omega\cdot\Delta\omega dx\right)  dB_{t}^{k}$ is a
local martingale. Rigorously, we shall introduce a sequence of stopping times
and then, taking expectation, this term will disappear. Then the stopping
times will be removed by a straightforward limit.

\item The main difficulty comes from the term%
\begin{equation}
\frac{1}{2}\sum_{k=1}^{\infty}\left(  \left\langle \Delta\mathcal{L}_{\xi_{k}%
}^{2}\omega,\Delta\omega\right\rangle +\left\langle \Delta\mathcal{L}_{\xi
_{k}}\omega,\Delta\mathcal{L}_{\xi_{k}}\omega\right\rangle \right)  ,
\label{difficult term}%
\end{equation}
since it includes, as mentioned above in section \ref{subsect difficulty},
various terms which are of \textquotedblleft order 6" and of \textquotedblleft
order 5", where\textquotedblleft order\textquotedblright\ means the global
number of spatial derivatives. These terms cannot be estimated by
$C\int_{\mathbb{T}^{3}}\left\vert \Delta\omega\left(  t,x\right)  \right\vert
^{2}dx$. As it turns out, the terms of \textquotedblleft order 6" cancel each
other: this is straightforward and expected. But a large number of intricate
terms of \textquotedblleft order 5" still remain, which, na\"{\i}vely, may
give the impression that the estimate cannot be closed. On the contrary,
though, they also cancel each other: this is the content of section
\ref{sect basic bounds}, summarised in assumption (\ref{assump}).
\end{enumerate}

\medskip\paragraph{\bf Estimate of the classical term (\ref{classical term}).}

\label{subsect estimate classical term} The following lemma deals with the
control of the classical term (\ref{classical term}).

\begin{lemma}
\label{lemma GN estimate}Given $u\in W_{\sigma}^{3,2},\omega\in W_{\sigma
}^{2,2}$ (not necessarily related by $\operatorname{curl}u=\omega$), one has%
\begin{align*}
\left\vert \int_{\mathbb{T}^{3}}\Delta\mathcal{L}_{u}\omega\cdot\Delta\omega
dx\right\vert  &  \leq C\left\Vert \nabla u\right\Vert _{L^{\infty}}\left\Vert
\omega\right\Vert _{W^{2,2}}^{2}+C\left\Vert \omega\right\Vert _{L^{\infty}%
}\left\Vert \nabla u\right\Vert _{W^{2,2}}\left\Vert \omega\right\Vert
_{W^{2,2}}\\
&  \leq C\left(  \left\Vert \nabla v\right\Vert _{L^{\infty}}+\left\Vert
\omega\right\Vert _{L^{\infty}}\right)  \left\Vert \omega\right\Vert
_{W^{2,2}}^{2}.
\end{align*}

\end{lemma}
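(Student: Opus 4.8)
The plan is to exploit the divergence-free condition $\operatorname{div}u=0$ to write $\mathcal{L}_u\omega=u\cdot\nabla\omega-\omega\cdot\nabla u$ and then to isolate the one genuinely dangerous term by the splitting
\[
\int_{\mathbb{T}^3}\Delta\mathcal{L}_u\omega\cdot\Delta\omega\,dx
=\int_{\mathbb{T}^3}\mathcal{L}_u(\Delta\omega)\cdot\Delta\omega\,dx
+\int_{\mathbb{T}^3}[\Delta,\mathcal{L}_u]\omega\cdot\Delta\omega\,dx .
\]
In the first integral the transport piece $\int(u\cdot\nabla\Delta\omega)\cdot\Delta\omega\,dx=\tfrac12\int u\cdot\nabla|\Delta\omega|^2\,dx=-\tfrac12\int(\operatorname{div}u)\,|\Delta\omega|^2\,dx$ vanishes after integrating by parts, which is exactly the cancellation recalled for the deterministic case in Subsection~\ref{subsect difficulty}; the leftover stretching piece $-\int(\Delta\omega\cdot\nabla u)\cdot\Delta\omega\,dx$ is bounded at once by $\|\nabla u\|_{L^\infty}\|\Delta\omega\|_{L^2}^2\le C\|\nabla u\|_{L^\infty}\|\omega\|_{W^{2,2}}^2$. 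Thus the whole problem is reduced to the commutator, all of whose terms are of lower differential order.

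Next I would compute the commutator explicitly by the Leibniz rule (with free index $i$ understood and repeated indices summed),
\[
[\Delta,\mathcal{L}_u]\omega
=(\Delta u^j)\partial_j\omega
+2(\partial_k u^j)\partial_k\partial_j\omega
-2(\partial_k\omega^j)\partial_k\partial_j u
-\omega^j\partial_j\Delta u ,
\]
the transport term $u\cdot\nabla\Delta\omega$, the stretching term $\Delta\omega\cdot\nabla u$, and the two copies of $(\Delta\omega^j)\partial_j u$ having already cancelled or been removed. Two of the four surviving terms are handled by elementary H\"older estimates: pairing $2(\partial_k u^j)\partial_k\partial_j\omega$ with $\Delta\omega$ gives $C\|\nabla u\|_{L^\infty}\|\nabla^2\omega\|_{L^2}\|\Delta\omega\|_{L^2}\le C\|\nabla u\|_{L^\infty}\|\omega\|_{W^{2,2}}^2$, while pairing $\omega^j\partial_j\Delta u$ with $\Delta\omega$ gives $C\|\omega\|_{L^\infty}\|\nabla\Delta u\|_{L^2}\|\Delta\omega\|_{L^2}\le C\|\omega\|_{L^\infty}\|\nabla u\|_{W^{2,2}}\|\omega\|_{W^{2,2}}$, reproducing precisely the two terms on the right-hand side.

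The main obstacle is the remaining pair $(\Delta u^j)\partial_j\omega$ and $(\partial_k\omega^j)\partial_k\partial_j u$, where the derivatives are split ``three-and-two'' across $u$ and $\omega$. A naive H\"older bound (for instance distributing into $L^2,L^\infty,L^2$ or $L^3,L^6,L^2$) only produces the useless combination $\|\nabla u\|_{W^{2,2}}\|\omega\|_{W^{2,2}}^2$, which is \emph{not} controlled by the target norms. The fix is to apply H\"older with exponents $(4,4,2)$ and invoke the borderline Gagliardo--Nirenberg interpolation inequalities on $\mathbb{T}^3$, namely $\|\nabla f\|_{L^4}\le C\|f\|_{L^\infty}^{1/2}\|f\|_{W^{2,2}}^{1/2}$ and $\|\nabla^2 u\|_{L^4}\le C\|\nabla u\|_{L^\infty}^{1/2}\|\nabla u\|_{W^{2,2}}^{1/2}$. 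Each of the two terms is then dominated by
\[
C\,\|\omega\|_{L^\infty}^{1/2}\|\omega\|_{W^{2,2}}^{3/2}\,\|\nabla u\|_{L^\infty}^{1/2}\|\nabla u\|_{W^{2,2}}^{1/2}
=C\,A^{1/2}B^{1/2},
\qquad
A:=\|\nabla u\|_{L^\infty}\|\omega\|_{W^{2,2}}^2,\quad
B:=\|\omega\|_{L^\infty}\|\nabla u\|_{W^{2,2}}\|\omega\|_{W^{2,2}},
\]
so Young's inequality $A^{1/2}B^{1/2}\le\tfrac12(A+B)$ lands both terms inside the claimed bound. Summing all contributions yields the first inequality of the lemma.

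Finally, the second inequality is the specialization to $u=v$ with $\operatorname{curl}v=\omega$, where the notation switches to $v$: the Biot--Savart estimate \eqref{Biot Savart} gives $\|\nabla v\|_{W^{2,2}}\le C\|v\|_{W^{3,2}}\le C\|\omega\|_{W^{2,2}}$, which absorbs the factor $\|\nabla u\|_{W^{2,2}}$ in the $B$-term and collapses the two contributions into $C\bigl(\|\nabla v\|_{L^\infty}+\|\omega\|_{L^\infty}\bigr)\|\omega\|_{W^{2,2}}^2$. I expect the algebra of the commutator expansion and the precise verification of the two Gagliardo--Nirenberg exponents to be the only delicate points; everything else is routine H\"older and Sobolev bookkeeping.
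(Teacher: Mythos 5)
Your proof is correct and follows essentially the same route as the paper's: the same Leibniz expansion of $\Delta\mathcal{L}_u\omega$ (you merely package it as the commutator $[\Delta,\mathcal{L}_u]$), the same vanishing of the transport term $\int_{\mathbb{T}^3}(u\cdot\nabla\Delta\omega)\cdot\Delta\omega\,dx$ by incompressibility, direct H\"older bounds for the easy terms, and the same Gagliardo--Nirenberg $L^4$ interpolation combined with Young's inequality for the two ``three-and-two'' terms. If anything, your version is slightly cleaner on one point: the paper inserts the Biot--Savart bound $\left\Vert \nabla v\right\Vert _{W^{2,2}}\leq C\left\Vert \omega\right\Vert _{W^{2,2}}$ already inside the interpolation step, whereas you keep $\left\Vert \nabla u\right\Vert _{W^{2,2}}$ general until the final specialization, which is more faithful to the lemma's ``not necessarily related by $\operatorname{curl}$'' hypothesis.
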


\begin{proof}
Since the second inequality is derived from the first and the fact that
$\left\Vert \nabla u\right\Vert _{W^{2,2}}\leq C\left\Vert \omega\right\Vert
_{W^{2,2}}$ we concentrate on the first. We use tools and ideas from the
classical deterministic theory, see for instance \cite{BKM1984,Kato,PL
Lions,MajdaBert}. We have%
\begin{align*}
&  \int_{\mathbb{T}^{3}}\Delta\mathcal{L}_{v}\omega\cdot\Delta\omega dx\\
&  =\int_{\mathbb{T}^{3}}\Delta\left(  v\cdot\nabla\omega\right)  \cdot
\Delta\omega dx+\int_{\mathbb{T}^{3}}\Delta\left(  \omega\cdot\nabla v\right)
\cdot\Delta\omega dx\\
&  =\int_{\mathbb{T}^{3}}\left(  \Delta v\cdot\nabla\omega\right)  \cdot
\Delta\omega dx+\int_{\mathbb{T}^{3}}\left(  v\cdot\nabla\Delta\omega\right)
\cdot\Delta\omega dx+2\int_{\mathbb{T}^{3}}\sum_{\alpha}\left(  \partial
_{\alpha}v\cdot\nabla\partial_{\alpha}\omega\right)  \cdot\Delta\omega dx\\
&  +\int_{\mathbb{T}^{3}}\left(  \Delta\omega\cdot\nabla v\right)  \cdot
\Delta\omega dx+\int_{\mathbb{T}^{3}}\left(  \omega\cdot\nabla\Delta v\right)
\cdot\Delta\omega dx+2\int_{\mathbb{T}^{3}}\sum_{\alpha}\left(  \partial
_{\alpha}\omega\cdot\nabla\partial_{\alpha}v\right)  \cdot\Delta\omega dx.
\end{align*}
The term $\int_{\mathbb{T}^{3}}\left(  v\cdot\nabla\Delta\omega\right)
\cdot\Delta\omega dx$ is equal to zero, being equal to $\frac{1}{2}%
\int_{\mathbb{T}^{3}}v\cdot\nabla\left\vert \Delta\omega\right\vert ^{2}dx$
which is zero after integration by parts and using $\operatorname{div}v=0$.
The terms
\[
2\int_{\mathbb{T}^{3}}\sum_{\alpha}\left(  \partial_{\alpha}v\cdot
\nabla\partial_{\alpha}\omega\right)  \cdot\Delta\omega dx+\int_{\mathbb{T}%
^{3}}\left(  \Delta\omega\cdot\nabla v\right)  \cdot\Delta\omega dx
\]
are immediately estimated by $C\left\Vert \nabla v\right\Vert _{L^{\infty}%
}\left\Vert \omega\right\Vert _{W^{2,2}}^{2}$. The term $\int_{\mathbb{T}^{3}%
}\left(  \omega\cdot\nabla\Delta v\right)  \cdot\Delta\omega dx$ is easily
estimated by $C\left\Vert \omega\right\Vert _{L^{\infty}}\left\Vert
\omega\right\Vert _{W^{2,2}}^{2}$. It remains to undestand the other two
terms. We have%
\begin{align*}
\left\vert \int_{\mathbb{T}^{3}}\left(  \Delta v\cdot\nabla\omega\right)
\cdot\Delta\omega dx\right\vert  &  \leq C\left\Vert \Delta v\cdot\nabla
\omega\right\Vert _{L^{2}}\left\Vert \omega\right\Vert _{W^{2,2}}\\
\left\vert \int_{\mathbb{T}^{3}}\left(  \partial_{\alpha}\omega\cdot
\nabla\partial_{\alpha}v\right)  \cdot\Delta\omega dx\right\vert  &  \leq
C\left\Vert \partial_{\alpha}\omega\cdot\nabla\partial_{\alpha}v\right\Vert
_{L^{2}}\left\Vert \omega\right\Vert _{W^{2,2}}\,.
\end{align*}
Hence, we only need to prove that
\begin{equation}
\left\Vert \partial_{\alpha}\omega\partial_{\beta}\partial_{\gamma
}v\right\Vert _{L^{2}}\leq C\left(  \left\Vert \nabla v\right\Vert
_{L^{\infty}}+\left\Vert \omega\right\Vert _{L^{\infty}}\right)  \left\Vert
\omega\right\Vert _{W^{2,2}} \label{interpolation}%
\end{equation}
for every $\alpha,\beta,\gamma=1,2,3$. Recall the followig particular case of
Gagliardo-Nirenberg interpolation inequality:%
\[
\left\Vert \partial_{\alpha}f\right\Vert _{L^{4}}^{2}\leq C\left\Vert
f\right\Vert _{\infty}\left\Vert f\right\Vert _{W^{2,2}}\,,
\]
which implies%
\[
\left\Vert \partial_{\alpha}\omega\partial_{\beta}\partial_{\gamma
}v\right\Vert ^{2}\leq C\left\Vert \omega\right\Vert _{\infty}\left\Vert
\omega\right\Vert _{W^{2,2}}\left\Vert \nabla v\right\Vert _{\infty}\left\Vert
\nabla v\right\Vert _{W^{2,2}}.
\]
Moreover, due to the relation between $v$ and $\omega$, we also have
$\left\Vert \nabla v\right\Vert _{W^{2,2}}\leq C\left\Vert \omega\right\Vert
_{W^{2,2}}$. Hence,%
\begin{align*}
\left\Vert \partial_{\alpha}\omega\partial_{\beta}\partial_{\gamma
}v\right\Vert  &  \leq C\left\Vert \omega\right\Vert _{\infty}^{1/2}\left\Vert
\omega\right\Vert _{W^{2,2}}^{1/2}\left\Vert \nabla v\right\Vert _{\infty
}^{1/2}\left\Vert \omega\right\Vert _{W^{2,2}}^{1/2}\\
&  \leq C\left\Vert \omega\right\Vert _{\infty}\left\Vert \omega\right\Vert
_{W^{2,2}}+C\left\Vert \nabla v\right\Vert _{\infty}\left\Vert \omega
\right\Vert _{W^{2,2}}%
\end{align*}
and inequality (\ref{interpolation}) has been proved. The proof of the lemma
is complete.
\end{proof}

\subsection{Estimates uniform in time}

\label{sect uniform in time}

{We introduce the following notations }%
\begin{align*}
\alpha_{t}  &  =\int_{\mathbb{R}^{3}}\left\vert \omega\left(  t,x\right)
\right\vert ^{2}dx+\int_{\mathbb{R}^{3}}\left\vert \Delta\omega\left(
t,x\right)  \right\vert ^{2}dx\\
M_{t}  &  =\int_{0}^{t}\sum_{k=1}^{\infty}\left(  \int_{\mathbb{R}^{3}%
}\mathcal{L}_{\xi_{k}}\omega\cdot\omega dx+\int_{\mathbb{R}^{3}}%
\Delta\mathcal{L}_{\xi_{k}}\omega\cdot\Delta\omega dx\right)  dB_{s}^{k}\,.
\end{align*}
Consequently, following from the estimates of the previous section and
assumption (\ref{assump}), we have
\[
\alpha_{t}\leq\alpha_{0}+2M_{t}+C_{R}\int_{0}^{t}\alpha_{s}ds
\]
so
\[
\sup_{s\in\lbrack0,t]}\alpha_{s}\leq e^{C_{R}t}(\alpha_{0}+2\sup_{s\in
\lbrack0,t]}|M_{s}|)
\]%
\begin{equation}
\mathbb{E}[\sup_{s\in\lbrack0,t]}\alpha_{s}^{2}]\leq4e^{C_{R}t}(\alpha_{0}%
^{2}+\mathbb{E}[\sup_{s\in\lbrack0,t]}|M_{s}|^{2}])\,. \label{a1}%
\end{equation}
By the Burkholder-Davis-Gundy inequality (see, e.g., Theorem 3.28, page 166 in \cite{KS}), we have that
\begin{equation}
\mathbb{E}[\sup_{s\in\lbrack0,t]}|M_{s}|^{2}]\leq K_{2}\mathbb{E}[[M]_{t}|],
\label{a2}%
\end{equation}
where $[M]$ is the quadratic variation of the local martingale $M$ and
\[
\lbrack M]_{t}=\sum_{k=1}^{\infty}\int_{0}^{t}\left(  \int_{\mathbb{R}^{3}%
}\mathcal{L}_{\xi_{k}}\omega\cdot\omega dx+\int_{\mathbb{R}^{3}}%
\Delta\mathcal{L}_{\xi_{k}}\omega\cdot\Delta\omega dx\right)  ^{2}ds\,.
\]

\begin{lemma}
\label{lemma 30} Under the assumption (\ref{Hp 3}), there is a constant $C>0$
such that
\[
\lbrack M]_{t}\leq C\alpha_{t}^{2}.
\]

\end{lemma}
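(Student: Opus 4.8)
The plan is to reduce the bound on $[M]_t$ to a pointwise-in-time estimate of the integrand appearing in its quadratic variation. Since
\[
[M]_t=\int_0^t\sum_{k=1}^\infty\left(\langle\mathcal{L}_{\xi_k}\omega,\omega\rangle+\langle\Delta\mathcal{L}_{\xi_k}\omega,\Delta\omega\rangle\right)^2ds,
\]
it suffices to show that, for each fixed $s$, the summand is controlled by $C\alpha_s^2$ with a constant independent of $s$, $\nu$ and $R$, and then to integrate in $s$. Throughout I would use that on the torus the quantity $\alpha_s=\|\omega\|_{L^2}^2+\|\Delta\omega\|_{L^2}^2$ is equivalent to $\|\omega\|_{W^{2,2}}^2$; indeed an elementary integration by parts gives $\|\omega\|_{W^{2,2}}^2=\|\omega\|_{L^2}^2+2\|\nabla\omega\|_{L^2}^2+\|\Delta\omega\|_{L^2}^2\le 2\alpha_s$, so it is enough to bound the summand by $C_k\|\omega\|_{W^{2,2}}^4$ with $\{C_k\}$ summable.

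The per-$k$ estimate is essentially already in hand: inequality (\ref{qv1}), derived in the proof of the Beale--Kato--Majda criterion, states that
\[
\left|\langle\mathcal{L}_{\xi_k}\omega,\omega\rangle+\langle\Delta\mathcal{L}_{\xi_k}\omega,\Delta\omega\rangle\right|\le C\|\xi_k\|_{3,2}\|\omega\|_{2,2}^2.
\]
If one prefers a self-contained derivation, I would split the two inner products. For the first, the transport part $\langle\xi_k\cdot\nabla\omega,\omega\rangle=\tfrac12\int\xi_k\cdot\nabla|\omega|^2\,dx$ vanishes after integrating by parts and using $\operatorname{div}\xi_k=0$, leaving $|\langle\mathcal{L}_{\xi_k}\omega,\omega\rangle|\le\|\nabla\xi_k\|_{L^\infty}\|\omega\|_{L^2}^2$. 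For the second, I would apply Lemma \ref{lemma GN estimate} with $u=\xi_k$ to obtain $|\langle\Delta\mathcal{L}_{\xi_k}\omega,\Delta\omega\rangle|\le C\|\nabla\xi_k\|_{L^\infty}\|\omega\|_{W^{2,2}}^2+C\|\omega\|_{L^\infty}\|\nabla\xi_k\|_{W^{2,2}}\|\omega\|_{W^{2,2}}$, and then absorb the Sobolev factors via $\|\nabla\xi_k\|_{L^\infty}\le C\|\xi_k\|_{W^{3,2}}$, $\|\nabla\xi_k\|_{W^{2,2}}\le\|\xi_k\|_{W^{3,2}}$ and $\|\omega\|_{L^\infty}\le C\|\omega\|_{W^{2,2}}$ (the embedding $W^{2,2}(\mathbb{T}^3)\subset C_b$). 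Either route yields the bound displayed above.

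Squaring and summing then closes the argument: the summand is at most $C\|\xi_k\|_{3,2}^2\|\omega\|_{2,2}^4\le C\|\xi_k\|_{3,2}^2\alpha_s^2$, and summing over $k$ with assumption (\ref{Hp 3}), $\sum_k\|\xi_k\|_{W^{3,2}}^2<\infty$, gives $\sum_k(\cdots)^2\le C\alpha_s^2$ uniformly in $s$, $\nu$ and $R$. Integrating in $s$ then produces the control $[M]_t\le C\int_0^t\alpha_s^2\,ds$, which is precisely the form fed into the Burkholder--Davis--Gundy bound (\ref{a2}) and the subsequent Gr\"onwall step. Since the genuinely delicate interpolation work --- the taming of the ``order $5$'' terms --- has already been carried out inside Lemma \ref{lemma GN estimate}, the only points that require care here are (i) packaging the $\xi_k$-dependence exactly as $\|\xi_k\|_{W^{3,2}}$ so that (\ref{Hp 3}) delivers summability of the constants, and (ii) the elementary equivalence $\alpha_s\simeq\|\omega\|_{W^{2,2}}^2$ used to translate Sobolev norms back into $\alpha_s$.
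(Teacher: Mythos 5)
Your proof is correct and takes essentially the same route as the paper's own: both rest on the cancellation of the top-order transport term $\langle \xi_k\cdot\nabla\Delta\omega,\Delta\omega\rangle$ (in your version packaged inside Lemma \ref{lemma GN estimate}, which is precisely how the paper itself obtains (\ref{qv1}) in the BKM section), leading to the per-$k$ bound $C\|\xi_k\|_{W^{3,2}}\|\omega\|_{W^{2,2}}^2\le C\|\xi_k\|_{W^{3,2}}\alpha_s$, which is then squared and summed using (\ref{Hp 3}). Note only that, exactly like the paper's own proof, what you actually establish is $[M]_t\le C\int_0^t\alpha_s^2\,ds$ rather than the literal statement $[M]_t\le C\alpha_t^2$; this integral form is what is used in (\ref{a3}), so nothing is lost.
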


\begin{proof}
Since $\mathcal{L}_{\xi_{k}}\omega=\xi_{k}\cdot\nabla\omega-\omega\cdot
\nabla\xi_{k}$, we have%
\[
\Delta\mathcal{L}_{\xi_{k}}=\xi_{k}\cdot\nabla\Delta\omega+R_{k}\omega
\]
where $R_{k}\omega$ contains several terms, each one with at most second
derivatives of $\omega$. Since $\int_{\mathbb{R}^{3}}\xi_{k}\cdot\nabla
\Delta\omega\Delta\omega dx=0$, we deduce
\[
\left\vert \int_{\mathbb{R}^{3}}\mathcal{L}_{\xi_{k}}\omega\cdot\omega
dx+\int_{\mathbb{R}^{3}}\Delta\mathcal{L}_{\xi_{k}}\omega\cdot\Delta\omega
dx\right\vert \leq C_{k}\alpha_{s}%
\]
for some constant $C_{k}>0$. Hence%
\[
\lbrack M]_{t}=\sum_{k=1}^{\infty}\int_{0}^{t}\left(  \int_{\mathbb{R}^{3}%
}\mathcal{L}_{\xi_{k}}\omega\cdot\omega dx+\int_{\mathbb{R}^{3}}%
\Delta\mathcal{L}_{\xi_{k}}\omega\cdot\Delta\omega dx\right)  ^{2}ds\leq
\sum_{k=1}^{\infty}C_{k}^{2}\int_{0}^{t}\alpha_{s}^{2}ds.
\]
With a few more computations it is possible to show that%
\[
C_{k}\leq C\left\Vert \xi_{k}\right\Vert _{W^{3,2}}.
\]
Hence we use assumption (\ref{Hp 3}).
\end{proof}

From Lemma \ref{lemma 30}, we deduce%
\begin{equation}
\mathbb{E}[[M]_{t}|]\leq C\int_{0}^{t}\mathbb{E}[\sup_{r\in\lbrack0,s]}%
\alpha_{r}^{2}]dr\, \label{a3}%
\end{equation}
and thus finally from \eqref{a1}, \eqref{a2} and \eqref{a3} and Gr\"{o}nwall's
inequality we obtain
\[
\mathbb{E}[\sup_{s\in\lbrack0,t]}\alpha_{s}^{2}]\leq C \,,
\]
independently of $\epsilon>0$. This proves bound (\ref{bound 1}) and completes
the necessary \textit{a priori} bounds, modulo the estimates of the next section.

\subsection{Bounds on Lie derivatives}

\label{sect basic bounds}

Recall the notation $\mathcal{L}_{\xi_{k}},k=1,...,\infty$ for the (first
order) operators $\mathcal{L}_{\xi_{k}}\omega=\left[  \xi_{k},\omega\right]
,~~k=1,...,\infty$.

\begin{lemma}
Inequality (\ref{striking est 1}) holds for every vector field $f$ of class
$W^{2,2}$.
\end{lemma}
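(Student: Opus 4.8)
The plan is to rewrite the quadratic form on the left of (\ref{striking est 1}) so that the apparent second- and first-order contributions in $f$ are removed \emph{algebraically}, leaving only a zeroth-order expression that is manifestly controlled by $\left\Vert f\right\Vert _{L^{2}}^{2}$. Throughout I write $\xi=\xi_{k}$, drop the subscript, use the Einstein summation convention, and work on the torus so that integrations by parts produce no boundary terms. First I would invoke the definition of the adjoint to note $\left\langle \mathcal{L}_{\xi}^{2}f,f\right\rangle =\left\langle \mathcal{L}_{\xi}f,\mathcal{L}_{\xi}^{\ast}f\right\rangle$, so that the whole left-hand side becomes
\[
Q(f):=\left\langle \mathcal{L}_{\xi}f,\ (\mathcal{L}_{\xi}+\mathcal{L}_{\xi}^{\ast})f\right\rangle .
\]
The gain from this rewriting is that, combining the component formula $(\mathcal{L}_{\xi}f)_{i}=\xi^{j}\partial_{j}f_{i}-f_{j}\partial_{j}\xi_{i}$ with (\ref{def Lambda}) for $\mathcal{L}_{\xi}^{\ast}$, the transport term $\xi^{j}\partial_{j}f_{i}$ cancels in the sum, giving the purely zeroth-order expression
\[
\big((\mathcal{L}_{\xi}+\mathcal{L}_{\xi}^{\ast})f\big)_{i}=-f_{j}\,(\partial_{j}\xi_{i}+\partial_{i}\xi_{j})=-f_{j}S_{ij},\qquad S_{ij}:=\partial_{i}\xi_{j}+\partial_{j}\xi_{i},
\]
with $S_{ij}$ symmetric. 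This is exactly the cancellation of the ``second order terms'' announced before the statement.

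The remaining issue is the single derivative of $f$ still hidden in the factor $\mathcal{L}_{\xi}f$. Expanding $Q(f)=-\int(\mathcal{L}_{\xi}f)_{i}f_{j}S_{ij}\,dx$ and inserting $(\mathcal{L}_{\xi}f)_{i}=\xi^{k}\partial_{k}f_{i}-f_{k}\partial_{k}\xi_{i}$ splits $Q$ as $Q(f)=I+Z_{0}$, where
\[
I:=-\int_{\mathbb{T}^{3}}\xi^{k}(\partial_{k}f_{i})\,f_{j}\,S_{ij}\,dx,\qquad Z_{0}:=\int_{\mathbb{T}^{3}}f_{k}(\partial_{k}\xi_{i})\,f_{j}\,S_{ij}\,dx,
\]
and $Z_{0}$ is already bounded by $C\left\Vert \nabla\xi\right\Vert _{\infty}\left\Vert S\right\Vert _{\infty}\left\Vert f\right\Vert _{L^{2}}^{2}$. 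For $I$ the key step is to integrate by parts in $x_{k}$, using $\operatorname{div}\xi=0$ to kill the $(\partial_{k}\xi^{k})$ term:
\[
I=\int_{\mathbb{T}^{3}}f_{i}\,\xi^{k}(\partial_{k}f_{j})\,S_{ij}\,dx+\int_{\mathbb{T}^{3}}f_{i}\,\xi^{k}f_{j}\,(\partial_{k}S_{ij})\,dx=:J+Z_{1}.
\]
Here $Z_{1}$ is again zeroth order in the derivatives of $f$ and is bounded by $C\left\Vert \xi\right\Vert _{\infty}\left\Vert \nabla S\right\Vert _{\infty}\left\Vert f\right\Vert _{L^{2}}^{2}$. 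The nontrivial cancellation ``among the first order terms'' is then the observation that, relabelling the dummy indices $i\leftrightarrow j$ and using $S_{ij}=S_{ji}$, one gets $J=\int \xi^{k}(\partial_{k}f_{i})f_{j}S_{ij}\,dx=-I$. Hence $I=-I+Z_{1}$, i.e. $I=\tfrac12 Z_{1}$, so the surviving derivative-of-$f$ term is itself half of a zeroth-order quantity.

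Combining, $Q(f)=\tfrac12 Z_{1}+Z_{0}$, and every surviving term is an integral of $|f|^{2}$ (schematically) against a bounded function built from at most second derivatives of $\xi$; therefore
\[
\left\langle \mathcal{L}_{\xi_{k}}^{2}f,f\right\rangle +\left\langle \mathcal{L}_{\xi_{k}}f,\mathcal{L}_{\xi_{k}}f\right\rangle \le C_{k}^{(0)}\left\Vert f\right\Vert _{L^{2}}^{2},
\]
with $C_{k}^{(0)}$ controlled by $\left\Vert \xi_{k}\right\Vert _{\infty}$, $\left\Vert \nabla\xi_{k}\right\Vert _{\infty}$ and $\left\Vert \nabla^{2}\xi_{k}\right\Vert _{\infty}$, all finite since $\xi_{k}\in C^{4}$. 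I do not expect a genuine analytic obstacle here: the whole content is the two algebraic cancellations just described, and each manipulation pairs at most one derivative of $f$ against the smooth field $\xi$. To make the integrations by parts rigorous I would first establish the identity $Q(f)=\tfrac12 Z_{1}+Z_{0}$ for smooth $f$ and then pass to general $f\in W^{2,2}$ by density, noting that $Q$ is continuous on $W^{2,2}$ while the right-hand side is continuous in the weaker $L^{2}$ norm. The only point demanding care is the sign/index bookkeeping in the self-cancellation $J=-I$, which is where the symmetry of $S_{ij}$ and $\operatorname{div}\xi=0$ are both essential.
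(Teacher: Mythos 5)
Your proof is correct and follows essentially the same route as the paper's: both reduce the left-hand side to $\left\langle \mathcal{L}_{\xi_{k}}f,(\mathcal{L}_{\xi_{k}}+\mathcal{L}_{\xi_{k}}^{\ast})f\right\rangle$, use that $\mathcal{L}_{\xi_{k}}+\mathcal{L}_{\xi_{k}}^{\ast}$ is the zero-order symmetric operator $S_{2}$ (the cancellation of second-order terms), and then kill the remaining first-order term by the same factor-of-$\tfrac{1}{2}$ self-pairing argument --- which the paper packages as the operator identities $\mathcal{L}_{\xi_{k}}^{\ast}=-\mathcal{L}_{\xi_{k}}+S_{2}$ and $\mathcal{L}_{\xi_{k}}S_{2}=S_{2}\mathcal{L}_{\xi_{k}}-S_{4}$, and you carry out equivalently as an explicit componentwise integration by parts using $\operatorname{div}\xi_{k}=0$ and the symmetry of $S_{ij}$. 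Your resulting constant $C_{k}^{(0)}$, controlled by $\left\Vert \xi_{k}\right\Vert _{\infty}$, $\left\Vert \nabla\xi_{k}\right\Vert _{\infty}$ and $\left\Vert \nabla^{2}\xi_{k}\right\Vert _{\infty}$, matches the one recorded in Remark \ref{remark summability Ck}.
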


\begin{proof}
\textbf{Step 1}. We have
\begin{align*}
\mathcal{L}_{\xi_{k}}^{\ast}  &  =-\mathcal{L}_{\xi_{k}}+S_{2} ,\\
\mathcal{L}_{\xi_{k}}S_{2}  &  =S_{2}\mathcal{L}_{\xi_{k}}-S_{4},
\end{align*}
where $S_{2}$ and $S_{4}$ are certain zero order operators (see below for a
proof). We have
\begin{align*}
\langle\left[  \xi_{k},f\right]  ,\left[  \xi_{k},f\right]  \rangle
+\langle\left[  \xi_{k},\left[  \xi_{k},f\right]  \right]  ,f\rangle &
=\left\langle \mathcal{L}_{\xi_{k}}f,\mathcal{L}_{\xi_{k}}f\right\rangle
+\left\langle \mathcal{L}_{\xi_{k}}^{2}f,f\right\rangle \\
&  =\left\langle \mathcal{L}_{\xi_{k}}f,\mathcal{L}_{\xi_{k}}f\right\rangle
+\left\langle \mathcal{L}_{\xi_{k}}f,\mathcal{L}_{\xi_{k}}^{\ast
}f\right\rangle \\
&  =\left\langle \mathcal{L}_{\xi_{k}}f,\mathcal{L}_{\xi_{k}}f\right\rangle
-\left\langle \mathcal{L}_{\xi_{k}}f,\mathcal{L}_{\xi_{k}}f\right\rangle
+\left\langle \mathcal{L}_{\xi_{k}}f,S_{2}f\right\rangle \\
&  =\left\langle \mathcal{L}_{\xi_{k}}f,S_{2}f\right\rangle .
\end{align*}
However, since $\left\langle f,S_{2}f^{\prime}\right\rangle =\left\langle
S_{2}f,f^{\prime}\right\rangle $ for any $f,f^{\prime}$ two square integrable
vector fields (see below for a proof)
\begin{align*}
\left\langle \mathcal{L}_{\xi_{k}}f,S_{2}f\right\rangle =\left\langle
f,\mathcal{L}_{\xi_{k}}^{\ast}S_{2}f\right\rangle  &  =-\left\langle
f,\mathcal{L}_{\xi_{k}}S_{2}f\right\rangle +\left\langle f,S_{2}%
^{2}f\right\rangle \\
&  =-\left\langle f,S_{2}\mathcal{L}_{\xi_{k}}f\right\rangle +\left\langle
f,S_{4}f\right\rangle +\left\langle f,S_{2}^{2}f\right\rangle \\
&  =-\left\langle S_{2}f,\mathcal{L}_{\xi_{k}}f\right\rangle +\left\langle
f,S_{4}f\right\rangle +\left\langle f,S_{2}^{2}f\right\rangle
\end{align*}
Hence
\[
\langle\left[  \xi_{k},f\right]  ,\left[  \xi_{k},f\right]  \rangle
+\langle\left[  \xi_{k},\left[  \xi_{k},f\right]  \right]  ,f\rangle
=\left\langle \mathcal{L}_{\xi_{k}}f,S_{2}f\right\rangle ={\frac{1}{2}%
}\left\langle f,(S_{2}^{2}+S_{4})f\right\rangle
\]

\textbf{Step 2}. Now we prove that $\mathcal{L}_{\xi_{k}}^{\ast}%
=-\mathcal{L}_{\xi_{k}}+S_{2}$ and that\textbf{ }$\left\langle f,S_{2}%
f^{\prime}\right\rangle =\left\langle S_{2}f,f^{\prime}\right\rangle $ for any
two square integrable vector fields $f,f^{\prime}$. We also have by
integration by parts and using $\nabla\cdot\xi_{k}=0$ that
\begin{align*}
\left\langle \mathcal{L}_{\xi_{k}}f,f^{\prime}\right\rangle  &  =\sum_{i}%
\int_{\mathbb{R}^{3}}(\mathcal{L}_{\xi_{k}}f)_{i}(x)f_{i}^{\prime}(x)dx^{3}\\
&  =\sum_{i}\sum_{j}\int_{\mathbb{R}^{3}}(\xi_{k}^{j}\partial_{j}f_{i}%
-f_{j}\partial_{j}\xi_{k}^{i})(x)f_{i}^{\prime}(x)dx^{3}\\
&  =\sum_{i}\sum_{j}\left(  \int_{\mathbb{R}^{3}}(-\xi_{k}^{j}\partial
_{j}f_{i}^{\prime})(x)f_{i}(x)dx^{3}-\int_{\mathbb{R}^{3}}(\partial_{j}\xi
_{k}^{i})(x)f_{j}(x)f_{i}^{\prime}(x)dx^{3}\right) \\
&  =-\sum_{i}\sum_{j}\left(  \int_{\mathbb{R}^{3}}(\xi_{k}^{j}\partial
_{j}f_{i}^{\prime}-f_{j}^{\prime}\partial_{j}\xi_{k}^{i})(x)f_{i}%
(x)dx^{3}+\int_{\mathbb{R}^{3}}(\partial_{i}\xi_{k}^{j}+\partial_{j}\xi
_{k}^{i})(x)f_{j}(x)f_{i}^{\prime}(x)dx^{3}\right) \\
&  =-\left\langle f,\mathcal{L}_{\xi_{k}}f^{\prime}\right\rangle +\left\langle
f,S_{2}f^{\prime}\right\rangle ,
\end{align*}
where
\[
\left\langle f,S_{2}f^{\prime}\right\rangle =\left\langle S_{2}f,f^{\prime
}\right\rangle =-\sum_{i}\sum_{j}\int_{\mathbb{R}^{3}}(\partial_{i}\xi_{k}%
^{j}+\partial_{j}\xi_{k}^{i})(x)f_{j}(x)f_{i}^{\prime}(x)dx^{3}.
\]

\textbf{Step 3}. Finally we prove that\textbf{ }$\mathcal{L}_{\xi_{k}}%
S_{2}=S_{2}\mathcal{L}_{\xi_{k}}-S_{4}$. We have $\left(  S_{2}f\right)
_{i}\left(  x\right)  =\sum_{j}a_{ij}\left(  x\right)  f_{j}\left(  x\right)
,$where $a_{ij}\left(  x\right)  =a_{ji}\left(  x\right)  =-\left(
\partial_{i}\xi_{k}^{j}+\partial_{j}\xi_{k}^{i}\right)  \left(  x\right)
.$Then%
\begin{align*}
\left(  \mathcal{L}_{\xi_{k}}S_{2}f\right)  _{i}  &  =\sum_{j}(\xi_{k}%
^{j}\partial_{j}\left(  S_{2}f\right)  _{i}-\left(  S_{2}f\right)
_{j}\partial_{j}\xi_{k}^{i})(x)\\
&  =\sum_{j}\sum_{l}(\xi_{k}^{j}\partial_{j}\left(  a_{il}f_{l}\right)
-a_{jl}f_{l}\partial_{j}\xi_{k}^{i})(x)\\
&  =\sum_{j}\sum_{l}(\xi_{k}^{j}a_{il}\partial_{j}f_{l}+\xi_{k}^{j}%
f_{l}\partial_{j}a_{il}-a_{jl}f_{l}\partial_{j}\xi_{k}^{i})(x)\\
&  =\sum_{j}\sum_{l}(\xi_{k}^{j}a_{il}\partial_{j}f_{l})\left(  x\right)
+\sum_{l}b_{il}f_{l}\left(  x\right)  ,
\end{align*}
where $b_{il}=\sum_{j}(\xi_{k}^{j}\partial_{j}a_{il}-a_{jl}\partial_{j}\xi
_{k}^{i})$. Similarly, we find that%
\begin{align*}
\left(  S_{2}\mathcal{L}_{\xi_{k}}f\right)  _{i}  &  =\sum_{l}a_{il}\left(
x\right)  \left(  \mathcal{L}_{\xi_{k}}f\right)  _{l}\\
&  =\sum_{j}\sum_{l}a_{il}\left(  x\right)  (\xi_{k}^{j}\partial_{j}%
f_{l}-f_{j}\partial_{j}\xi_{k}^{l})(x)\\
&  =\sum_{j}\sum_{l}\left(  \xi_{k}^{j}a_{il}\partial_{j}f_{l}\right)
(x)-\sum_{j}c_{ij}f_{j}\left(  x\right)  ,
\end{align*}
where $c_{ij}=\sum_{l}a_{il}\left(  x\right)  \partial_{j}\xi_{k}^{l}.$ Hence
$\left(  S_{4}f\right)  _{i}\equiv\left(  S_{2}\mathcal{L}_{\xi_{k}}f\right)
_{i}-\left(  \mathcal{L}_{\xi_{k}}S_{2}f\right)  _{i}=-\sum_{l}(b_{il}%
+c_{il})f_{l}$.
\end{proof}

\begin{remark}
\label{remark summability Ck}From this computation one can easily deduce that
\begin{align*}
\left\vert a_{ij}\right\vert _{\infty}  &  \leq2\left\vert \left\vert
\nabla\xi_{k}\right\vert \right\vert _{\infty},\\
\left\vert b_{ij}\right\vert _{\infty}  &  \leq6\left(  \left\vert \left\vert
\xi_{k}\right\vert \right\vert _{\infty}\left\vert \left\vert \xi
_{k}\right\vert \right\vert _{2,\infty}+\left\vert \left\vert \nabla\xi
_{k}\right\vert \right\vert _{\infty}^{2}\right)  ,\\
\left\vert c_{ij}\right\vert _{\infty}  &  \leq6\left\vert \left\vert
\nabla\xi_{k}\right\vert \right\vert _{\infty}^{2},\\
C_{k}^{(0)}  &  =c\left(  \left\vert \left\vert \xi_{k}\right\vert \right\vert
_{\infty}\left\vert \left\vert \Delta\xi_{k}\right\vert \right\vert _{\infty
}+\left\vert \left\vert \nabla\xi_{k}\right\vert \right\vert _{\infty}%
^{2}\right)  ,
\end{align*}
where $c$ is an independent constant ($c=48$). Therefore, the first of
assumptions (\ref{assump}) is fulfilled, provided%
\[
\sum_{k=1}^{\infty}\left(  \left\vert \left\vert \xi_{k}\right\vert
\right\vert _{\infty}\left\vert \left\vert \Delta\xi_{k}\right\vert
\right\vert _{\infty}+\left\vert \left\vert \nabla\xi_{k}\right\vert
\right\vert _{\infty}^{2}\right)  <\infty\,.
\]
The condition for the second assumption in (\ref{assump}) is similar.
\end{remark}

\begin{remark}
A typical example arises when $\xi_{k}$ are multiples of a complete orthonormal
system $\left\{  e_{k}\right\}  $ of $L^{2}$, namely $\xi_{k}=\lambda_{k}%
e_{k}$. In the case of the torus, if $e_{k}$ are associated to sine and cosine
functions, they are equi-bounded. Moreover, if instead of indexing with
$k\in\mathbb{N}$, we use $k\in\mathbb{Z}^{3}$, typically $\left\vert \nabla
e_{k}\right\vert \leq C\left\vert k\right\vert $ and $\left\vert \Delta
e_{k}\right\vert \leq C\left\vert k\right\vert ^{2}$. In such a case, the
previous condition becomes%
\[
\sum_{k\in\mathbb{Z}^{3}}\lambda_{k}^{2}\left\vert k\right\vert ^{2}<\infty
\,,\]
which is a verifiable condition.
\end{remark}

\begin{lemma}
Inequality (\ref{striking est 2}) holds for every vector field $f$ of class
$W^{4,2}$.
\end{lemma}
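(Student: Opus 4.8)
The plan is to mirror the proof of inequality~(\ref{striking est 1}), reducing the \emph{a priori} dangerous high-order contributions to that lemma plus commutator corrections. Writing $L:=\mathcal{L}_{\xi_{k}}$ for brevity, the first step is to commute $\Delta$ past $L$: set
\[
P:=[\Delta,L]=\Delta L-L\Delta,
\]
which a direct computation shows to be a \emph{second order} differential operator whose coefficients are built from derivatives of $\xi_{k}$ up to third order. Then $\Delta Lf=L\Delta f+Pf$ and, iterating, $\Delta L^{2}f=L^{2}(\Delta f)+LPf+PLf$. Substituting these two identities into the left-hand side of~(\ref{striking est 2}) and setting $h:=\Delta f\in W^{2,2}$, I would expand it as
\[
\big(\langle L^{2}h,h\rangle+\langle Lh,Lh\rangle\big)+\langle LPf,h\rangle+\langle PLf,h\rangle+2\langle Lh,Pf\rangle+\langle Pf,Pf\rangle .
\]

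The first bracket is precisely the quantity controlled by~(\ref{striking est 1}) applied to the vector field $h=\Delta f$, hence bounded by $C_{k}^{(0)}\|\Delta f\|_{L^{2}}^{2}\leq C_{k}^{(0)}\|f\|_{W^{2,2}}^{2}$; and since $P$ is second order, $\langle Pf,Pf\rangle=\|Pf\|_{L^{2}}^{2}\leq C\|f\|_{W^{2,2}}^{2}$ directly. The content of the lemma is therefore the three remaining ``order five'' cross terms. Here I would use the adjoint identity $L^{\ast}=-L+S_{2}$ with $S_{2}$ symmetric and of order zero (established in the previous lemma). Pairing $\langle LPf,h\rangle$ with half of $2\langle Lh,Pf\rangle$ gives, after one adjoint move, exactly $\langle Pf,S_{2}h\rangle$, which is of order four; pairing $\langle PLf,h\rangle$ with the other half gives $\langle(PL-LP)f,\Delta f\rangle+\langle\Delta f,S_{2}Pf\rangle$.

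The crucial observation — the analogue of the ``surprising'' first-order cancellation noted after~(\ref{striking est 1}) — is that $PL-LP=[[\Delta,L],L]$ is again only a \emph{second order} operator: the principal (order three) parts of $PL$ and $LP$ share the same scalar principal symbol and cancel. Consequently $\langle(PL-LP)f,\Delta f\rangle$ is of order four as well, and every surviving term is bounded by $C\|f\|_{W^{2,2}}^{2}$, yielding~(\ref{striking est 2}) with a constant $C_{k}^{(2)}$ depending on $\xi_{k}$ through its derivatives up to fourth order (consistent with the standing hypothesis $\xi_{k}\in C^{4}$). Throughout, the zero-divergence condition $\operatorname{div}\xi_{k}=0$ is used to discard the top-order transport terms $\langle\xi_{k}\cdot\nabla(\,\cdot\,),(\,\cdot\,)\rangle$ by integration by parts, exactly as in the identity $\langle v\cdot\nabla\Delta\omega,\Delta\omega\rangle=0$ used earlier.

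The main obstacle is establishing this order-five cancellation rigorously rather than merely at the level of symbol counting: one must track the explicit lower-order coefficients produced by the two commutators $P$ and $[P,L]$ and by the $S_{2}$-corrections, and verify that what remains genuinely lands in order four. A secondary, technical point is that the manipulations require computing $\Delta L^{2}f$, so they are licit only for $f\in W^{4,2}$ (whence the hypothesis of the lemma); to be fully rigorous on solutions of lower regularity one regularizes by mollifiers or Yosida approximations and checks that the commutator errors vanish in the limit, as indicated in Section~\ref{sect uniqueness W22}.
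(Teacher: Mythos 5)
Your argument is correct, and it rests on exactly the two nontrivial facts that drive the paper's proof — that $P=[\Delta,\mathcal{L}_{\xi_{k}}]$ (the paper's $S_{1}$) is a second order operator, and that the double commutator $[P,\mathcal{L}_{\xi_{k}}]$ (the paper's $S_{3}$) is again second order — together with the adjoint identity $\mathcal{L}_{\xi_{k}}^{\ast}=-\mathcal{L}_{\xi_{k}}+S_{2}$ from the previous lemma. The organization, however, is genuinely different. The paper never applies (\ref{striking est 1}) to $\Delta f$: it writes $\Delta\mathcal{L}_{\xi_{k}}^{2}f=(\mathcal{L}_{\xi_{k}}\Delta+S_{1})\mathcal{L}_{\xi_{k}}f$ and re-runs the adjoint manipulations from scratch, arriving at the identity in which the left-hand side of (\ref{striking est 2}) equals $\langle\Delta\mathcal{L}_{\xi_{k}}f,S_{1}f\rangle+\langle S_{1}\mathcal{L}_{\xi_{k}}f,\Delta f\rangle+\langle\Delta\mathcal{L}_{\xi_{k}}f,S_{2}\Delta f\rangle$; each of these a priori order-five terms is then lowered to order four by further adjoint moves, which require one ingredient your proof never uses, namely the commutation relation $\mathcal{L}_{\xi_{k}}S_{2}=S_{2}\mathcal{L}_{\xi_{k}}-S_{4}$ with $S_{4}$ of order zero. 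Your decomposition instead isolates the block $\langle\mathcal{L}_{\xi_{k}}^{2}h,h\rangle+\langle\mathcal{L}_{\xi_{k}}h,\mathcal{L}_{\xi_{k}}h\rangle$ with $h=\Delta f\in W^{2,2}$ and absorbs it by the already proved estimate (\ref{striking est 1}), so the order-six cancellation and half of the order-five cancellation are inherited rather than recomputed; the three remaining cross terms need only one adjoint move each plus the second-order property of $[P,\mathcal{L}_{\xi_{k}}]$. This is more modular and slightly leaner. What the paper's self-contained computation buys in exchange is the explicit structure of the constant $C_{k}^{(2)}$ in terms of norms of $\xi_{k}$ (through the coefficients of $S_{1},\dots,S_{4}$), which is what feeds the summability assumption (\ref{assump}) and Remark \ref{remark summability Ck}. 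Finally, the one step you rightly flag as needing more than symbol counting — that $[P,\mathcal{L}_{\xi_{k}}]$ really is second order — is precisely the paper's computation $(S_{3}f)_{i}=(AC-CA)f_{i}+E_{i}f$, with $A$ and $C$ the scalar principal parts of $S_{1}$ and $\mathcal{L}_{\xi_{k}}$, whose commutator drops one order because the principal symbols are scalar; writing that out closes your proof, with a constant depending on derivatives of $\xi_{k}$ up to fourth order, consistent with the standing hypothesis $\xi_{k}\in C^{4}$.
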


\begin{proof}
Let us define $S_{1}$ to be the following operator $S_{1}f:=\Delta
\mathcal{L}_{\xi_{k}}f-\mathcal{L}_{\xi_{k}}\Delta f$. By a direct
computation, we find that%
\begin{align*}
\left(  S_{1}f\right)  _{i}  &  :=\left(  \Delta\mathcal{L}_{\xi_{k}%
}f-\mathcal{L}_{\xi_{k}}\Delta f\right)  _{i}\\
&  =\sum_{j,l}\partial_{l}^{2}\left(  \xi_{k}^{j}\partial_{j}f_{i}%
-f_{j}\partial_{j}\xi_{k}^{i}\right)  -\left(  \xi_{k}^{j}\partial_{j}%
\partial_{l}^{2}f_{i}-\partial_{l}^{2}f_{j}\partial_{j}\xi_{k}^{i}\right) \\
&  =\sum_{j,l}\partial_{l}^{2}\xi_{k}^{j}\partial_{j}f_{i}+2\partial_{l}%
\xi_{k}^{j}\partial_{l}\partial_{j}f_{i}-f_{j}\partial_{l}^{2}\partial_{j}%
\xi_{k}^{i}-2\partial_{l}f_{j}\partial_{l}\partial_{j}\xi_{k}^{i},\\
&  =Af_{i}+B_{i}f\,.
\end{align*}
Consequently, $S_{1}$ is a second order operator, whose dominant part may be
expressed as
\[
A:=\sum_{j,l}2\partial_{l}\xi_{k}^{j}\partial_{l}\partial_{j}\,,
\]
where $B_{i}$ is a first order operator. Similarly, the computation
\[
\left(  \mathcal{L}_{\xi_{k}}f\right)  _{i}=\sum_{j}\left(  \xi_{k}%
^{j}\partial_{j}f_{i}-f_{j}\partial_{j}\xi_{k}^{i}\right)  = Cf_{i}-D_{i}f
\]
shows that $Cf_{i}-D_{i}f$ is a first order differential operation, whose
dominant part may be expressed as the operator%
\[
C:=\sum_{j}\xi_{k}^{j}\partial_{j}%
\]
and $D_{i}$ is a zero order operator. Let $S_{3}:=S_{1}\mathcal{L}_{\xi_{k}%
}-\mathcal{L}_{\xi_{k}}S_{1}$. Then, one computes
\begin{align*}
\left(  S_{3}f\right)  _{i}  &  =\left(  \left(  S_{1}\mathcal{L}_{\xi_{k}%
}-\mathcal{L}_{\xi_{k}}S_{1}\right)  f\right)  _{i}\\
&  =A\left(  \mathcal{L}_{\xi_{k}}f\right)  _{i}+B_{i}\left(  \mathcal{L}%
_{\xi_{k}}f\right)  -C\left(  S_{1}f\right)  _{i}+D_{i}\left(  S_{1}f\right)
\\
&  =ACf_{i}-AD_{i}f+B_{i}\left(  \mathcal{L}_{\xi_{k}}f\right)  -C\left(
Af_{i}+B_{i}f\right)  +D_{i}\left(  S_{1}f\right) \\
&  =\left(  AC-CA\right)  f_{i}+E_{i}f\,.
\end{align*}
We now note that both $\left(  AC-CA\right)  $ and $E_{i}$ are second order
operators. Consequently,
\begin{align*}
\left\langle \Delta\mathcal{L}_{\xi_{k}}^{2}f,\Delta f\right\rangle  &
=\left\langle \left(  \mathcal{L}_{\xi_{k}}\Delta+S_{1}\right)  \mathcal{L}%
_{\xi_{k}}f,\Delta f\right\rangle \\
&  =\left\langle \Delta\mathcal{L}_{\xi_{k}}f,\mathcal{L}_{\xi_{k}}^{\ast
}\Delta f\right\rangle +\left\langle S_{1}\mathcal{L}_{\xi_{k}}f,\Delta
f\right\rangle \\
&  =-\left\langle \Delta\mathcal{L}_{\xi_{k}}f,\mathcal{L}_{\xi_{k}}\Delta
f\right\rangle +\left\langle \Delta\mathcal{L}_{\xi_{k}}f,S_{2}\Delta
f\right\rangle +\left\langle S_{1}\mathcal{L}_{\xi_{k}}f,\Delta f\right\rangle
\\
&  =-\left\langle \Delta\mathcal{L}_{\xi_{k}}f,\Delta\mathcal{L}_{\xi_{k}%
}f\right\rangle +\left\langle \Delta\mathcal{L}_{\xi_{k}}f,S_{1}f\right\rangle
+\left\langle S_{1}\mathcal{L}_{\xi_{k}}f,\Delta f\right\rangle +\left\langle
\Delta\mathcal{L}_{\xi_{k}}f,S_{2}\Delta f\right\rangle .
\end{align*}
Hence%
\[
\left\langle \Delta\mathcal{L}_{\xi_{k}}^{2}f,\Delta f\right\rangle
+\left\langle \Delta\mathcal{L}_{\xi_{k}}f,\Delta\mathcal{L}_{\xi_{k}%
}f\right\rangle =\left\langle \Delta\mathcal{L}_{\xi_{k}}f,S_{1}f\right\rangle
+\left\langle S_{1}\mathcal{L}_{\xi_{k}}f,\Delta f\right\rangle +\left\langle
\Delta\mathcal{L}_{\xi_{k}}f,S_{2}\Delta f\right\rangle \,.
\]
Observe that
\begin{align}
\left\langle \Delta\mathcal{L}_{\xi_{k}}f,S_{1}f\right\rangle +\left\langle
S_{1}\mathcal{L}_{\xi_{k}}f,\Delta f\right\rangle  &  =\left\langle
\mathcal{L}_{\xi_{k}}\Delta f,S_{1}f\right\rangle +\left\langle S_{1}%
f,S_{1}f\right\rangle +\left\langle S_{1}\mathcal{L}_{\xi_{k}}f,\Delta
f\right\rangle \nonumber\\
&  =\left\langle \Delta f,\mathcal{L}_{\xi_{k}}^{\ast}S_{1}f\right\rangle
+\left\langle S_{1}\mathcal{L}_{\xi_{k}}f,\Delta f\right\rangle +\left\langle
S_{1}f,S_{1}f\right\rangle \nonumber\\
&  =-\left\langle \Delta f,\mathcal{L}_{\xi_{k}}S_{1}f\right\rangle
+\left\langle \Delta f,S_{2}S_{1}f\right\rangle +\left\langle S_{1}%
\mathcal{L}_{\xi_{k}}f,\Delta f\right\rangle +\left\langle S_{1}%
f,S_{1}f\right\rangle \nonumber\\
&  =\left\langle \Delta f,S_{3}f\right\rangle +\left\langle \Delta
f,S_{2}S_{1}f\right\rangle +\left\langle S_{1}f,S_{1}f\right\rangle \,.
\label{p1}%
\end{align}
The last term satisfies
\begin{align*}
\left\langle \Delta\mathcal{L}_{\xi_{k}}f,S_{2}\Delta f\right\rangle  &
=\left\langle \left(  \mathcal{L}_{\xi_{k}}\Delta+S_{1}\right)  f,S_{2}\Delta
f\right\rangle \\
&  =-\left\langle \Delta f,\mathcal{L}_{\xi_{k}}S_{2}\Delta f,\right\rangle
+\left\langle \Delta f,S_{2}^{2}\Delta f\right\rangle +\left\langle
S_{1}f,S_{2}\Delta f\right\rangle \\
&  =-\left\langle \Delta f,S_{2}\mathcal{L}_{\xi_{k}}\Delta f\right\rangle
+\left\langle \Delta f,S_{4}\Delta f\right\rangle +\left\langle \Delta
f,S_{2}^{2}\Delta f\right\rangle +\left\langle S_{1}f,S_{2}\Delta
f\right\rangle \\
&  =-\left\langle S_{2}\Delta f,\mathcal{L}_{\xi_{k}}\Delta f\right\rangle
+\left\langle \Delta f,S_{4}\Delta f\right\rangle +\left\langle \Delta
f,S_{2}^{2}\Delta f\right\rangle +\left\langle S_{1}f,S_{2}\Delta
f\right\rangle ,
\end{align*}
so that%
\begin{equation}
\left\langle S_{2}\Delta f,\Delta\mathcal{L}_{\xi_{k}}f\right\rangle =\frac
{1}{2}\left(  \left\langle S_{4}\Delta f,\Delta f\right\rangle +\left\langle
S_{2}^{2}\Delta f,\Delta f\right\rangle +\left\langle S_{2}\Delta
f,S_{1}f\right\rangle \right)  . \label{p2}%
\end{equation}
Since the terms in both expressions (\ref{p1}) and (\ref{p2}) can be
controlled by $\left\Vert f\right\Vert _{W^{2,2}}^{2}$, it follows that
indeed, there exists $C_{k}^{\left(  2\right)  }=C_{k}^{\left(  2\right)
}\left(  \left\vert \left\vert \xi_{k}\right\vert \right\vert _{2,\infty
}\right)  $ such that
\[
\left\vert \left\langle \Delta\mathcal{L}_{\xi_{k}}^{2}f,\Delta f\right\rangle
+\left\langle \Delta\mathcal{L}_{\xi_{k}}f,\Delta\mathcal{L}_{\xi_{k}%
}f\right\rangle \right\vert \leq C_{k}^{\left(  2\right)  }\left\Vert
f\right\Vert _{W^{2,2}}^{2}\,.
\]

\end{proof}

\subsection*{Acknowledgements}

We thank J. D. Gibbon for raising the crucial question of whether the stochastic Euler vorticity equations treated here would have a BKM theorem. We thank V. Resseguier and E. M\'emin for generously exchanging ideas with us about their own related work. 
We also thank S. Alberverio, A. Arnaudon, A. L. Castro, A. B. Cruzeiro, P. Constantin, F. Gay-Balmaz, G. Iyer, S. Kuksin, T. S. Ratiu, F. X. Vialard and many others who participated in the EPFL 2015 Program on Stochastic Geometric Mechanics and offered their valuable suggestions during the onset of this work. 
Finally, we also thank the anonymous referees for their constructive advice for improvements of the style and content of the original manuscript. 
During this work, DDH was partially supported by the European Research Council Advanced Grant 267382 FCCA and EPSRC Standard Grant EP/N023781/1. DC was partially supported by the EPSRC Standard Grant EP/N023781/1.

\appendix

\section{Derivation of the stochastic Euler equations}

\label{sec-derivation-recap}

\paragraph{\bf Summary.}

$\,$\newline The generalisation from the classic deterministic Reynolds Transport Theorem
(RTT) for momentum to its Stratonovich stochastic version derived in this 
Appendix preserves the
geometric Lie derivative structure of the RTT. Specifically, the Lie derivative structure of
the Stratonovich stochastic RTT for the vector momentum density 
derived here turns out to be the same as the expression
appearing in the stochastic Kelvin circulation theorem derived in \cite{Holm2015}
from a stochastic version of Hamilton's variational principle for ideal fluid flows.
Thus, by combining the Stratonovich stochastic RTT with Newton's 2nd Law for fluid
dynamics, one recovers a known family of stochastic fluid equations. The simplest of
these is the 3D stochastic Euler fluid model, which was introduced in \cite{Holm2015}. This model  
was re-derived via multi-time homogenisation in \cite{CoGoHo2017}, and is 
re-derived once again here from Newton's Law, as it is the main subject of our present investigation.

\subsection{Review of the deterministic case}

\medskip\paragraph{\bf Newton's 2nd Law, Reynolds transport theorem, pullbacks and Lie
derivatives.}

$\,$

The fundamental equations of fluid dynamics derive from Newton's 2nd Law,
\begin{align}
\frac{d\mathbf{M}(t)}{dt} = \frac{d}{dt}\int_{\Omega(t)} \mathbf{m}\,d^{3}x =
\int_{\Omega(t)} \mathbf{F}\,d^{3}x = \boldsymbol{\mathcal{F}}(t) \,,
\label{IN2ndLaw-sketch}%
\end{align}
which sets the rate of change in time $t$ of the total momentum $\mathbf{M}%
(t)$ of a moving volume of fluid $\Omega(t)$ equal to the total
volume-integrated force $\boldsymbol{\mathcal{F}}$ applied on it; thereby
producing an equation whose solution determines the time dependent flow
$\eta_{t}$ governing $\Omega(t)=\eta_{t}\Omega(0)$.

The fluid flows $\eta_{t}$ considered here will be smooth invertible
time-dependent maps with a smooth inverses. Such maps are called
\textit{diffeomorphisms}, and are often simply referred to as \textit{diffeos}. One may regard the map $\eta_{t}$ as a time-dependent curve on the space of diffeos. The corresponding Lagrangian particle path of a fluid parcel is given
by the smooth, invertible, time-dependent map,
\begin{align}
\eta_{t}X=\eta(X,t) \in\mathbb{R}^{3} , \quad
\hbox{for initial reference position}\quad x(X,0)=\eta_{0}X=X .
\label{Lag-paths}%
\end{align}
This subsection deals with the deterministic derivation of the Eulerian ideal
fluid equations. So the map $\eta_{t}$ is deterministic here. The next
subsection will deal with parallel arguments for the stochastic version of
$\eta_{t}$ in equation \eqref{StochProc-intro}, and we will keep the same notation
for the diffeomorphisms in both subsections.

In standard notation from, e.g., Marsden and Hughes \cite{MaHu1994}, we may
write the $i$-th component of the total fluid momentum $M_{i}(t)$ in a
time-dependent domain of $\mathbb{R}^{3}$ denoted $\Omega(t)=\eta_{t}%
\Omega(0)$, as
\begin{align}
M_{i}(t)  &  = \int_{\Omega(t)} \mathbf{m}(x,t)\cdot\mathbf{e}_{i}(x) \,d^{3}x
= \int_{\Omega(t)} {m_{j}}(x,t) {e^{j}}_{i}(x) \,d^{3}x\label{Mom-1}\\
&  = \int_{\Omega(0)} \eta^{*}_{t}\Big(\mathbf{m}(x,t)\cdot\mathbf{e}_{i}(x)
\,d^{3}x \Big) = \int_{\Omega(0)} \eta^{*}_{t}\Big({m_{j}}(x,t) {e^{j}}_{i}(x)
\,d^{3}x \Big) . \label{Mom-2}%
\end{align}
Here the $\mathbf{e}_{i}(x)$ are coordinate basis vectors and the operation
$\eta^{*}_{t}$ denotes \textit{pullback} by the smooth time-dependent map
$\eta_{t}$. That is, the \textit{pullback} operation in the formulas above for
the total momentum ``pulls back'' the map $\eta_{t}$ through the functions in
the integrand. For example, in the fluid momentum density $\mathbf{m}(x,t)$ at
spatial position $x\in\mathbb{R}^{3}$ at time $t$, we have $\eta^{*}%
_{t}\mathbf{m}(x,t) = \mathbf{m}(\eta(X,t),t)$.

\medskip\paragraph{\bf Lie derivative.}

The time derivative of the pullback of $\eta_{t}$ for a scalar function
$\theta(x,t)$ is given by the chain rule as,
\begin{equation}
\frac{d}{dt}\eta_{t}^{*}\theta(x,t) = \frac{d}{dt}\theta(\eta_{t}X,t) =
\partial_{t}\theta(\eta(X,t),t) + \frac{\partial\theta}{\partial\eta^{j}}%
\frac{d\eta^{j}(X,t)}{dt} = \eta_{t}^{*}\left(  \partial_{t}\theta(x,t) +
\frac{\partial\theta}{\partial x^{j}}u^{j}(x,t) \right)  . \label{theta-dot}%
\end{equation}
The Eulerian velocity vector field $u(x,t)= u^{j}(x,t)\partial_{x^{j}}$ in
\eqref{theta-dot} generates the flow $\eta_{t}$ and is tangent to it at the
identity, i.e., at $t=0$. The time-dependent components of this velocity
vector field may be written in terms of the flow $\eta_{t}$ and its pullback
$\eta_{t}^{*}$ in several equivalent notations as, for example,
\begin{equation}
\frac{d\eta^{j}(X,t)}{dt} = u^{j}(\eta(X,t),t) = \eta_{t}^{*}u^{j}(x,t) =
u^{j}(\eta_{t}^{*}x,t) \,,\quad\hbox{or simply}\quad u = \dot{\eta}_{t}%
\,\eta_{t}^{-1} \,. \label{Eul-vel}%
\end{equation}
The calculation \eqref{theta-dot} also defines the \textit{Lie derivative
formula} for the scalar function $\theta$, namely \cite{MaHu1994}
\begin{equation}
\frac{d}{dt}\eta_{t}^{*}\theta(x,t) = \eta_{t}^{*}\big( \partial_{t}%
\theta(x,t) + \mathcal{L} _{u} \theta(x,t) \big) ,
\label{Lie-deriv-formula-PB}%
\end{equation}
where $\mathcal{L} _{u} $ denotes Lie derivative along the time-dependent
vector field $u=u^{j}(x,t)\partial_{x^{j}}$ with vector components
$u^{j}(x,t)$. In this example of a scalar function $\theta$, evaluating
formula \eqref{Lie-deriv-formula-PB} at time $t=0$ gives the standard
definition of Lie derivative of a scalar function $\theta(x)$ by a
time-independent vector field $u=u^{j}(x)\partial_{x^{j}}$, namely,
\begin{equation}
\frac{d}{dt}\Big|_{t=0}\eta_{t}^{*}\theta(x) = \mathcal{L} _{u}(x) \theta(x) =
u^{j}(x)\frac{\partial\theta(x)}{\partial x^{j}} \,.
\label{Lie-deriv-formula-scalar}%
\end{equation}

\begin{remark}
\textrm{To recap, in equations \eqref{Mom-1} and \eqref{Mom-2} for the total
momentum, the Eulerian spatial coordinate $x \in\mathbb{R}^{3}$ is fixed in
space, and the Lagrangian body coordinate $X \in\Omega(t)$ is fixed in the
moving body. The Lagrangian particle paths $\eta_{t}^{*}x=\eta(X,t)=\eta_{t}X
\in\mathbb{R}^{3}$ with $x(X,0)=\eta_{0}X=X$ may be regarded as time-dependent
maps from a reference configuration where points in the fluid are located at
$X$ to their current position $\eta_{t}^{*}x=\eta(X,t)$. Introducing the
pullback operation enables one to transform the integration in \eqref{Mom-1}
over the current fluid domain $\Omega(t)$ with moving boundaries into an
integration over the fixed reference domain $\Omega(0)$ in \eqref{Mom-2}. This
transformation allows the time derivative to be brought inside the integral
sign to act on the pullback of the integrand by the flow map $\eta_{t}$.
Taking the time derivatives inside the integrand then produces Lie derivatives
with respect to the vector field representing the flow velocity. }
\end{remark}

The coordinate basis vectors $\mathbf{e}_{i}(x)$ in \eqref{Mom-1} for the
moving domain and the corresponding basis vectors in the fixed reference
configuration $\mathbf{E}_{i}(X)$ are spatial gradients of the Eulerian and
Lagrangian coordinate lines in their respective domains.
The coordinate basis vectors $\mathbf{e}_{i}$ in the moving frame and
$\mathbf{E}_{i}$ in the fixed reference frame are related to each other by
contraction with the Jacobian matrix of the map $\eta_{t}$; namely,
\cite{MaHu1994}
\begin{equation}
\eta_{t}^{*}{e^{j}}_{i}(x) = \frac{\partial\eta^{j}(X,t)}{\partial X^{A}%
}{E^{A}}_{i}(X) = \left(  \eta_{t}^{*}\frac{\partial x^{j}}{\partial X^{A}%
}\right)  {E^{A}}_{i}(X) =: (\eta_{t}^{*}{J^{j}}_{A}){E^{A}}_{i}(X)\,.
\label{eE-relation}%
\end{equation}
As a consequence of the definition of Eulerian velocity in equation
\eqref{Eul-vel}, the defining relation \eqref{eE-relation} for $\eta_{t}%
^{*}{e^{j}}_{i}(x)$ implies the following evolution equation for the Eulerian
coordinate basis vectors,
\begin{align}%
\begin{split}
\frac{d}{dt}\big(\eta_{t}^{*}{e^{j}}_{i} (x)\big)  &  = \frac{\partial
}{\partial X^{A}}\frac{d\eta^{j}(X,t)}{dt}{E^{A}}_{i}(X)\\
&  = \eta_{t}^{*}\left(  \frac{\partial u^{j} }{\partial x^{k}} \frac{\partial
x^{k}}{\partial X^{A}}\right)  {E^{A}}_{i}(X)\\
&  = \eta_{t}^{*}\left(  \frac{\partial u^{j} }{\partial x^{k}} {J^{k}}%
_{A}\right)  {E^{A}}_{i}(X)\\
&  = \eta_{t}^{*}\left(  \frac{\partial u^{j} }{\partial x^{k}} {e^{k}}_{i}
\right)  \,.
\end{split}
\label{1-form-dyn}%
\end{align}

Likewise, the mass of each volume element will be conserved under the flow,
$\eta_{t}$. In terms of the pullback, this means
\begin{equation}
\eta_{t}^{*}\big(\rho(x,t)d^{3}x\big) = \rho(\eta(X,t),t) \det(J)\,d^{3}X =
\rho_{0}(X)\,d^{3}X \,, \label{mass-relation}%
\end{equation}
where the function $\rho(x,t)$ represents the current mass distribution in
Eulerian coordinates in the moving domain, and the function $\rho_{0}(X)$
represents the mass distribution in Lagrangian coordinates in the reference
domain at the initial time, $t=0$. Consequently, the time derivative of the
mass conservation relation \eqref{mass-relation} yields the \textit{continuity
equation} for the Eulerian mass density,
\begin{equation}
\frac{d}{dt}\Big(\eta_{t}^{*}\big(\rho(x,t)d^{3}x\big)\Big) = \eta_{t}%
^{*}\Big(\big(\partial_{t} \rho+ u^{j}\partial_{x^{j}}\rho+ \rho\,
\partial_{x^{j}}u^{j} \big) d^{3}x\Big) = \eta_{t}^{*}\big( (\partial_{t} +
\mathcal{L} _{u})(\rho\,d^{3}x) \big) = 0 \,, \label{continuity-eqn}%
\end{equation}
and again, as expected, the Lie derivative $\mathcal{L} _{u}$ appears. In this
example of a density, evaluating the formula \eqref{continuity-eqn} at time
$t=0$ gives the standard definition of Lie derivative of a density,
$\rho(x)\,d^{3}x$, by a time-independent vector field $u=u^{j}(x)\partial
_{x^{j}}$, namely,
\begin{equation}
\frac{d}{dt}\Big|_{t=0}\eta_{t}^{*}\big(\rho(x)\,d^{3}x\big) = \mathcal{L}
_{u(x)} \big(\rho(x)\,d^{3}x\big) = \mathrm{div}\big(\rho(x)\mathbf{u}(x)
\big)\,d^{3}x \,. \label{Lie-deriv-formula-density}%
\end{equation}

Next, we insert the mass conservation relation \eqref{mass-relation} into
equation \eqref{Mom-2} and introduce the covector $v_{j}(x,t):= m_{j}%
(x,t)/\rho(x,t)$ in order to distinguish between the momentum per unit mass
$v(x,t)$ and the velocity vector field $u(x,t)$ defined in \eqref{Eul-vel} for
the flow $\eta_{t}$, which transports the Lagrangian particles. In terms of
$v$, we may write the total momentum in \eqref{Mom-2} as
\begin{align}
M_{i}(t)  &  = \int_{\Omega(0)} \eta^{*}_{t} \left(  v_{j}(x,t){e^{j}}%
_{i}(x)\, \rho(x,t)\,d^{3}x \right)  \,. \label{Mom-3}%
\end{align}
Introducing the two transformation relations \eqref{eE-relation} for ${e^{j}%
}_{i}(x)$ and \eqref{mass-relation} for $\rho(x,t)\,d^{3}x$, yields
\begin{align}
M_{i}(t)  &  := \int_{\Omega(0)} \eta^{*}_{t}\left(  {v_{j}}(x,t)
\frac{\partial x^{j}}{\partial X^{A}} \right)  {E^{A}}_{i}(X) \rho
_{0}(X)\,d^{3}X\label{Mom-3a}\\
&  = \int_{\Omega(0)} \left(  {v_{j}}(x,t) \frac{\partial x^{j}}{\partial
X^{A}} \right)  \delta\big(x-\eta(X,t)\big) {E^{A}}_{i}(X) \rho_{0}(X)\,d^{3}X
\,, \label{Mom-3b}%
\end{align}
where in the last line we have inserted a delta function $\delta(x-\eta(X,t))$
for convenience in representing the pullback of a factor in the integrand to
the Lagrangian path.

\medskip\paragraph{\bf Newton's 2nd Law for fluids.}

We aim to explicitly write Newton's 2nd Law for fluids, which takes the form
\begin{align}
\frac{dM_{i}(t)}{dt} = \int_{\Omega(t)} \rho^{-1}F_{j} \,{e^{j}}_{i}(x)
\,\rho\,d^{3}x \,, \label{IN2ndLaw}%
\end{align}
for an assumed force density $F_{i} \,{e^{j}}_{i}(x) \,d^{3}x$ in a coordinate
system with basis vectors ${e^{j}}_{i}(x)$. To accomplish this, we of course
must compute the time derivative of the total momentum $M_{i}(t)$ in
\eqref{Mom-3a}. The result for the time derivative $dM_{i}(t)/dt$ is the
following,
\begin{align}
\frac{dM_{i}(t)}{dt}  &  = \int_{\Omega(0)} \eta^{*}_{t}\left(  \Big( \partial
_{t}{v_{j}}(x,t) + \frac{dx^{k}}{dt}v_{j,k} + v_{k} \frac{\partial}{\partial
x^{j}}\frac{dx^{k}}{dt} \Big) \frac{\partial x^{j}}{\partial X^{A}}\right)
{E^{A}}_{i}(X) \rho_{0}(X)\,d^{3}X \,. \label{Mom-4}%
\end{align}
Upon defining $u^{k}:=\frac{dx^{k}}{dt}$ (in a slight abuse of notation) and
using equations \eqref{eE-relation} and \eqref{mass-relation} this calculation
now yields
\begin{align}
\frac{dM_{i}(t)}{dt}  &  = \int_{\Omega(0)} \eta^{*}_{t}\bigg(\rho
(x,t)\Big( \partial_{t}{v_{j}}(x,t) + u^{k}\partial_{x^{k}}v_{j} + v_{k}
\partial_{x^{j}}u^{k} \Big) \,{e^{j}}_{i}(x) \,d^{3}x\bigg)\\
&  = \int_{\Omega(t)} \rho(x,t)\Big( \partial_{t}{v_{j}}(x,t) + u^{k}%
\partial_{x^{k}}v_{j} + v_{k} \partial_{x^{j}}u^{k} \Big) \,{e^{j}}_{i}(x)
\,d^{3}x \,. \label{Mom-5}%
\end{align}
Perhaps not unexpectedly, one may also deduce the Lie-derivative relation
\begin{align}
\frac{dM_{i}(t)}{dt}  &  = \int_{\Omega(t)} \rho(x,t)\Big( \partial_{t}{v_{j}%
}(x,t) + u^{k}\partial_{x^{k}}v_{j} + v_{k} \partial_{x^{j}}u^{k}
\Big) \,{e^{j}}_{i}(x) \,d^{3}x\label{Mom-circ}\\
&  = \int_{\Omega(t)} (\partial_{t} + \mathcal{L} _{u})\Big({v_{j}%
}(x,t)\,{e^{j}}_{i}(x) \,\rho(x,t)\,d^{3}x\Big) \,, \label{Mom-6}%
\end{align}
where, in the last step, we have applied the Lie derivative of the continuity
equation in \eqref{continuity-eqn}. We note that care must be taken in passing
to Euclidean spatial coordinates, in that one must first expand the spatial
derivatives of ${e^{j}}_{i}(x)$, before setting ${e^{j}}_{i}(x)=\partial
_{i}x^{j} = \delta^{j}_{i}$. One may keeping track of these basis vectors by
introducing a 1-form basis. Upon using the continuity equation
\eqref{continuity-eqn}, one may then write Newton's 2nd Law for fluids in
equation \eqref{IN2ndLaw} as a local 1-form expression,
\begin{align}
\big(\partial_{t}{v_{i}}(x,t) + u^{k}\partial_{k}v_{i} + v_{k} \partial
_{i}u^{k} \big)\,dx^{i} = (\partial_{t} + \mathcal{L} _{u})\big( {v_{i}}(x,t)
\,dx^{i} \big) = \rho^{-1}F_{i} \,dx^{i} . \label{IN2ndLaw-1form}%
\end{align}

\begin{remark}
[Distinguishing between $u$ and $v$]\label{u-v Diff} \rm
In formula
\eqref{IN2ndLaw-1form}, two quantities with the dimensions of velocity appear,
denoted as $u$ and $v$. The fluid velocity $u$ is a contravariant vector field
(with spatial component index up) which \emph{transports} fluid properties,
such as the mass density in the continuity equation \eqref{continuity-eqn}. In
contrast, the velocity $v$ is the \emph{transported} momentum per unit mass,
corresponding to a velocity 1-form $v_{i} dx^{i}$ (the circulation integrand
in Kelvin's theorem) and it is covariant (spatial component index down).

In general, these two velocities are different, they have different physical
meanings (velocity versus specific momentum) and they transform differently
under the diffeos. Mathematically, they are dual to each other, in the sense
that insertion (i.e., substitution) of the vector field $u$ into the 1-form
$v$ yields a real number, $u^{k}v_{k}$, where we sum repeated indices over
their range. Only in the case when the kinetic energy is given by the $L^{2}$
metric and the coordinate system is Cartesian with a Euclidean metric can the
components of the two velocities $u$ and $v$ be set equal to each other, as vectors.

And, as luck would have it, this special case occurs for the Euler fluid
equations in $\mathbb{R}^{3}$. Consequently, when we deal with the stochastic
Euler fluid equations in $\mathbb{R}^{3}$ in the later sections of the paper,
our notation will simplify, because we will not need to distinguish between
the two types of velocity $u$ and $v$. That is, in the later sections of the
paper, when stochastic Euler fluid equations are considered in $\mathbb{R}
^{3}$, the components of the velocities $u$ and $v$ will be the denoted by the same $\mathbb{R}^{3}$ vector, which we will choose to be $\mathbf{v}$.

\end{remark}

\medskip\paragraph{\bf Deterministic Kelvin circulation theorem.}

Formula \eqref{Mom-6} is the \textit{Reynolds Transport Theorem} (RTT) for a
momentum density. When set equal to an assumed force density, the RTT produces
Newton's 2nd Law for fluids in equation \eqref{IN2ndLaw-1form}. Further
applying equation \eqref{IN2ndLaw-1form} to the time derivative of the Kelvin
circulation integral $I(t)=\oint_{c(t)} {v_{j}}(x,t) \,dx^{j} $ around a
material loop $c(t)$ moving with Eulerian velocity $u(x,t)$, leads to
\cite{HoMaRa1998}
\begin{align}%
\begin{split}
\frac{dI(t)}{dt}  &  = \frac{d}{dt}\oint_{c(t)} \Big( {v_{j}}(x,t) \,dx^{j}
\Big) = \frac{d}{dt}\oint_{c(0)} \eta^{*}\Big( {v_{j}}(x,t) \,dx^{j} \Big)\\
&  = \int_{c(0)} \eta_{t}^{*}\Big( (\partial_{t} + \mathcal{L} _{u}%
)\big( {v_{j}}(x,t) \,dx^{j} \big)\Big)\\
&  = \oint_{c(0)} \eta_{t}^{*} \Big(\big( \partial_{t}{v_{j}}(x,t) +
u^{k}\partial_{x^{k}}v_{j} + v_{k} \partial_{x^{j}}u^{k} \big)\,dx^{j}\Big)\\
&  = \oint_{c(t)} (\partial_{t} + \mathcal{L} _{u})\Big( {v_{j}}(x,t) \,dx^{j}
\Big)\\
&  = \oint_{c(t)} \rho^{-1}F_{i} \,dx^{i} \,.
\end{split}
\label{RTT-Kel}%
\end{align}
Perhaps not surprisingly, the Lie derivative appears again, and the
line-element stretching term in the deterministic time derivative of the
Kelvin circulation integral in the third line of \eqref{RTT-Kel} corresponds
to the transformation of the coordinate basis vectors in the RTT formula
\eqref{Mom-circ}. Moreover, the last line of \eqref{RTT-Kel} follows directly
from the Newton 2nd Law for fluids in equation \eqref{IN2ndLaw-1form}.

\medskip\paragraph{\bf The deterministic Euler fluid motion equations.}

The simplest case comprises the deterministic Euler fluid motion equations for
incompressible, constant-density flow in Euclidean coordinates on
$\mathbb{R}^{3}$,
\begin{align}
\partial_{t}{u_{i}}(x,t) + u^{k}\partial_{k}u_{i} + u_{k} \partial_{i}u^{k} =
- \partial_{i}p \,, \quad\hbox{with}\quad\partial_{j}u^{j} = 0 \,,
\label{Euler-mot-eqn}%
\end{align}
for which the two velocities are the same and the only force is the gradient
of pressure, $p$.

Upon writing the Euler motion equation \eqref{Euler-mot-eqn} as a 1-form
relation in vector notation,
\begin{align}
(\partial_{t} + \mathcal{L} _{u})(\mathbf{u}\cdot d\mathbf{x}) = - dp\,,
\label{Euler-mot-eqn-1form}%
\end{align}
one easily finds the dynamical equation for the vorticity, $\boldsymbol{\omega
}=\mathrm{curl}\,\mathbf{u}$, by taking the exterior differential of
\eqref{Euler-mot-eqn-1form}, since $\boldsymbol{\omega}\cdot d\boldsymbol{S}=
d(\mathbf{u}\cdot d\mathbf{x})$ and the differential $d$ commutes with the Lie
derivative $\mathcal{L} _{u}$. Namely,
\begin{align}
\partial_{t} \boldsymbol{\omega} + (\mathbf{u}\cdot\nabla) \boldsymbol{\omega}
- (\boldsymbol{\omega}\cdot\nabla) \mathbf{u} = 0 \,.
\label{Euler-mot-eqn-calc}%
\end{align}
In terms of vector fields, this vorticity equation may be expressed
equivalently as
\begin{align}
\partial_{t}\omega+ \big[u\,,\,\omega\big] = 0\,, \label{Euler-mot-eqn-VFs}%
\end{align}
where $[u,\omega]$ is the commutator of vector fields.



\subsection{Stochastic Reynolds Transport Theorem (SRTT) for Fluid Momentum}

For the stochastic counterpart of the previous calculation we replace $u =
\dot{\eta}_{t}\,\eta_{t}^{-1}$ written above in equation \eqref{Eul-vel} with
the \textit{Stratonovich} stochastic vector field
\begin{align}
\mathsf{dy}_{t}^{k} = u^{k}(x,t)dt + \sum_{i} \xi^{k}_{i}(x) \circ dB^{i}_{t}
= d\eta_{t}\,\eta_{t}^{-1} \,, \label{StochVF}%
\end{align}
where the $B_{t}^{i}$ with $i\in\mathbb{N}$ are scalar independent Brownian
motions. This vector field corresponds to the Stratonovich stochastic process
\begin{align}
\mathsf{d}\eta_{t}^{k}(X) = u^{k}(\eta_{t}(X),t)dt + \sum_{i} \xi^{k}_{i}%
(\eta_{t}(X)) \circ dB^{i}_{t} \,, \label{StochProc}%
\end{align}
where $\eta_{t}$ is a temporally stochastic curve on the diffeomorphisms. This means
that the time dependence of $\eta_{t}$ is rough, in that time derivatives do
not exist. However, being a diffeo, its spatial dependence is still smooth.

Consequently, upon following the corresponding steps for the deterministic
case leading to equation \eqref{Mom-5}, the Stratonovich stochastic version of
the deterministic RTT in equation \eqref{Mom-5} becomes
\begin{align}
\mathsf{d}M_{i}(t)  &  = \int_{\Omega(t)} \rho(x,t)\Big( \mathsf{d}{v_{j}%
}(x,t) + \mathsf{dy}_{t}^{k}\partial_{x^{k}}v_{j} + v_{k} \partial_{x^{j}%
}\mathsf{dy}_{t}^{k} \Big) {e^{j}}_{i}(x) \,d^{3}x\,. \label{StochRTT}%
\end{align}
We compare \eqref{StochRTT} with the Lie-derivative relation, cf. equation
\eqref{Mom-6},
\begin{align}
\Big( \mathsf{d}{v_{j}}(x,t) + \mathsf{dy}_{t}^{k}\partial_{x^{k}}v_{j} +
v_{k} \partial_{x^{j}}\mathsf{dy}_{t}^{k} \Big)\,dx^{j} = \big(\mathsf{dy} +
\mathcal{L} _{\mathsf{dy}_{t}})\big({v_{j}}(x,t) \,dx^{j}\big) \,.
\label{Mom-7}%
\end{align}
Here, $\mathcal{L} _{\mathsf{dy}_{t}} $ denotes Lie derivative along the
Stratonovich stochastic vector field $\mathsf{dy}_{t} =\mathsf{dy}_{t}%
^{j}(x,t)\partial_{x^{j}}$ with vector components $\mathsf{dy}_{t}^{j}(x,t)$
introduced in \eqref{StochVF}. We also introduce the stochastic versions of
the auxiliary equations \eqref{Lie-deriv-formula-scalar} for a scalar function
$\theta$ and \eqref{continuity-eqn} for a density $\rho\,d^{3}x$, and we
compare these two formulas with their equivalent stochastic Lie-derivative
relations,
\begin{align}
\mathsf{d}\big(\eta_{t}^{*}\theta(x,t)\big)  &  = \eta_{t}^{*}\big( \mathsf{d}
\theta(x,t) + \mathcal{L} _{\mathsf{dy}_{t}} \theta(x,t)
\big) ,\label{Stoch-Lie-deriv-formula}\\
\mathsf{d}\big(\eta_{t}^{*}(\rho(x,t)d^{3}x)\big)  &  = \eta_{t}%
^{*}\Big(\big(\mathsf{d} \rho+ \partial_{x^{j}}\big(\rho\, \mathsf{dy}_{t}%
^{j}(x,t)\big) \big) d^{3}x\Big) = \eta_{t}^{*}\big( ( \mathsf{d} +
\mathcal{L} _{\mathsf{dy}_{t}})(\rho\,d^{3}x) \big) = 0 \,.
\label{Stoch-continuity-eqn}%
\end{align}

\medskip\paragraph{\bf Stochastic Newton's 2nd Law for fluids.}

The stochastic Newton's 2nd Law for fluids will take the form
\begin{align}
\mathsf{d}M_{i}(t)= \int_{\Omega(t)} \rho^{-1}F_{i} \,{e^{j}}_{i}(x)
\,\rho\,d^{3}x \,, \label{IN2ndLaw-Stoch}%
\end{align}
for an assumed force density $F_{i} \,{e^{j}}_{i}(x) \,d^{3}x$ in a coordinate
system with basis vectors ${e^{j}}_{i}(x)$. Because of the stochastic RTT in
\eqref{StochRTT}, the expression \eqref{Mom-7} in a 1-form basis and the mass
conservation law for the stochastic flow in \eqref{Stoch-continuity-eqn}, one
may write the stochastic Newton's 2nd Law for fluids in equation
\eqref{IN2ndLaw-Stoch} as a 1-form relation,
\begin{align}
\Big( \mathsf{d}{v_{j}}(x,t) + \mathsf{dy}_{t}^{k}\partial_{x^{k}}v_{j} +
v_{k} \partial_{x^{j}}\mathsf{dy}_{t}^{k} \Big)\,dx^{j} = \big(\mathsf{d} +
\mathcal{L} _{\mathsf{dy}_{t}})\big({v_{j}}(x,t) \,dx^{j}\big) = \rho
^{-1}F_{j} \,dx^{j} \,dt\,. \label{IN2ndLaw-Stoch-1form}%
\end{align}

\medskip\paragraph{\bf Stochastic Kelvin circulation theorem.}

The stochastic Newton's 2nd Law for fluids in the 1-form basis in
\eqref{IN2ndLaw-Stoch-1form} introduces the line-element stretching term
previously seen in the stochastic Kelvin circulation theorem in
\cite{Holm2015}.

\begin{proof}
Inserting relations \eqref{StochRTT} and \eqref{Mom-7} for the stochastic RTT
into the Kelvin circulation integral $I(t)=\oint_{c(t)} {v_{j}}(x,t) \,dx^{j}
$ around a material loop $c(t)$ moving with stochastic Eulerian vector field
$\mathsf{dy}_{t}$ in \eqref{StochVF}, leads to the following, cf.
\cite{Holm2015},
\begin{align}%
\begin{split}
\mathsf{d}I(t)  &  = \mathsf{d}\oint_{c(t)} \Big( {v_{j}}(x,t) \,dx^{j}
\Big) = \mathsf{d}\oint_{c(0)} \eta^{*}\Big( {v_{j}}(x,t) \,dx^{j} \Big)\\
&  = \int_{c(0)} \eta_{t}^{*}\Big( (\mathsf{d} + \mathcal{L} _{\mathsf{dy}%
_{t}})\big( {v_{j}}(x,t) \,dx^{j} \big)\Big)\\
&  = \oint_{c(0)} \eta_{t}^{*} \Big(\big( \mathsf{d} {v_{j}}(x,t) +
\mathsf{dy}_{t}^{k}\partial_{x^{k}}v_{j} + v_{k} \partial_{x^{j}}%
\mathsf{dy}_{t}^{k} \big)\,dx^{j}\Big)\\
&  = \int_{c(t)} ( \mathsf{d} + \mathcal{L} _{\mathsf{dy}_{t}})\Big( {v_{j}%
}(x,t) \,dx^{j} \Big) \,.
\end{split}
\label{RTT-Kel-stoch}%
\end{align}
Substituting the stochastic Newton's 2nd Law for fluids in the 1-form basis in
\eqref{IN2ndLaw-Stoch-1form} into the last formula in \eqref{RTT-Kel-stoch}
yields the stochastic Kelvin circulation theorem in the form of
\cite{Holm2015}, namely,
\begin{align}
\mathsf{d}I(t)  &  = \mathsf{d}\oint_{c(t)} {v_{j}}(x,t) \,dx^{j} =
\oint_{c(t)} \rho^{-1}F_{j} \,dx^{j} \,dt\,. \label{Kel-stoch}%
\end{align}

\end{proof}

\begin{remark}
\textrm{As we have seen, the development of stochastic fluid dynamics models
revolves around the choice of the forces appearing in Newton's 2nd Law
\eqref{IN2ndLaw-Stoch} and Kelvin's circulation theorem \eqref{Kel-stoch}. For
examples in stochastic turbulence modelling using a variety of choices of
these forces, see \cite{Me2014,Re2017}, whose approaches are the closest to
the present work that we have been able to identify in the literature. }
\end{remark}

\medskip\paragraph{\bf The stochastic Euler fluid motion equations in three dimensions.}

The simplest 3D case comprises the stochastic Euler fluid motion equations for
incompressible, constant-density flow in Euclidean coordinates on
$\mathbb{R}^{3}$ which was introduced and studied in \cite{Holm2015}. These
equations are given in \eqref{RTT-Kel-stoch} by
\begin{align}
\mathsf{d}{v_{i}}(x,t) + \mathsf{d}y_{t}^{k}\partial_{k}v_{i} + v_{k}
\partial_{i}\mathsf{d}y_{t}^{k} = - \partial_{i}p\,dt \,, \quad
\hbox{with}\quad\partial_{i}(\mathsf{d}y_{t}^{i}) = 0 \,,
\label{Euler-mot-eqn-stoch}%
\end{align}
in which the stochastic transport velocity ($\mathsf{dy}_{t}$) corresponds to
the vector field in \eqref{StochVF}, the only force is the gradient of
pressure, $p$, and the density $\rho$ is taken to be constant.

The transported momentum per unit mass with components $v_{j}$, with
$j=1,2,3,$ appears in the circulation integrand in \eqref{Kel-stoch} as
$v_{j}\,dx^{j} = \mathbf{v}\cdot d\mathbf{x}$. The 3D stochastic Euler motion
equation \eqref{Euler-mot-eqn-stoch} may be written equivalently by using
\eqref{RTT-Kel-stoch} as a 1-form relation
\begin{align}
(\mathsf{d} + \mathcal{L} _{\mathsf{dy}_{t}})(\mathbf{v}\cdot d\mathbf{x}) =
-\,dp\,dt\,, \label{Euler-mot-eqn-1form-stoch}%
\end{align}
where we recall that $(\mathsf{d})$ denotes the stochastic evolution operator,
while $(d)$ denotes the spatial differential. We may derive the stochastic
equation for the vorticity 2-form, defined as
\[
\boldsymbol{\omega}\cdot d\boldsymbol{S} := d(\mathbf{v}\cdot d\mathbf{x}) =
(v_{j,k} - v_{k,j})\,dx^{k}\wedge dx^{j} =: \omega_{jk}\,dx^{k}\wedge dx^{j}
=: \mathrm{curl}\,\mathbf{v} \cdot d\boldsymbol{S} \,,
\]
with $dx^{j}\wedge dx^{k}= - \, dx^{k}\wedge dx^{j}$, by taking the exterior
differential $(d)$ of \eqref{Euler-mot-eqn-1form-stoch} and then invoking the
two properties that (i) the spatial differential $d$ commutes with the Lie
derivative $\mathcal{L} _{\mathsf{dy}_{t}}$ of a differential form and (ii)
$d^{2}=0$, to find%

\begin{align}
0 = (\mathsf{d} + \mathcal{L} _{\mathsf{dy}_{t}}) (\boldsymbol{\omega} \cdot
d\boldsymbol{S}) = \Big(\mathsf{d} \boldsymbol{\omega} - \mathrm{curl}
\,(\mathsf{dy}_{t}\times\boldsymbol{\omega})\Big) \cdot d\boldsymbol{S} \,.
\label{Euler-vort-eqn-2form-stoch}%
\end{align}
In Cartesian coordinates, all of these quantities may treated
as divergence free vectors in $\mathbb{R}^{3}$, that is, $\nabla\cdot\mathbf{v}=0=\nabla\cdot \mathsf{dy}_{t}$. Consequently, equation \eqref{Euler-vort-eqn-2form-stoch} recovers the vector SPDE form of the 3D
stochastic Euler fluid vorticity equation \eqref{eq Euler Strat-intro},
\begin{align}
\mathsf{d} \boldsymbol{\omega} + (\boldsymbol{\mathsf{dy}}_{t}\cdot\nabla)
\boldsymbol{\omega} - (\boldsymbol{\omega}\cdot\nabla) \boldsymbol{\mathsf{dy}%
}_{t} = 0 \,. \label{Euler-vort-eqn-calc-stoch}%
\end{align}
In terms of volume preserving vector fields in $\mathbb{R}^{3}$, this vorticity equation may be expressed equivalently as
\begin{align}
\mathsf{d} \omega+ \big[\mathsf{dy}_{t}\,,\,\omega\big] = 0\,,
\label{Euler-vort-eqn-VFs-stoch}%
\end{align}
where $[\mathsf{dy}_{t}\,,\,\omega]$ is the commutator of vector fields,
$\mathsf{dy}_{t}:=\boldsymbol{\mathsf{dy}}_{t}\cdot\nabla$ and $\omega
:=\boldsymbol{\omega}\cdot\nabla$. Equation \eqref{Euler-vort-eqn-VFs-stoch} for the vector field $\omega$ implies
\begin{align}
\mathsf{d} (\eta_t^*\omega) 
= \eta_t^*(\mathsf{d}\omega + \mathcal{L} _{\mathsf{dy}_{t}}\omega) = 0\,,
\label{Euler-vort-eqn-PB-stoch}%
\end{align}
where $\mathcal{L} _{\mathsf{dy}_{t}}\omega=[\mathsf{dy}_{t}\,,\,\omega]$. In vector components, this implies the pullback relation
\begin{align}
\eta_t^*\big(\omega^j(x,t) \big)
=
\eta_t^*\left(\frac{\partial x^{j}}{\partial X^{A}}\right)
{\omega_0^{A}}(X)\,,
\quad\hbox{or}\quad
\omega^j(\eta(X,t),t) 
=
\left(\frac{\partial \eta^{j}(X,t)}{\partial X^{A}}\right)
{\omega_0^{A}}(X)\,,
\label{Cauchy-vort-soln-stoch}%
\end{align}
where $\omega_0^{A}(X)$ is the $A$-th Cartesian component of the initial vorticity, as a function of the Lagrangian spatial coordinates $X$ of the reference configuration at time $t=0$, and $\eta_t^*$ is the pullback by the stochastic process in \eqref{StochProc-intro}. 
Equation \eqref{Cauchy-vort-soln-stoch} is the stochastic generalization of Cauchy's 1827 solution for the vorticity of the deterministic Euler vorticity equation, in terms of the Jacobian of the Lagrange-to-Euler map. See \cite{FrVi2014} for a historical review of the role of Cauchy's relation in deterministic hydrodynamics. 

Foundational results for other SPDEs for hydrodynamics related to \eqref{Euler-vort-eqn-calc-stoch} can be found in \cite{Fl2011, FlaGat, FlMaNe2014} and references therein.

\end{document}